\documentclass[final, 11pt]{article}

% Flags
\newif\ifpublic
\publicfalse

\newif\ifarxiv
\arxivtrue

\ifpublic
     \usepackage[disable]{todonotes}
\else

	\usepackage[colorinlistoftodos]{todonotes}
\fi

\def\showauthornotes{1}

\def\showdraftbox{1}

% Packages
\usepackage{mathpazo}
\usepackage{setspace}
\onehalfspacing

\usepackage{amsmath,amssymb,amsfonts}
\usepackage{mathtools}
\usepackage{bbold}
\usepackage{fixltx2e}
\usepackage{epsfig}
\usepackage{latexsym,nicefrac,bbm}
\usepackage{xspace}
\usepackage{color,fancybox,graphicx,url,subfigure}
\usepackage{enumitem}
\usepackage{fullpage}
\usepackage{mathabx}
\usepackage{tabularx}
\usepackage{mdframed}
\usepackage{braket}

% Hyperlinks 
\usepackage[bookmarks,colorlinks,breaklinks]{hyperref}  
\hypersetup{linkcolor=blue,citecolor=blue,filecolor=blue,urlcolor=blue} 
\renewcommand{\eqref}[1]{\hyperref[#1]{(\ref*{#1})}}

\numberwithin{equation}{section}

%%macros for Advanced Algorthm Design; Fall 2008, Sanjeev Arora
%% Tex file used for SODA submission
%% Complexity classes

%%Computational problems

%%%%%%%%%%%%%%%%%% Theorem Environments
\newtheorem{theorem}{Theorem}[section]
\newtheorem{conjecture}[theorem]{Conjecture}
\newtheorem{definition}[theorem]{Definition}
\newtheorem{lemma}[theorem]{Lemma}

\newtheorem{proposition}[theorem]{Proposition}
\newtheorem{corollary}[theorem]{Corollary}
\newtheorem{claim}[theorem]{Claim}

%%%%%%%%%%%%%%%%% Proof Environments

\def\FullBox{\hbox{\vrule width 6pt height 6pt depth 0pt}}

\def\qed{\ifmmode\qquad\FullBox\else{\unskip\nobreak\hfil
\penalty50\hskip1em\null\nobreak\hfil\FullBox
\parfillskip=0pt\finalhyphendemerits=0\endgraf}\fi}

\def\qedsketch{\ifmmode\Box\else{\unskip\nobreak\hfil
\penalty50\hskip1em\null\nobreak\hfil$\Box$
\parfillskip=0pt\finalhyphendemerits=0\endgraf}\fi}

\newenvironment{proof}{\begin{trivlist} \item {\bf Proof:~~}}
   {\qed\end{trivlist}}

%% Notation for integers, natural numbers, reals, fractions, sets, cardinalities
%%and so on

\newcommand\nat{\mathbb N}

\newcommand\N{\mathbb N}
\newcommand\R{\mathbb R}

      % boolean alphabet  use in math mode

   % B star use in math mode

\newcommand{\marginlabel}[1]%
{\mbox{}\marginpar{\it{\raggedleft\hspace{0pt}#1}}}
 %cardinality of set S; usage \card{S}
%\newcommand\set[1]{\left\{#1\right\}} %usage \set{1,2,3,,}
\newcommand\poly{\mbox{poly}}  %usage \poly(n)

 %for inner product

% {{{ authornotes }}}
\definecolor{Mygray}{gray}{0.8}

 \ifcsname ifcommentflag\endcsname\else
  \expandafter\let\csname ifcommentflag\expandafter\endcsname
                  \csname iffalse\endcsname
\fi

\ifnum\showauthornotes=1
\else
\fi

\ifnum\showauthornotes=1
\newcommand{\Authornote}[2]{{\sf\small\color{red}{[#1: #2]}}}
\newcommand{\Authoredit}[2]{{\sf\small\color{red}{[#1]}\color{blue}{#2}}}
\newcommand{\Authorcomment}[2]{{\sf \small\color{Mygray}{[#1: #2]}}}
\newcommand{\Authorfnote}[2]{\footnote{\color{red}{#1: #2}}}
\newcommand{\Authorfixme}[1]{\Authornote{#1}{\textbf{??}}}
\newcommand{\Authormarginmark}[1]{\marginpar{\textcolor{red}{\fbox{%\Large
#1:!}}}}
\else
\newcommand{\Authornote}[2]{}
\newcommand{\Authoredit}[2]{}
\newcommand{\Authorcomment}[2]{}
\newcommand{\Authorfnote}[2]{}
\newcommand{\Authorfixme}[1]{}
\newcommand{\Authormarginmark}[1]{}
\fi

%% Metric Spaces

%\newcommand\negtype{\mbox{\sc Neg}}

%% Mathcal shortcuts

\newcommand\calG{\mathcal{G}}

\newcommand\calD{\mathcal{D}}

%%%%%%%%%%%%%%%%%%%%%%%%% Enclosures
             %\inparen{x+y}  is (x+y)
           %\inbrace{x+y}  is {x+y}
             %\insquare{x+y}  is [x+y]
 %\inangle{A}    is <A>

%% Various things to write in small caps

%% Parantheses

%% short-hands for relational simbols

%% probability stuff

%%%
\def\abs#1{\left| #1 \right|}

%% Matrix stuff

%% macros to write pseudo-code

\newlength{\pgmtab}  %  \pgmtab is the width of each tab in the
\setlength{\pgmtab}{1em}  %  program environment

% Theorem-type environments
% \theoremstyle{break} 
% \theoremheaderfont{\scshape}
% \theorembodyfont{\slshape}

% \theoremstyle{plain} 
% \theorembodyfont{\rmfamily} 
% \newtheorem{Ex}[Thm]{Exercise}

% \theorembodyfont{\itshape}

% \newenvironment{proof}{\noindent {\sc Proof:}}{$\Box$ \medskip} 
% Definition of problems
 {
	\begin{enumerate}}{\end{enumerate}}

%formatting commands
\newcounter{lecnum}

%%%%%%%%%%%%%%%%%%%%%%%%%%%%%%%%%%%%%%%%%%%%%%%%%%%%%%%%%%%%%%%%%%%%%%%%%%%
%%%%%%%%%%%%%%%%%%%%%%%%%%%%%%%%%%%%%%%%%%%%%%%%%%%%%%%%%%%%%%%%%%%%%%%%%%%

\newlength{\tpush}
\setlength{\tpush}{2\headheight}
\addtolength{\tpush}{\headsep}

%       Usage: \htitle{title}{datelec}{dateout}

%       Usage: \handout{title}{datelec}{dateout}{scribe}

%%%%%%%%%%%%%%%%%%%%%%%%%%%%%%%%%%%%%%%%%%%%%%%%%%%%%%%%%%%%%%%%%
%%% Commands to include figures

%% PSfigure

% fig
%command to insert figure. usage \fig{name}{h}{caption}
%where name.eps is the postscript file and h is the height in inches
%The figure is can be referred to using \ref{name}
%

% ffigure
% Usage: \ffigure{name of file}{height}{caption}{label}

% ffigureh
% Usage: \ffigureh{name of file}{height}{caption}{label}

% {{{ draftbox }}}
\ifnum\showdraftbox=1

\else

\fi

% Hyperlinks 
%\usepackage[bookmarks,colorlinks,breaklinks]{hyperref}  
%\hypersetup{linkcolor=blue,citecolor=blue,filecolor=blue,urlcolor=blue} 
\newcommand{\lref}[2][]{\hyperref[#2]{#1~\ref*{#2}}}
\renewcommand{\eqref}[1]{\hyperref[#1]{(\ref*{#1})}}

\numberwithin{equation}{section}

%%Importing from macros.tex
%\newcommand\nat{\mathbb N}
%\newcommand\N{\mathbb N}
%\def\abs#1{\left| #1 \right|}
%\newcommand\calG{\mathcal{G}}

%\newcommand{\poly}{\mathop{\text{poly}}}  %in macros.tex

\newcommand{\TSAT}{3\text{-}SAT}
\newcommand{\Var}{{\sf Var}}
 
%%%%%%%%% Paper specific commands
\newcommand{\LC}{{\sc Label-Cover}}
\newcommand{\UG}{{\sc Unique-Games}}

\newcommand{\nae}{\mathsf{NAE}}
\newcommand{\csp}{{\sc CSP}}
\newcommand{\cov}{{\sc Covering}}

\newcommand{\ugc}{{\sc UGC}}

\newcommand{\err}{\mathsf{err}}
\newcommand{\Inf}{\mathsf{Inf}}
\newcommand{\Lab}{\mathsf{Lab}}
\newcommand{\term}{\mathsf{Term}}

\newcommand{\supp}{\mathsf{supp}}
\newcommand{\twoklin}{\mathsf{2k\text{-}LIN}}
\newcommand{\fourlin}{\mathsf{4\text{-}LIN}}

\newcommand{\calY}{\mathcal{Y}}
\newcommand{\calX}{\mathcal{X}}
\newcommand{\calU}{\mathcal{U}}

\newcommand{\calE}{\mathcal{E}}

\newcommand{\setq}{[q]}

\newcommand{\calT}{\mathcal{T}}
\newcommand{\calV}{\mathcal{V}}
\newcommand{\calP}{\mathcal{P}}

\DeclareMathOperator*{\E}{\mathbb{E}}

%%% ph's macros
\newcommand{\threelin}{\mathsf{3\text{-}LIN}}
\newcommand{\threecnf}{\mathsf{3\text{-}CNF}}
\newcommand{\cover}[1]{\nu(#1)}
\renewcommand{\epsilon}{\varepsilon}

%%%%%TOC-> Arxiv macros%%%%%%%%%%%%%%

%%% Predefined Abbreviations
\newcommand{\ie}{i.\,e.}

% Generic hyperref construct
\newcommand{\expref}[2]{\texorpdfstring{\hyperref[#2]{#1~\ref{#2}}}{#1~\ref{#2}}} 
\newcommand{\expeqref}[2]{\texorpdfstring{\hyperref[#2]{#1~\eqref{#2}}}{#1~\eqref{#2}}}

\date{}

\title{A characterization of hard-to-cover CSPs}
\author{
  Amey Bhangale\thanks{Department of Computer Science and Engineering, University of California, Riverside, CA, USA. Research supported by the NSF grant CCF-1253886.  This work was done when the author was a graduate student at Rutgers University, USA. Part of the work was done when the author was visiting TIFR. {Email : {\tt ameyrb@ucr.edu}}}
 \and 
Prahladh Harsha\thanks{Tata Institute of Fundamental Research,
   India.  Email : {\tt prahladh@tifr.res.in}}
\and
Girish Varma\thanks{International Institute of Information Technology, Hyderabad, India.  Supported by Google India under the Google India PhD Fellowship Award. This work was done when the author was a graduate student at TIFR, India. Email : {\tt girishrv@gmail.com}}
 }

\begin{document}
\begin{titlepage}
  
  \maketitle
  \thispagestyle{empty}
  \setcounter{page}{0}
\begin{abstract}
	We continue the study of
	the  %% ToC edit1 <-- inserted
	\emph{covering complexity} of constraint
	satisfaction problems (CSPs) initiated by Guruswami, H{\aa}stad and
	Sudan [SIAM J. Comp. 2002] and Dinur and Kol [CCC'13]. %% ToC edit1
	The covering number of a CSP instance
	$\Phi$ is the smallest number of assignments
	to the variables of $\Phi$, such that each constraint of $\Phi$ is
	satisfied by at least one of the assignments. We show the following
	results:
	\begin{enumerate}
		\item Assuming a covering 
		variant of the  %% ToC edit1  <-- inserted
		% unique games conjecture, %% ToC edit1 -->
		Unique Games Conjecture,   %% <--
		introduced by
		Dinur and Kol, we show that for every non-odd predicate $P$ over
		any constant-size alphabet and every integer $K$, it is
		$\mathrm{NP}$-hard to approximate the covering number within a
		factor of $K$. This yields a complete characterization of CSPs
		over constant-size alphabets that are hard to cover.
		
		\item For a large class of predicates that are contained in the
		$\twoklin$ predicate, we show that it is quasi-$\mathrm{NP}$-hard to
		distinguish between instances % which have %% ToC edit1 -->
		with %% <--
		covering number at most % two %% ToC edit1 -->
		$2$   %% <--
		and
		those with covering number %% ToC edit1 <-- line inserted 
		at least $\Omega(\log\log n)$. This
		generalizes and improves the $\fourlin$ covering hardness result of Dinur and Kol.
	\end{enumerate}
\end{abstract}
\end{titlepage}

%%% Introduction Section

\section{Introduction}

One of the central (yet unresolved) questions in inapproximability is
the problem of coloring a (hyper)graph with as few 
colors as possible. A (hyper)graph $G=(V,E)$ is said to be $k$-colorable if there
exists a coloring $c:V\to [k] :=\{0,1,2,\dots,k-1\}$ of the vertices
such that no (hyper)edge of $G$ is monochromatic. The chromatic
number of a (hyper)graph, denoted by $\chi(G)$, is the smallest $k$
such that $G$ is $k$-colorable. It is known that computing $\chi(G)$
to within a multiplicative factor of $n^{1-\epsilon}$ on
% an $n$-sized graph $G$  %% ToC edit1 -->
an $n$-vertex graph $G$   %% <--    AU CHECK1
% for every $\epsilon \in (0,1)$ is $\mathrm{NP}$-hard %% ToC edit1 -->
is %% AU edit2 (AB) <-- added 'is'
$\mathrm{NP}$-hard for every $\epsilon \in (0,1)$~\cite{FeigeK1998, Zuckerman2007}.   %% <--
However, the
complexity of the following problem is not yet completely understood:
given a constant-colorable (hyper)graph, what is the minimum number of
colors required to color the vertices of the graph efficiently such
that every edge is non-monochromatic? The current best approximation
algorithms for this problem %% ToC edit1 <-- references? AU CHECK1
require at least $n^{\Omega(1)}$ colors
\cite{KrivelevichNS2001} %% AU edit2 (AB) <-- added the citation
while the hardness results are far from proving optimality of these
approximation algorithms. (See \expref{Sec.}{sec:recentcolors} for a
discussion on recent work in this area.) 
%% ToC edit1 <-- separated parenthetical clause

The notion of \emph{covering complexity} was introduced by Guruswami,
H{\aa}stad and Sudan~\cite{GuruswamiHS2002} and more formally by Dinur
and Kol~\cite{DinurK2013} to obtain a better understanding of the
complexity of this problem. Let $P$ be a predicate and $\Phi$ an
instance of a constraint satisfaction problem (CSP) over $n$
variables, where each constraint in $\Phi$ is a constraint of type $P$
over the $n$ variables and their negations. We will refer to such CSPs
as $P$-CSPs. The \emph{covering number} of $\Phi$, denoted by
$\cover{\Phi}$, is the smallest number of assignments to the variables
such that each constraint of $\Phi$ is satisfied by at least one of
the assignments, in which case we say that the set of assignments {\em
	covers} the instance $\Phi$. If $c$ assignments cover the instance
$\Phi$, we say that $\Phi$ is $c$-coverable or equivalently that the
set of assignments form a $c$-covering for $\Phi$. The covering number
is a generalization of the notion of chromatic number (to be more
precise, the logarithm of the the chromatic number) to all predicates
in the following sense. Let $G_\Phi$ be the underlying \emph{constraint
	(hyper)graph} of the instance $\Phi$ whose vertices are the
variables of the instance $\Phi$ and (hyper)edges are in one-to-one
correspondence with the constraints of $\Phi$. Suppose $P$ is the
not-all-equal predicate
$\nae$ and the instance $\Phi$ has no negations in any of its
constraints, then the covering number $\cover{\Phi}$ is exactly
$\lceil \log \chi(G_\Phi) \rceil$ where $G_\Phi$ is the underlying
constraint graph of the instance $\Phi$.

Cover-$P$ refers to the problem of finding the covering number
of a given $P$-CSP instance. Finding the exact covering number for
most interesting predicates $P$ is $\mathrm{NP}$-hard. We therefore study the
problem of approximating the covering number. In particular, we would
like to study the complexity of the following problem, denoted by
\cov-$P$-\csp$(c,s)$, for some $1\leq c < s \in \nat$: ``given a
$c$-coverable $P$-CSP instance $\Phi$, find an $s$-covering for
$\Phi$''. Similar problems have been studied for the Max-CSP setting:
``for $0<s <c \leq 1$, ``given a $c$-satisfiable $P$-CSP instance
$\Phi$, find an $s$-satisfying assignment for $\Phi$''. % In fact, we
% have a full characterization of predicates $P$ for which the
% corresponding max-CSP instances are hard to
% maximally satisfy in the above approximate sense, assuming the
% Unique-Games-Conjecture of Khot~\cite{Khot2002}.
%\phnote{I am commenting out the stmt on full char of max-CSPs}
Max-CSPs and Cover-CSPs, as observed by Dinur and Kol~\cite{DinurK2013}, are very
different problems. For instance, if $P$ is an odd predicate, i.e, if
for every assignment $x$, either $x$ or its negation $x+\overline{1}$
satisfies $P$, then any $P$-CSP instance $\Phi$ has a trivial
% two covering,  %% ToC edit1 -->
2-covering   %% <--  AU CHECK1
any assignment and its negation. Thus, $\threelin$ and
$\threecnf$\footnote{$\mathsf{k\text{-}LIN}:\{0,1\}^k\to \{0,1\}$
	refers to the $k$-bit predicate defined by
	$\mathsf{k\text{-}LIN}(x_1,x_2,\dots,x_k) := x_1 \oplus x_2\oplus
	\cdots \oplus x_k$
	while $\mathsf{3\text{-}CNF}:\{0,1\}^3\to\{0,1\}$ refers to the
	$3$-bit predicate defined by $\mathsf{3\text{-}CNF}(x_1,x_2,x_3) :=
	x_1 \vee x_2\vee x_3$}, being odd predicates, are easy to cover
though they are hard predicates in the Max-CSP setting. The main
result of Dinur and Kol is that the $\fourlin$ predicate which accepts odd parities, in contrast
to the above, is hard to cover: for every constant $t \geq 2$,
\cov-$\fourlin$-\csp$(2,t)$ is $\mathrm{NP}$-hard. In fact, their arguments show
that \cov-$\fourlin$-\csp$(2,\Omega(\log\log \log n))$ is
quasi-$\mathrm{NP}$-hard.

Having observed that
% odd predicate based CSPs    %%  ToC edit1 -->
CSPs based on odd predicates  %% <--  AU CHECK1
are easy to cover, Dinur
and Kol proceeded to ask the question ``are all non-odd-predicate CSPs
hard to cover?'' % . %% ToC edit1 <-- no period needed here
In a partial answer to this question, they showed
that assuming a covering variant of the
Unique Games Conjecture, %% ToC edit1 <-- comma added
\cov-\ugc$(c)$, if a predicate $P$ is not odd and there is a balanced
pairwise independent distribution on its support, then for
all constants $k$, \cov-$P$-\csp$(2c,k)$ is $\mathrm{NP}$-hard.
%% ToC edit1 <-- separated parenthetical clause -->
(Here, $c$ is a fixed
constant that depends on the covering variant of the
Unique Games Conjecture \cov-\ugc$(c)$.)   %% <--
See \expref{Sec.}{sec:prelims} for the exact
definition of the covering variant of the Unique Games Conjecture.

\subsection{Our results}

Our first result states that assuming the same covering variant of
the  %% ToC edit1 <-- inserted
Unique Games Conjecture,  %% ToC edit1 -->
\noindent\cov-\ugc$(c)$,  %% <-- commas added before and after formula
of Dinur and
Kol~\cite{DinurK2013}, one can in fact show the covering hardness of
\emph{all} non-odd predicates $P$ over \emph{any} constant-size
alphabet $[q]$. The notion of odd predicate can be extended to any
alphabet in the following natural way: a predicate $P \subseteq
\setq^k$ is odd if for all assignments $x \in \setq^k$, there exists
$a \in \setq$ such that the assignment $x + \overline a$ satisfies $P$.

\begin{theorem}[Covering hardness of non-odd predicates]\label{thm:ug-hard} 
	
	Assuming \cov-\ugc$(c)$, for any constant-size  alphabet $[q]$, any
	constant $k\in \nat$ and any \emph{non-odd} predicate $P \subseteq
	\setq^k$, for all constants $t \in \nat$, the \cov-$P$-\csp$(2cq,t)$
	problem is $\mathrm{NP}$-hard.
	
\end{theorem}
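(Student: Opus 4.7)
The plan is to follow the long-code reduction framework of Dinur--Kol but over the alphabet $[q]$, and to replace the pairwise-independence hypothesis on a balanced distribution inside $P$ with the weaker non-oddness of $P$. Start with a \cov-\ugc$(c)$ instance $\calL$ that is NP-hard to distinguish between the $c$-coverable case and the $K'$-uncoverable case for a sufficiently large constant $K'$. Construct a $P$-CSP instance $\Phi$ as follows: for every vertex $v$ of $\calL$ introduce a $q$-ary folded long code $f_v\from [q]^R\to [q]$ satisfying $f_v(x+\overline{a})=f_v(x)+a$ for all $a\in[q]$, and for every edge $(u,v)$ with projection $\pi$ sample $k$-tuples of inputs from a correlated distribution $\calD_\pi$ on $([q]^R)^k$ whose $k$-wise marginal is supported on $P$. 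The folding convention is tuned to the non-odd witness $x^{*}\in[q]^k$ with $\{x^{*}+\overline{a}:a\in[q]\}\subseteq\overline{P}$, so that constant-like assignments are forced to fail on constraints whose negation pattern encodes $x^{*}$.

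For completeness, given a $c$-covering $\{\sigma_1,\dots,\sigma_c\}$ of $\calL$, I would exhibit $2cq$ assignments built from the dictator functions $f_v^{(i,a,b)}(x):=x_{\sigma_i(v)}+a$ together with two folding conventions indexed by $b\in\{0,1\}$. On every constraint, the labeling $\sigma_i$ that covers the underlying UG edge picks out a coordinate whose value is queried; one of the $q$ additive shifts produces a pattern in the local support of $P$ (and the second folding convention handles the boundary cases arising from projections), so at least one of the $2cq$ assignments satisfies the constraint.

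For soundness, suppose $\{A_1,\dots,A_t\}$ is a $t$-covering of $\Phi$ inducing functions $\{f_v^{(j)}\}_{j\in[t]}$. I would extract from each $f_v^{(j)}$ a constant-sized set of coordinates of bounded influence and let the union over $j$ be the candidate label set at $v$. The core technical step is to show that on a non-negligible fraction of UG edges there is a matching pair of influential coordinates, from which the \cov-\ugc$(c)$ soundness furnishes a $2c$-covering of $\calL$, contradicting the assumption once $K>2c$. The main obstacle here is replacing the pairwise-independence based invariance used by Dinur--Kol: without pairwise independence, low-influence functions do not automatically decorrelate across an edge. To handle this, I would choose $\calD_\pi$ as a small mixture $(1-\eta)\mu_{P}+\eta\mu_{\mathrm{corr}}$ and run a Mossel-style invariance principle for correlated spaces on the $\mu_{\mathrm{corr}}$ part, then use the non-oddness directly in the constant regime: if every $f_u^{(j)},f_v^{(j)}$ at an edge had negligible influences, the joint distribution of their values on a sampled $k$-tuple would be close to shift-invariant, but the forbidden orbit $\{x^{*}+\overline{a}\}\subseteq\overline{P}$ embedded by the folding would then be hit simultaneously by \emph{every} $A_j$ with positive probability, contradicting coverage. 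The $q$-factor inside $2cq$ is then unavoidable, since each dictator admits $q$ distinct additive shifts and all of them are needed to collectively realize the local support of $P$ on the completeness side.
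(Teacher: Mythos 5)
Your soundness sketch has a genuine gap at precisely the point that makes covering hardness difficult: simultaneity across the $t$ assignments. To contradict the existence of a $t$-cover you must exhibit a single constraint on which \emph{all} of $A_1,\dots,A_t$ fail, i.e.\ a positive probability that the failure events occur jointly. Your argument derives, for each fixed $j$, that low influence of $f^{(j)}$ forces the value pattern to be nearly shift-invariant and hence to hit the forbidden orbit $\{x^*+\bar a \mid a\in\setq\}$ with positive probability --- but that is a per-assignment statement, and it cannot contradict coverage: in the YES case of any such construction (including yours and the paper's test $\calT_1$), each of the $2c$ or $2cq$ completeness assignments individually violates a constant fraction of constraints, yet together they cover the instance. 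So you need either to analyze the product $\prod_j \mathbb{1}[A_j \text{ violates } e]$ directly (the GHS/Dinur--Kol route, with its expansion over subsets of $[t]$, which the paper only uses for the quasi-NP-hardness result), or to prove a stronger soundness that handles any number of assignments at once. The paper does the latter: its test's soundness shows the constraint graph has \emph{no independent set} of fractional size $\delta$, and this is a statement about a single Boolean function (the indicator of the independent set), so Mossel's invariance over the connected support $S$ applies with no need to control joint behaviour of several assignments. A $t$-cover of even the $\nae$-CSP on the same constraint graph would give a $q^t$-coloring and hence an independent set of fractional size $q^{-t}$, forcing $t\ge\log_q(1/\delta)$. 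Your proposal contains neither mechanism, and the sentence ``hit simultaneously by every $A_j$ with positive probability'' is exactly the unproved step. A smaller but related confusion: the NO case of \cov-\ugc$(c)$ says every \emph{single} labeling satisfies at most a $\delta$ fraction of edges, so decoding must produce one labeling with noticeable satisfaction fraction; it does not ``furnish a $2c$-covering of $\calL$'' to be contradicted by $K>2c$.

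Structurally you also work harder than necessary on the gadget. The paper never builds a $q$-ary folded long code or a test distribution supported inside a general non-odd $P$: it first translates an arbitrary non-odd $Q$ to $Q'=Q-h\subseteq\nae$ via the literals function, proves the basic UG-hardness (with $2c$-coverable completeness and the independent-set soundness) only for predicates containing a full orbit $\{a+\bar b\mid b\in\setq\}$ inside $\nae$, and then handles a $Q'$ containing no full orbit by passing to its shift-closure $P=\{a+\bar b\mid a\in Q',\,b\in\setq\}$; a $t$-cover of the $P$-instance yields a $qt$-cover of the $Q'$-instance, which is where the factor $q$ in $2cq$ comes from. Because the soundness guarantee is ``no large independent set,'' it transfers through these predicate-level reductions (it even bounds the covering number of the $\nae$-instance on the same graph), whereas your per-predicate mixture distribution $(1-\eta)\mu_P+\eta\mu_{\mathrm{corr}}$ and two-folding completeness would have to be re-justified for every non-odd $P$, and the details you leave implicit (the role of the second folding convention, and why some shift of a dictator lands in the ``local support of $P$'' on every constraint) are exactly the ones the paper's case split is designed to avoid.
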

Since odd predicates $P \subseteq [q]^k$ are trivially coverable with
$q$ assignments, the above theorem, gives a \emph{full characterization of
	hard-to-cover predicates} over any constant-size alphabet (modulo the
covering variant of the Unique Games Conjecture): a predicate is hard
to cover iff it is not odd.

We then ask if we can prove similar covering hardness results under
more standard complexity assumptions (such as $\mathrm{NP} \neq \mathrm{P}$ or the
exponential-time hypothesis (ETH)). Though we are not able to prove
that every non-odd predicate is hard under these assumptions, we give
sufficient conditions on the predicate $P$ for the corresponding
approximate covering problem to be quasi-$\mathrm{NP}$-hard. Recall that
$\twoklin \subseteq \{0,1\}^{2k}$ is the predicate corresponding to
the set of odd parity strings in $\{0,1\}^{2k}$.

\begin{theorem}[$\mathrm{NP}$ hardness of Covering]
	
	\label{thm:np-hard}
	Let $k \geq 2$. Let $P \subseteq \twoklin$ be any $2k$-bit predicate
	such there exist distributions $\mathcal P_0 , \mathcal P_1$
	supported on $\{0,1\}^k$ with the
	following properties:
	
	\begin{enumerate}
		
		\item the marginals of $\mathcal P_0$ and $\mathcal
		P_1$ on all $k$ coordinates % is uniform, %% ToC edit1 -->
		are uniform,  %% <-- AU CHECK1
		
		\item every $a \in \supp(\mathcal P_0)$ has
		even parity and every $b \in \supp(\mathcal P_1)$ has odd parity
		and furthermore, both $a\diamond b, b \diamond a \in P$, where $a\diamond
		b$ denotes the $2k$-bit string formed by the concatenation of
		strings $a$ and $b$.
	\end{enumerate}
	
	Then for all $\epsilon>0, r\gg1$, there is a  reduction
	from 3SAT to  \cov-$P$-\csp\ mapping a 3SAT instance $\Psi$ on
	$n$ variables to
	a \cov-$P$-\csp\ instance $\Phi$ of size $n^{O(r)}2^{2^{O(r)}}$ in 
	time $n^{O(r)}2^{2^{O(r)}}$ such that 
	\begin{itemize}
		\item {YES Case:} If the 3SAT formula $\Psi$ is satisfiable then there
		are $2$ assignments each satisfying $1-\epsilon$ of the constraints of
		$\Phi$, that together cover the instance $\Phi$. 
		\item {NO Case:} if the 3SAT formula $\Psi$ is not satisfiable, then
		the resulting instance $\Phi$ is not $\Omega_k(r) - O_k(\log(1/\epsilon))$
		coverable, even when considered as an instance of the (potentially
		larger) predicate $\twoklin$. 
	\end{itemize}
	
	In particular, unless $\mathrm{NP}$ $\subseteq \mathrm{DTIME}(2^{\poly \log n})$,  \cov-$P$-\csp$(2,\Omega(\log\log n))$ does not have a
	polynomial-time algorithm.
	
	If we assume $\mathrm{P}$ $\neq$ $\mathrm{NP}$ then
	\cov-$P$-\csp$(2,C)$ does not have a polynomial-time algorithm for
	% all constants $C\geq 1$.  %% ToC edit1 -->
	any constant   %% <-- after "not" you need "any"
	% $C\geq 1$. %% ToC edit1 -->
	$C > 2$.   %% <-- question undefined for C \le 2, no? AU CHECK1 
	%% AU edit2 (AB) <-- correct
\end{theorem}

The furthermore clause in the soundness guarantee is in fact a
strengthening for the following reason: if two predicates $P,
Q$ satisfy $P \subseteq Q$ and $\Phi$ is a $c$-coverable $P$-CSP
instance, then the $Q$-CSP instance $\Phi_{P\to Q}$ obtained by taking
the constraint graph of $\Phi$ and replacing each $P$ constraint with
the weaker $Q$ constraint, is also $c$-coverable.

The following is a simple corollary of the above theorem.

\begin{corollary}
	Let $k\geq 2$ be even, $x,y \in \{0,1\}^k$ be distinct strings
	having even and odd parity, respectively, and
	$\overline x,\overline{y}$ denote the complements of $x$ and $y$,
	respectively.  For any predicate $P$ satisfying
	$$\twoklin \supseteq P \supseteq \{ x\diamond y,\ x \diamond \overline y,\
	\overline x \diamond y,\ \overline x \diamond \overline y, \ y\diamond x,\ y
	\diamond \overline x,\ \overline y \diamond x,\ \overline y \diamond
	\overline x \},$$
	unless $\mathrm{NP}$ $\subseteq \mathrm{DTIME}(2^{\poly\log n})$, the problem
	\cov-$P$-\csp$(2,\Omega(\log \log n))$ is not solvable in polynomial
	time.
\end{corollary}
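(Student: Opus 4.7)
The plan is to deduce the corollary by directly applying Theorem~\ref{thm:np-hard} with carefully chosen distributions $\mathcal{P}_0$ and $\mathcal{P}_1$. The key observation is that since $k$ is even, bitwise complementation preserves the parity class: $\overline x$ has even parity (as $x$ does) and $\overline y$ has odd parity (as $y$ does). Guided by this, I would define $\mathcal{P}_0$ to be the uniform distribution on $\{x, \overline x\}$ and $\mathcal{P}_1$ to be the uniform distribution on $\{y, \overline y\}$. Both are well-defined (two-element) distributions since $x \ne \overline x$ and $y \ne \overline y$ always hold for $k \ge 1$.

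Next I would verify, in order, the two hypotheses of Theorem~\ref{thm:np-hard}. For the marginal condition, note that each distribution is supported on a string and its complement with equal probability; hence for every coordinate $i \in [k]$ the marginal on the $i$-th coordinate puts mass $1/2$ on each of $\{0,1\}$, so it is uniform. For the parity/concatenation condition, $\supp(\mathcal{P}_0) = \{x, \overline x\}$ lies entirely in the even-parity stratum of $\{0,1\}^k$ and $\supp(\mathcal{P}_1) = \{y, \overline y\}$ lies entirely in the odd-parity stratum. The set of all concatenations $a \cdot b$ and $b \cdot a$ with $a \in \supp(\mathcal{P}_0)$ and $b \in \supp(\mathcal{P}_1)$ is precisely
\[
\{ x\cdot y,\ x \cdot \overline y,\ \overline x \cdot y,\ \overline x \cdot \overline y,\ y\cdot x,\ y \cdot \overline x,\ \overline y \cdot x,\ \overline y \cdot \overline x \},
\]
which by hypothesis is contained in $P$.

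Together with the assumed inclusion $P \subseteq \twoklin$, this verifies every hypothesis of Theorem~\ref{thm:np-hard}, and the conclusion of the corollary follows immediately. There is no real obstacle in the argument: the only mild subtlety is the use of the parity of $k$ to guarantee that complementation keeps $\overline x$ (resp.\ $\overline y$) in the correct parity class, which is exactly what allows the two-point distributions $\mathcal{P}_0$ and $\mathcal{P}_1$ to have uniform marginals while staying confined to even and odd parity supports respectively.
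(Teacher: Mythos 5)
Your proposal is correct and is exactly the intended argument: the paper treats this as an immediate instantiation of Theorem~\ref{thm:np-hard}, and the natural choice is precisely your two-point distributions $\mathcal{P}_0$ uniform on $\{x,\overline x\}$ and $\mathcal{P}_1$ uniform on $\{y,\overline y\}$, whose uniform marginals and parity-homogeneous supports (using that $k$ is even) you verify correctly. Nothing is missing.
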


This corollary implies the covering hardness of $\fourlin$ predicate
proved by Dinur and Kol ~\cite{DinurK2013} by setting $x:= 00$ and
$y:=01$. With respect to the covering
hardness of $\fourlin$, we note that we can considerably simplify the
proof of Dinur and Kol and in fact obtain a even stronger soundness
guarantee (see Theorem below). The stronger soundness guarantee in the
theorem below states that there are no large ($\geq 1/\poly\log n$
fractional-size) independent sets in the constraint graph and hence,
even the $4\text{-}\nae$-CSP instance\footnote{The $k\text{-}\nae$
	predicate over $k$ bits is given by $k\text{-}\nae=\{0,1\}^k
	\setminus \{\overline{0},\overline{1}\}$.} with the same constraint
graph as the given instance is not coverable using $\Omega(\log \log
n)$ assignments. Both the Dinur--Kol result and the above corollary
only guarantee (in the soundness case) that the $\fourlin$-CSP
instance is not coverable.

\begin{theorem}[Hardness of Covering $\fourlin$]
	
	\label{thm:4lin}
	Assuming that $\mathrm{NP}$ $\not\subseteq \mathrm{DTIME}(2^{\poly\log n})$, for all $
	\epsilon \in (0,1)$, there does not exist a polynomial-time
	algorithm that can distinguish between $\fourlin$-\csp\ instances of
	the following two types:
	
	\begin{itemize}
		
		\item YES Case : There are $2$ assignments such that each of them covers $1-
		\epsilon$ fraction of the constraints, and they
		together cover the entire instance.
		
		\item NO Case :  The largest independent set in the
		constraint graph of the instance is of fractional
		size at most $1/\poly\log n$. 
	\end{itemize}
\end{theorem}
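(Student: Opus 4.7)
The plan is to reduce from a Moshkovitz--Raz label cover instance $\mathcal{L}$ (or a parallel-repeated Raz label cover) with completeness $1$ and soundness $\delta_{\mathrm{LC}} = 1/\poly\log N$, which is quasi-NP-hard under the hypothesis. For each vertex $v$ of $\mathcal{L}$ with label set $[R_v]$, introduce a folded long code --- one Boolean variable $x_{v,\alpha}$ for each $\alpha \in \{0,1\}^{R_v}$ with the identification $x_{v,\alpha} = 1 \oplus x_{v,\alpha \oplus \bar 1}$. The $\fourlin$ constraints of $\Phi$ are generated by a H{\aa}stad-style test: pick an edge $e = (u,v)$ with projection $\pi_e$; draw $\alpha \in \{0,1\}^{R_v}$ and $\beta,\gamma \in \{0,1\}^{R_u}$ uniformly; draw a structured noise $\eta \in \{0,1\}^{R_u}$ supported on a uniformly random unit vector $e_k$ for $k \in [R_u]$; set $\delta := \beta \oplus \gamma \oplus (\alpha \circ \pi_e) \oplus \eta \oplus \bar 1$; and add the constraint $x_{u,\beta} \oplus x_{u,\gamma} \oplus x_{u,\delta} \oplus x_{v,\alpha} = 1$.

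For the YES case, assume (by doubling the label alphabet if necessary) that every satisfying labelling of $\mathcal{L}$ admits two distinct coordinate-wise refinements $\ell^{(1)} \ne \ell^{(2)}$ with $\ell^{(k)}(v) \in [R_v]$ and $\pi_e(\ell^{(k)}(u)) = \ell^{(k)}(v)$ for every edge $e = (u,v)$. The two dictator assignments $\sigma_k(x_{v,\alpha}) := \alpha_{\ell^{(k)}(v)}$ satisfy a given test precisely when $\eta_{\ell^{(k)}(u)} = 0$, giving each $\sigma_k$ a coverage of $1 - 1/R_u \geq 1 - \epsilon$ once $R_u \geq 1/\epsilon$. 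Since $\eta$ is supported on unit vectors and $\ell^{(1)}(u) \ne \ell^{(2)}(u)$, no single $\eta$ can kill both $\sigma_1$ and $\sigma_2$ simultaneously, so $\{\sigma_1,\sigma_2\}$ is a 2-cover of the entire instance.

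For the NO case we obtain the stronger hypergraph-independent-set bound. Suppose $S$ is an independent set of the 4-uniform constraint hypergraph with fractional size $\mu \geq 1/\poly\log n$; define $g_v(\alpha) := \mathbf{1}[x_{v,\alpha} \in S]$, so that $\mu_v = \mathbb{E}_\alpha g_v(\alpha)$ averages to $\mu$. Independence forces
\[
\mathbb{E}_{e,\alpha,\beta,\gamma,\eta}\bigl[g_u(\beta)\,g_u(\gamma)\,g_u(\delta)\,g_v(\alpha)\bigr] = 0.
\]
Fourier-expanding each $g$ in the $\pm 1$ basis and exploiting the linearity of $\delta$ in $\beta,\gamma,\alpha$, only the Fourier terms with $T_1 = T_2 = T_3 =: T \subseteq [R_u]$ and $T_4 = \pi_e(T) \subseteq [R_v]$ survive, weighted by the noise factor $(-1)^{|T|}(1 - 2|T|/R_u)$. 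Separating out the $T = \emptyset$ contribution gives $\mathbb{E}_e[\mu_u^3 \mu_v] = \Theta(\mu^4)$, which must be cancelled by the remaining Fourier mass. A Bourgain-style low-influence / hypercontractive bound (combined with Cauchy--Schwarz over the edges) then implies that a constant fraction of the $g_v$'s must carry non-negligible Fourier mass on a low-weight set $T_v$ of size $O(\log(1/\mu))$. Decoding each such $v$ to a uniformly random element of $T_v$ yields a randomised label cover labelling of value at least $\mu^{O(1)}$, which exceeds $\delta_{\mathrm{LC}} = 1/\poly\log N$ for the chosen $\mu$, contradicting label-cover soundness.

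The main obstacle is synchronising the quantitative parameters --- the noise support size $R_u$, the Fourier-degree threshold in the low-influence bound, and the label-cover soundness $\delta_{\mathrm{LC}}$ --- so that independent sets of relative size $1/\poly\log n$ are ruled out, sharpening the Dinur--Kol bound by one logarithmic factor. The conceptual simplification over \cite{DinurK2013} is to target hypergraph independence directly (rather than the 4-LIN covering number), which both yields the stronger soundness (hardness transfers to $4\text{-}\nae$ on the same constraint graph) and avoids the multi-layered long codes and list-decoding losses of their argument.
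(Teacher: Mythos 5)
Your reduction has a structural flaw that makes the claimed NO case false regardless of how the Fourier analysis is carried out. You place long-code tables on \emph{both} sides of the label cover and generate each $\fourlin$ constraint by querying three points $x_{u,\beta},x_{u,\gamma},x_{u,\delta}$ from the $U$-side table and one point $x_{v,\alpha}$ from the $V$-side table. Then every hyperedge of the constraint hypergraph has at least one endpoint on each side of this bipartition, so the set of all $U$-side variables is an independent set, and so is the set of all $V$-side variables; one of the two always has fractional size at least $1/2$. Hence no choice of label-cover soundness can yield the stated guarantee that the largest independent set has fractional size $1/\poly\log n$. The paper's test avoids exactly this trap: all variables live on the $V$ side ($\calV = V\times\{0,1\}^{2R}$), and a constraint queries two points each from the tables of \emph{two independent random neighbors} $v,w$ of a common $u$, so neither "side" exists to be taken wholesale. (Your expectation-zero identity is also only usable against independent sets that intersect both sides substantially; if $g_v\equiv 0$ the identity is vacuous, which is another symptom of the same problem.)

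There is a second, quantitative gap in the soundness sketch even granting a fixed query structure: your noise $\eta$ is a single random unit vector, so a character $\chi_T$ on $u$'s table is attenuated only by the factor $1-2|T|/R_u$, which is essentially $\pm 1$ both for small $|T|$ and for $|T|$ close to $R_u$. Nothing then forces the cancellation of the $T=\emptyset$ term to come from low-weight Fourier mass, and decoding a random element of a set of size $\Theta(R_u)$ succeeds with probability $\approx 1/R_u$, which is far below the label-cover soundness after parallel repetition, so the "Bourgain-style low-influence" step does not close the argument. The paper instead uses constant-rate per-block noise (probability $2\epsilon$ per coordinate block), correlated across the two halves of a doubled long code so that the two dictator assignments still jointly cover everything in the YES case, kills all characters with large projected support via $(1-\epsilon)^{|\widetilde{\pi}_{uv}(\alpha)|}$ (the term $\Theta_2$), and handles large sets with small projections using the smoothness property of \lref[Theorem]{thm:lc-hard} (the term $\Theta_3$). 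Those two mechanisms are what you would need to replace your unit-vector noise with in order to make the NO-case analysis go through.
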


\subsection{Techniques}

As one would expect, our proofs are very much inspired from the
corresponding proofs in Dinur and Kol~\cite{DinurK2013}. One of the
main complications in the proof of Dinur and Kol~\cite{DinurK2013} (as
also in the earlier work of Guruswami, H{\aa}stad and
Sudan~\cite{GuruswamiHS2002}) was the one of handling several
assignments simultaneously while proving the soundness analysis. For
this purpose, both these works considered the rejection probability
that all the assignments violated the constraint. This resulted in a
very tedious expression for the rejection probability, which made the
rest of the proof fairly involved. Holmerin~\cite{Holmerin2002} observed that
this can be considerably simplified if one instead proved a stronger soundness
guarantee that the largest independent set in the constraint graph is
small (this might not always be doable, but in the cases when it is,
it simplifies the analysis). We list below the further improvements in
the proof that yield our 
%% ToC edit3 -->
\expref{Theorems}{thm:ug-hard}, \ref{thm:np-hard} \expref{and}{thm:4lin}.

\paragraph{Covering-UG hardness for non-odd predicates (\expref{Theorem}{thm:ug-hard}).} 
Having observed that it suffices to prove an independent set analysis,
we observed that only very mild conditions on the predicate are
required to prove covering hardness. In particular, while Dinur and
Kol used the Austrin--Mossel test~\cite{AustrinM2009} which required
pairwise independence, we are able to import the long-code test of
Bansal and Khot~\cite{BansalK2010} which requires only 1-wise
independence. We remark that the Bansal--Khot Test was designed for a
specific predicate (hardness of finding independent sets in almost
$k$-partite $k$-uniform hypergraphs) and had imperfect
completeness. Our improvement comes from observing that their test
requires only 1-wise independence and furthermore that their
completeness condition, though imperfect, can be adapted to give a
2-cover composed of 2 nearly satisfying assignments using the \emph{duplicate label technique} of Dinur--Kol.  This enlarges
the class of non-odd predicates for which one can prove covering
hardness (see \expref{Theorem}{thm:basic-ug-hard}). We then perform a
sequence of reductions from this class of CSP instances to
CSP instances over all non-odd predicates to obtain the final
result. Interestingly, one of the open problems mentioned in the work
of Dinur and Kol~\cite{DinurK2013} was to devise ``direct'' reductions
between covering problems. The reductions we employ, strictly
speaking, are not ``direct'' reductions between covering problems,
since they rely on a stronger soundness guarantee for the source
instance (namely, large covering number even for the $\nae$ instance
on the same constraint graph), which we are able to prove in \expref{Theorem}{thm:basic-ug-hard}.

We give an overview of the dictatorship test gadget which when composed with a covering-UG instance, gives the required covering hardness result. Let $P \subseteq \setq^k$ be a predicate such that there exists $a \in\nae  $ and 
$$\nae \supset P \supseteq \{a+\bar b \mid b \in \setq \},$$
\ie,  $P$ accepts all shifts of a particular assignment $a\in [q]^k$
where $a\in \nae$. We are given a function $f : [q]^{2L}\rightarrow [q]$ and are interested in a $k$-query test, querying at $(x_1, x_2, \ldots, x_k)$ according to some distribution $\calD$, which has the following three properties:
\begin{enumerate}
	\item The accepting criteria of the test is $(f(x_1), f(x_2), \ldots, f(x_k))\in P$
	\item For every $i\in [L]$, the test should accept with probability $1$ if $f$ is either the $i$-th dictator or the $(i+L)$-th dictator.
	\item If $f$ is far from any dictator then the test, even with the predicate $P$ replaced by $\nae$, should reject with significant probability.
\end{enumerate}
We can think of the queries as a $k\times 2L$ matrix $X$ where the rows represent $x_1, x_2, \ldots, x_k$. Here is a distribution $\calD$ for which the test has all the above three properties: It will be a $L$-wise product distribution $\mu^{\otimes L}$, where $\mu$ is a distribution on $([q]^k)^2$ sampled uniformly from the set $S$, 
$$S := \left\{(y, y') \in \setq^k \times\setq^k \ |\  y\in \{a + 
\bar b\mid b\in \setq \} \vee y' \in \{a + \bar b\mid b\in \setq
\}\right\}.$$
For each $i\in [L]$ we sample the $i$-th and $(i+L)$-th columns of
$X$ independently from $\mu$. This completes the description of the distribution $\calD$. It is clear from the construction that the test with accepting criteria (1) satisfies (2) as either the $i$-th column or the $(i+L)$-th column contains an accepting assignment. The argument that (3) also holds for this test crucially depends on the properties of the distribution $\mu$ -- that each query $x_i$ is distributed uniformly in $\{0,1\}^{2L}$ and the distribution $\mu$ is \emph{connected} (see \expref{Definition}{def:connectedness}), when viewed as a probability space $(([q]^2)^k, \mu)$. Using both these properties of the distribution $\mu$, we can then apply the invariance principle to argue that the constrained (hyper)graph formed by the test distribution has a small independent set, which in turns imply (3).

\paragraph{Quasi-$\mathrm{NP}$ hardness result (\expref{Theorem}{thm:np-hard}).}
In this setting, we unfortunately are not able to use the
simplification arising from using the independent set analysis and
have to deal with the issue of several assignments. One of the steps
in the $\fourlin$ proof of Dinur and Kol (as in several others results in this area) involves
showing that a expression of the form $\E_{(X,Y)}\left[F(X)
F(Y)\right]$ is not too negative where $(X,Y)$ is not necessarily a
product distribution but the marginals on the $X$ and $Y$ parts are
identical. Observe that if $(X,Y)$ was a product distribution, then
the above expressions reduces to $\left(
\E_X\left[F(X)\right]\right)^2$, a positive quantity. Thus, the
steps in the proof involve constructing a tailor-made distribution
$(X,Y)$ such that the error in going from the correlated probability
space $(X,Y)$ to the product distribution $(X\otimes Y)$ is not too
much. More precisely, the quantity $$\left| \E_{(X,Y)}\left[F(X)
F(Y)\right] - \E_X\left[F(X)\right] \E_Y\left[F(Y)\right] \right
|,$$ is small. Dinur and Kol used a distribution tailor-made for the
$\fourlin$ predicate and used an invariance principle for correlated
spaces to bound the error while transforming it to a product
distribution. Our improvement comes from observing that one could use
an alternate invariance principle (see \expref{Theorem}{thm:inv-prin})
that works with milder restrictions and hence works for a wider class
of predicates. This invariance principle for correlated spaces
(\expref{Theorem}{thm:inv-prin}) is an adaptation of invariance
principles proved by Wenner~\cite{Wenner2013} and Guruswami and
Lee~\cite{GuruswamiL2018} in similar contexts. The rest of the proof
is similar to the $\fourlin$ covering hardness proof of Dinur and
Kol. 

\paragraph{Covering hardness of $\fourlin$
	(\expref{Theorem}{thm:4lin}).}
The simplified proof of the covering hardness of $\fourlin$ follows
directly from the above observation of using an independent set
analysis instead of working with several assignments. In fact, this
alternate proof eliminates the need for using results about correlated
spaces~\cite{Mossel2010}, which was crucial in the Dinur--Kol
setting. We further note that the quantitative improvement in the covering
hardness ($\Omega(\log\log n)$ over $\Omega(\log\log\log n)$) comes
from using a \LC\ instance with a better smoothness property (see
\expref{Theorem}{thm:lc-hard}).

\subsection{Recent work on approximate coloring}\label{sec:recentcolors}

Besides the work on covering complexity, the works most related to our
paper are the series of works that study the approximate coloring
complexity question, stated in the beginning of the introduction. Saket~\cite{Saket2014} showed that unless
$\mathrm{NP} \subseteq \mathrm{DTIME}(2^{\poly \log n})$, it is not
possible to color a 2-colorable 4-uniform hypergraph with $\poly \log
n$ colors. We remark that recently, with the discovery of the short
code~\cite{BarakGHMRS2015}, there has been a sequence of
works~\cite{DinurG2015,GuruswamiHHSV2017,KhotS2017,Varma2015,Huang2015} which
have considerably improved the status of the approximate coloring
question. In particular,
we know that it is quasi-$\mathrm{NP}$-hard to color a 2-colorable 8-uniform
hypergraph with $2^{(\log n)^c}$ colors for some constant $c \in
(0,1)$. Stated in terms of covering number, this result states that it
is quasi-$\mathrm{NP}$-hard to cover a 1-coverable $8\text{-}\nae$-CSP instance
with $(\log n)^c$ assignments. % Another result in the same vein is
% Saket's result~\cite{Saket2014} which rules out covering a $1$-coverable $4\text{-}\nae$-CSP instance
% with $O(\log \log n)$ assignments in polynomial time assuming
% $\mathrm{NP} \nsubseteq \mathrm{DTIME}(2^{\poly \log n})$.
It is to be
noted that these results 
pertain to the covering complexity of specific predicates (such as
$\nae$) whereas our results are concerned with classifying which
predicates are hard to cover. It would be interesting if
% \expref{Theorem}{thm:np-hard} and \expref{Theorem}{thm:4lin}
%% ToC edit3 -->
\expref{Theorems}{thm:np-hard} \expref{and}{thm:4lin}
can be
improved to obtain similar hardness results (\ie, $\poly \log n$ as
opposed to $\poly\log \log n$). The main bottleneck here seems to be
reducing the uniformity parameter (namely, from 8).

\subsection*{Organization}

The rest of the paper is organized as follows. We start with some
preliminaries of \LC, covering CSPs and Fourier analysis in
\expref{Sec.}{sec:prelims}.
%% ToC edit3 -->
\expref{Theorems}{thm:ug-hard}, \ref{thm:np-hard}
\expref{and}{thm:4lin} are proved in \expref{Sections}{sec:ug-hard},
\ref{sec:np-hard} \expref{and}{sec:4lin}, respectively.

%% Preliminaries

\section{Preliminaries}\label{sec:prelims}

\subsection{Covering CSPs}

We will denote the set 
$\{0,1,\cdots q-1\}$ by $\setq$. For $a\in \setq, \bar a \in [q]^k$ is the element 
with $a$ in all the $k$ coordinates (where $k$ and $q$ will be
implicit from the context).

\begin{definition}[$P$-\csp]
	
	\label{def:csp}
	For a predicate $P \subseteq \setq^k$, an instance of $P$-CSP
	is given by a (hyper)graph $G=(V,E)$, referred to as the {\em
		constraint graph}, and a literals function
	$L:E \rightarrow \setq^k$, where $V$ is a set of variables and
	$E \subseteq V^k$ is a set of constraints. An assignment
	$f:V \rightarrow \setq$ is said to \emph{cover} a constraint
	$e=(v_1,\cdots, v_k) \in E$, if
	$(f(v_1),\cdots,f(v_k))+L(e) \in P$, where addition is
	coordinate-wise modulo $q$. A set of assignments
	$F=\{ f_1,\cdots,f_c\}$ is said to \emph{cover} $(G,L)$, if
	for every $e\in E$, there is some $f_i \in F$ that covers $e$
	and $F$ is said to be a $c$\emph{-covering} for $G$. $G$ is
	said to be $c$\emph{-coverable} if there is a $c$-covering for
	$G$. If $L$ is not specified then it is the constant function
	which maps $E$ to $\bar 0$.
\end{definition}

\begin{definition}[\cov-$P$-\csp$(c,s)$]
	
	\label{def:cov-csp}
	For $P \subseteq \setq^k$ and $c,s \in \nat$, the \cov-$P$-\csp$(c,s)$ problem is, 
	given a $c$-coverable instance  $(G=(V,E),L)$ of $P$-\csp, find an 
	$s$-covering.
\end{definition}

\begin{definition}[Odd]
	
	\label{def:odd}
	A predicate $P \subseteq \setq^k$ is \emph{odd} if $\forall x \in \setq^k,
	\exists a \in \setq, x + \bar a \in P$, where addition is coordinate-wise
	modulo $q$.
\end{definition}

For odd predicates the covering problem is \emph{trivially solvable}, since any CSP 
instance on such a predicate is $q$-coverable by the $q$
translates of any assignment, \ie, $\{x + \bar a \mid a \in \setq\}$ is a
$q$-covering for any assignment $x \in \setq^k$.

\subsection{Label Cover}
\begin{definition}[\LC]
	
	\label{def:label-cover} An instance $G=(U,V,E,L,
	R,\{\pi_e\}_{e\in E})$ of the {\LC} constraint satisfaction
	problem consists of a bi-regular bipartite graph $(U,V,E)$,
	two sets of alphabets $L$ and $R$ and a projection map $\pi_e : R \rightarrow
	L$ for every edge $e\in E$. % The instance is a constraint
	% satisfaction problem where the vertices in $U$ are variables
	% taking values in $L$ and the vertices in $V$ taking values in
	% $R$.
	Given a labeling $\ell : U \rightarrow L, \ell:V \rightarrow
	R$, an edge $e = (u,v)$ is said to be satisfied by $\ell$ if
	$\pi_e(\ell(v)) = \ell(u)$. 
	
	$G$ is said to be \emph{at most $\delta$-satisfiable} if every
	labeling satisfies at most a $\delta$ fraction of the
	edges. $G$ is said to be \emph{$c$-coverable} if there exist
	$c$ labelings such that for every vertex $u\in U$, one of the
	labelings satisfies all the edges incident on $u$.
	
	An instance of \UG\ is a label cover instance where $L=R$ and the constraints 
	$\pi$ are permutations.
\end{definition}

The hardness of \LC\ stated below follows from
the PCP Theorem \cite{AroraS1998,AroraLMSS1998}, Raz's Parallel
Repetition Theorem \cite{Raz1998} and a structural property proved by
H\aa stad \cite[Lemma 6.9]{Hastad2001}.
\begin{theorem}[Hardness of \LC]
	
	\label{thm:lc-hard} For every $r \in \N$, there is a
	deterministic $n^{O(r)}$-time reduction from a \TSAT\ instance
	of size $n$ to an instance $G=(U,V,E,[L],[R], \{\pi_e\}_{e
		\in E})$ of \LC\ with the following properties:
	
	\begin{enumerate}
		
		\item $|U|,|V| \leq n^{O(r)}$; $L,R \leq 2^{O(r)}$; $G$ is bi-regular with degrees bounded by $2^{O(r)}$.

		\item There exists a constant $c_0 \in (0,1/3)$ such that for any $v \in V$ and $\alpha
		\subseteq [R]$, for a random neighbor $u$,
		$$\E_u \left[ |\pi_{uv}(\alpha)|^{-1} \right] \leq |\alpha|^{-2c_0}, $$
		where $\pi_{uv}(\alpha) := \{ i\in [L] \mid \exists j\in \alpha \mbox{ s.t. } \pi_{uv}(j) = i\}$.
		This implies that
		$$\forall v, \alpha, \qquad Pr_u \left[ |\pi_{uv}(\alpha)| <|\alpha|^{c_0}\right] \leq \frac{1}{|\alpha|^{c_0}}.$$
		\item There is a constant $d_0 \in (0,1)$ such that,
		
		\begin{itemize}
			
			\item YES Case : If the \TSAT\ instance is
			satisfiable, then $G$ is 1-coverable.
			
			\item NO Case : If the \TSAT\ instance is
			unsatisfiable, then $G$ is
			at most $2^{-d_0r}$-satisfiable.
		\end{itemize}
	\end{enumerate}
\end{theorem}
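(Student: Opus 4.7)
The plan is to assemble this statement from three standard ingredients: the PCP theorem, Raz's parallel repetition theorem, and H\aa stad's smoothness construction for label cover. None of these steps is novel; the content of the theorem is that they can be combined simultaneously while giving the covering-style completeness guarantee in item~3.

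First, I would invoke the PCP theorem of \cite{AroraS1998,AroraLMSS1998} to reduce the given \TSAT\ instance of size $n$ to a 2-prover 1-round projection game $G_0 = (U_0, V_0, E_0, L_0, R_0, \{\pi_e\})$ of size $\poly(n)$, with perfect completeness and some soundness $1-\eta_0$ for an absolute constant $\eta_0 > 0$, and with constant alphabet sizes and bi-regular bipartite constraint graph. (If necessary one first standardizes to a projection game via the H\aa stad long-code-free step; this only loses constant factors.) Perfect completeness will give us the 1-coverability in the YES case since, when $G_0$ is fully satisfiable, a single labeling works as a 1-cover. Next, I would apply Raz's parallel repetition theorem \cite{Raz1998} with $r$ rounds to obtain $G_1$: the size blows up to $n^{O(r)}$, the alphabet sizes become $2^{O(r)}$, bi-regularity is preserved (products of bi-regular graphs are bi-regular with degrees $2^{O(r)}$), and the soundness drops to $2^{-d_0 r}$ for some absolute $d_0 \in (0,1)$. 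In the YES case, the $r$-fold product of a satisfying labeling remains a satisfying labeling, so $G_1$ is 1-coverable.

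The delicate point — and the main obstacle — is property~2, the smoothness estimate. Here I would use H\aa stad's structural transformation (\cite[Lemma 6.9]{Hastad2001}), which takes the $r$-fold repeated projection game and modifies each $V$-label so that for every $v \in V$ and every subset $\alpha \subseteq [R]$, a random neighbor $u$ of $v$ projects $\alpha$ to a set whose size is, with high probability, at least $|\alpha|^{c_0}$ for some absolute $c_0 \in (0, 1/3)$. Concretely one replaces each repeated $V$-coordinate label by a tuple that records its projections along several independently sampled neighbors, so that collisions under $\pi_{uv}$ are rare. The resulting game retains constant-sized-to-$2^{O(r)}$ alphabets, bi-regularity with degree $2^{O(r)}$, and soundness $2^{-\Omega(r)}$ after adjusting constants, while gaining the desired Markov-style bound
\[
\E_u\!\left[ |\pi_{uv}(\alpha)|^{-1} \right] \leq |\alpha|^{-2c_0}.
\]
The second inequality of property~2 is then immediate from Markov's inequality applied to the nonnegative random variable $|\pi_{uv}(\alpha)|^{-1}$, using $\Pr[|\pi_{uv}(\alpha)| < |\alpha|^{c_0}] = \Pr[|\pi_{uv}(\alpha)|^{-1} > |\alpha|^{-c_0}] \leq |\alpha|^{c_0} \cdot |\alpha|^{-2c_0}$.

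Finally, I would check the completeness side is preserved by the smoothing step: H\aa stad's transformation is such that any satisfying labeling of $G_1$ lifts canonically to a satisfying labeling of the smoothed instance (the new $V$-labels are computable from the old ones), so 1-coverability in the YES case carries through. For soundness, any labeling of the smoothed instance projects back to a labeling of $G_1$ that does at least as well, so the $2^{-d_0 r}$ soundness bound holds (after possibly shrinking $d_0$). The total running time is $n^{O(r)}$ as claimed, because each of the three steps is a deterministic polynomial-time procedure whose blow-up is dominated by the parallel-repetition step. The crux of the argument is really H\aa stad's smoothness construction; everything else is bookkeeping to ensure the completeness, soundness, regularity and alphabet bounds survive each transformation simultaneously.
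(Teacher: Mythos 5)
Your proposal is built from exactly the three ingredients the paper itself cites --- the PCP theorem, Raz's parallel repetition, and H\aa stad's smoothness lemma --- so at the level of detail the paper gives (it offers no proof beyond this citation), the route is the same and the Markov-inequality derivation of the second inequality in property~2 from the first is fine. One substantive correction, though: your account of the smoothness step misdescribes \cite[Lemma 6.9]{Hastad2001}. There is no transformation that ``replaces each repeated $V$-coordinate label by a tuple recording its projections along several independently sampled neighbors''; Lemma~6.9 is a structural property of the \emph{unmodified} repeated game, provided the repetition is set up on the clause--variable game of a regular gap-\TSAT\ instance produced by the PCP theorem. Concretely, a $V$-vertex is an $r$-tuple of clauses (labels: satisfying assignments to those clauses), a $U$-vertex is an $r$-tuple of variables, one chosen from each clause, and two distinct labels in $\alpha$ collide under $\pi_{uv}$ only if, on every coordinate where they differ, the randomly chosen variable fails to separate them (which happens with probability bounded away from $1$ per coordinate); this is what yields $\E_u\left[|\pi_{uv}(\alpha)|^{-1}\right] \leq |\alpha|^{-2c_0}$. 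So the correct order is: PCP $\to$ regular gap-\TSAT\ $\to$ $r$-fold repeated clause--variable game, with Raz giving the $2^{-d_0 r}$ soundness and Lemma~6.9 giving property~2 for that specific game; no post-hoc ``smoothing transformation'' is applied, and the completeness/soundness-preservation checks you devote to it are unnecessary. As written, your version has a gap precisely there: if you first pass to an arbitrary projection game and only then try to graft on smoothness by modifying labels, that modification is not a citable step and would itself require a proof, whereas smoothness is not automatic for the repetition of a generic projection game.
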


Our characterization of hardness of covering CSPs is based on the
following conjecture due to Dinur and Kol~\cite{DinurK2013}. 

\begin{conjecture}[\cov-\ugc$(c)$]
	There exists $c\in \mathbb{N}$ such that for every sufficiently small $\delta>0$ 
	there exists $L\in \mathbb{N}$ such that the following holds. 
	Given an instance $G=(U,V,E,[L],[L], \{\pi_e\}_{e
		\in E})$ of \UG\, it is $\mathrm{NP}$-hard to distinguish between the following two cases:
	
	\begin{itemize}
		\item YES case: There exist $c$ assignments such that for 
		every vertex $u\in U$, at least one of the assignments satisfies all the edges touching u.
		
		\item NO case: Every assignment satisfies at most $\delta$ fraction of the
		edge constraints.
	\end{itemize}
\end{conjecture}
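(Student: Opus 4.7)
Since this statement is posed as a \emph{conjecture} (a covering strengthening of the standard \ugc), any proof plan is necessarily speculative; the conjecture appears to be strictly stronger than standard UGC and is used here only as a working hypothesis. Nevertheless, the natural starting point for an attempted proof would be the \LC\ instance of \lref[Theorem]{thm:lc-hard}, whose YES case is already 1-coverable (a single YES labeling satisfies all edges incident on each $u \in U$). If one could transform this bipartite \LC\ into a \UG\ instance on permutation constraints while preserving vertex-level coverage on the $U$-side, then with some constant blow-up factor $c$ one would obtain \cov-\ugc{$(c)$}.

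The plan would proceed in two steps. First, for completeness, design a gadget that converts each projection $\pi_e \colon [R] \to [L]$ into a small local \UG\ sub-instance whose alphabet equals $L$ (so that permutation constraints make sense), in such a way that the 1-covering of the \LC\ instance lifts to a $c$-covering of the resulting \UG\ instance. A candidate is a Raz-style ``copy-and-permute'' gadget where the YES labeling is replicated across $c$ sheets, with each sheet responsible for a different equivalence class of fibers of $\pi_e$. Second, for soundness, use a long-code or parallel-repetition-style composition to argue that any \UG\ labeling satisfying more than $\delta$ fraction of edges can be decoded back to a labeling of the base \LC\ satisfying noticeably more than $2^{-d_0 r}$ fraction of edges, contradicting \lref[Theorem]{thm:lc-hard}.

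The main obstacle, and the precise reason the conjecture is still open, is that standard reductions from \LC\ to \UG\ inherently turn general projections into permutations via symmetrization or folding tricks (e.g., the long-code quotient), and these tricks tend to destroy either the vertex-level covering guarantee at $U$-vertices or the soundness gap. In particular, averaging-based soundness arguments give only per-edge satisfaction bounds, whereas the completeness side of \cov-\ugc{} demands a \emph{per-vertex} coverage guarantee for at least one of the $c$ labelings. A successful proof would likely require a genuinely new gadget that respects per-vertex structure in both the completeness and soundness analyses; absent such a breakthrough, the conjecture is adopted here purely as the assumption underlying \lref[Theorem]{thm:ug-hard}.
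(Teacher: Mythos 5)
This statement is a conjecture --- the covering unique games conjecture \cov-\ugc$(c)$ of Dinur and Kol --- which the paper only states as a hypothesis (used in \lref[Theorem]{thm:ug-hard}) and never proves, so your decision to treat it as a working assumption rather than supply a proof is exactly the paper's stance. Your speculative \LC-to-\UG\ reduction sketch is not (and need not be) a proof; as you yourself note, any such derivation would in particular yield the standard Unique Games Conjecture and is beyond current techniques, so flagging it as an obstacle rather than a claimed argument is the correct reading.
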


% \subsection{Analysis of Boolean function over probability spaces}
%% ToC edit1 functions -->
\subsection{Analysis of Boolean functions over probability spaces} %% <--
For a function $f:\{0,1\}^L \rightarrow \R$, the \emph{Fourier
	decomposition} of $f$ is given by 
$$f(x) = \sum_{\alpha \in \{0,1\}^L}
\widehat f(\alpha) \chi_\alpha(x) \text{ where } \chi_\alpha(x) :=
(-1)^{\sum_{i=1}^L\alpha_i\cdot x_i} \text{ and }\widehat f(\alpha) :=
\E_{x \in \{0,1\}^L} f(x)\chi_\alpha(x).$$
We will use $\alpha$, also to denote the subset of $[L]$ for which it
is the characteristic vector. The \emph{Efron--Stein decomposition} is
a generalization of the Fourier decomposition to product distributions
of arbitrary probability spaces. Let $(\Omega, \mu)$ be a probability space and
$(\Omega^L,\mu^{\otimes L})$ be the corresponding product space. For a function
$f:\Omega^L \rightarrow \R$, the Efron--Stein decomposition of $f$ with
respect to the product space is given by
$$ f(x_1,\cdots, x_L) = \sum_{\beta \subseteq [L]} f_\beta(x),$$
where $f_\beta$ depends only on $x_i$ for $i\in \beta$ and for all
$\beta' \not\supseteq \beta , a \in \Omega^{\beta'}$, $\E_{x \in
	\mu^{\otimes R}} \left[ f_\beta(x) \mid x_{\beta'} = a \right]=0$. We
will be dealing with functions of the form $f:\{0,1\}^{dL}\rightarrow
\R$ for $d \in \N$ and $d$-to-$1$ functions $\pi:[dL] \rightarrow
[L]$. We will also think of such functions as $f: \prod_{i \in
	L}\Omega_i \rightarrow \R$ where $\Omega_i = \{0,1\}^d$ consists of
the $d$ coordinates $j$ such that $\pi(j) = i$. An Efron--Stein
decomposition of $f: \prod_{i \in L}\Omega_i \rightarrow \R$ over the
uniform distribution over $\{0,1\}^{dL}$, can be obtained from the
Fourier decomposition as 
\begin{equation}
	\label{eqn:fourier-efron}
	f_\beta(x) = \sum_{\alpha \subseteq [dL] : \pi(\alpha)=\beta} \widehat f(\alpha) \chi_\alpha.
\end{equation}
Let $\|f\|_2 :=
\E_{x\in \mu^{\otimes L}}[f(x)^2]^{1/2}$ and $\|f\|_\infty :=
\max_{x\in \Omega^{\otimes L}}|f(x)|$ . For $i \in [L]$, the influence of
the $i$-th
coordinate on $f$ is defined as follows.
$$\Inf_i[f] := \E_{x_1,\cdots, x_{i-1},x_{i+1},\cdots , x_L}\Var_{x_i}[f
(x_1,\cdots, x_L)]  = \sum_{\beta: i\in \beta} \|f_\beta\|^2_2.$$
For an integer $d$, the degree $d$ influence is defined as
$$\Inf_i^{\leq d}[f] := \sum_{\beta: i\in \beta, |\beta| \leq d} \|f_\beta\|^2_2.$$
It is easy to see that for Boolean functions, the sum of all the degree $d$ influences is at most $d$.

\begin{definition}
	\label{def:connectedness}
	Let $(\Omega^k, \mu)$ be a probability space. Let $S = \{ x\in \Omega^k 
	\mid \mu(x) > 0\}$. We say that  $S \subseteq \Omega^k$ is 
	\emph{connected} if for every $x, y\in S$, there is a sequence of strings 
	starting with $x$ and ending with $y$ such that every element in the sequence is
	in $S$ and every two adjacent elements differ in exactly one coordinate. 
\end{definition}

Let $\mu^{\otimes n}$ denote the $n$-wise product distribution of $\mu$. 

\begin{theorem}[{\cite[Proposition 6.4]{Mossel2010}}]
	
	\label{thm:invariance}
	Let $(\Omega^k, \mu)$ be a probability space such that the support of the 
	distribution $\supp(\mu) \subseteq \Omega^k$ is connected and the minimum probability of
	every atom in $\supp(\mu)$ is at least $\alpha$ for some
	$\alpha \in (0, \nicefrac{1}{2}]$. Furthermore, assume that the marginal of $\mu$ on each of the $k$ coordinates is uniform in $\Omega$. Then there exist
	continuous functions $\overline{\Gamma} : (0,1)\rightarrow (0,1)$ and 
	$\underline{\Gamma} : (0,1)\rightarrow (0,1)$ such that the following holds: 
	For every $\epsilon>0$, there exists $\tau > 0$ and an integer $d$ such that 
	if a function $f : \Omega^L \rightarrow [0,1]$ satisfies
	$$\forall i\in [L],  \Inf_i^{\leq d} (f) \leq \tau $$
	then 
	$$\underline{\Gamma}\left(\E_{(x_1,\ldots, x_k) \sim \mu^{\otimes L}}[f(x_1)]\right) -\epsilon \leq \E_{(x_1,\ldots, x_k) \sim \mu^{\otimes L}}\left[
	\prod_{j=1}^k f(x_j)\right] \leq \overline{\Gamma}\left(\E_{(x_1,\ldots, x_k) \sim \mu^{\otimes L}}[f(x_1)]\right) + \epsilon.$$
	There exists an absolute constant $C$ such that one can take $\tau = \displaystyle{\epsilon^
		{C\frac{\log(\nicefrac{1}{\alpha})\log(\nicefrac{1}{\epsilon})}{\epsilon
				\alpha^2}}}$ and $d = \log(\nicefrac{1}{\tau})\log(\nicefrac{1}{\alpha})$.
\end{theorem}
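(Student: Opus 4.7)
The plan is to follow the Mossel (2010) strategy: replace the correlated discrete space by a matched correlated Gaussian space via an invariance principle, and then extract the bounds $\underline{\Gamma}, \overline{\Gamma}$ from Borell-type noise-stability inequalities on the Gaussian side. There are three ingredients to tie together: an Efron-Stein truncation using the spectral gap of the correlation operator associated with $\mu$; an invariance replacement of the $\Omega$-variables by matching Gaussians; and a Gaussian Borell-type sandwich that produces $\underline{\Gamma}$ and $\overline{\Gamma}$.

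First I would set up the correlation operator $T$ on $L^2(\Omega,\mu_\Omega)$ induced by the joint distribution $\mu$ on $\Omega^k$ (pick one factor, average against the conditional distribution of the others). The hypothesis that $\supp(\mu)$ is connected together with the lower bound $\alpha$ on atom probabilities is exactly what is needed to conclude that $T$ has a spectral gap, i.e., its second singular value $\rho = \rho(\mu,\alpha) < 1$ with a quantitative dependence on $\alpha$. Lifting $T$ to the product $\Omega^L$, the Efron-Stein pieces $f_\beta$ become (approximate) eigenspaces, and the quantity $\E_\mu\!\left[\prod_j f(x_j)\right]$ expands as a sum over $\beta$ of terms contracted by a factor $\rho^{|\beta|}$. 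Since $f \in [0,1]$, this lets me truncate $f$ to total degree at most $d = d(\epsilon,\alpha)$ with $L_2$-error $O(\epsilon)$ and change the target expectation by $O(\epsilon)$.

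Next I would apply the Mossel-O'Donnell-Oleszkiewicz-style invariance principle to the degree-$d$ truncation $f^{\leq d}$. Using the low-degree influence hypothesis $\Inf_i^{\leq d}(f) \leq \tau$ and the hypercontractive inequalities for $(\Omega,\mu_\Omega)$ whose constants depend on $\alpha$, one shows that replacing each $\Omega$-valued coordinate by a Gaussian vector of matching first and second moments (jointly coupled across the $k$ factors so as to reproduce the cross-factor covariances encoded by $\mu$) changes $\E\!\left[\prod_j f^{\leq d}(x_j)\right]$ by at most $\poly(\tau,d)$. Choosing $\tau$ according to the stated formula $\tau = \epsilon^{C\log(1/\alpha)\log(1/\epsilon)/(\epsilon\alpha^2)}$ and $d = \log(1/\tau)\log(1/\alpha)$ balances the three error terms and keeps this substitution error within $O(\epsilon)$.

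Finally, on the Gaussian side, $\E\!\left[\prod_j \tilde f(\tilde x_j)\right]$ is a smooth functional of a joint centered Gaussian whose covariance is determined by $\mu$. Given the mean $\E_\mu[\tilde f] = \E_\mu[f]$, rounding $\tilde f$ against a halfspace and invoking Borell-type rearrangement inequalities yields a two-sided sandwich: the extremal configurations are halfspace indicators, and the upper/lower values define the continuous functions $\overline{\Gamma}$ and $\underline{\Gamma}$. Summing the errors from truncation, invariance, and the final Gaussian-smoothing step gives the $\pm\epsilon$ slack. In my view the main obstacle is the quantitative step of controlling $\rho$ and the hypercontractive constants simultaneously in terms of $\alpha$, and propagating them through the invariance step without an extra loss; this is the technical heart of Mossel's paper and is what determines the precise shape of the dependence of $\tau$ and $d$ on $\epsilon$ and $\alpha$.
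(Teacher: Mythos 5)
There is nothing in the paper to compare against here: the paper does not prove this statement at all. It is imported verbatim as Proposition~6.4 of \cite{Mossel2010} and used as a black box in the soundness analysis of Section~\ref{sec:ug-hard} (where only the positivity of $\underline{\Gamma}$ on $(0,1)$ is actually exploited). So your text is necessarily a reconstruction of Mossel's own argument, not an alternative to anything in this paper.

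Judged on its own terms, your sketch does follow the strategy of the original proof: connectedness of $\supp(\mu)$ together with the atom lower bound $\alpha$ gives a correlation bound $\rho \le 1-\alpha^2/2 < 1$; this is used to smooth the function; an invariance principle with hypercontractive constants depending on $\alpha$ transfers the smoothed functional to a matched correlated Gaussian ensemble; and Gaussian stability bounds give $\underline{\Gamma}$ and $\overline{\Gamma}$. Two places where your outline, as written, would not go through without more work. First, the claim that the Efron--Stein pieces of the $k$-fold product are ``contracted by $\rho^{|\beta|}$'' is literally true only for a single Markov operator applied to one function (this is Proposition~2.12 of \cite{Mossel2010}, restated in the appendix of this paper); for a product of $k$ copies of $f$ evaluated at jointly correlated inputs one instead passes to $T_{1-\gamma}f$ and bounds the change in the $k$-wise expectation using $\rho<1$, which requires a separate argument. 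Second, for $k\ge 3$ the lower bound is not obtained by ``rounding against a halfspace and invoking Borell'' in the bivariate sense; the multi-function case needs the multivariate Gaussian statements (or the conditioning/recursive treatment) in \cite{Mossel2010}, and the theorem only asserts the existence of continuous $\underline{\Gamma},\overline{\Gamma}$, not halfspace extremality. Relatedly, the quantitative forms of $\tau$ and $d$ are not free parameters one tunes at the end to ``balance errors''; they are dictated by the $\alpha$-dependent hypercontractive constants propagated through the invariance step --- exactly the technical core you explicitly defer. In short: a faithful outline of the cited result's proof, but not a self-contained proof, and the paper itself supplies none.
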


The following invariance principle for correlated spaces proved in 
\expref{Section}{sec:inv-prin} is an
adaptation of similar invariance principles (c.f.,
\cite[Theorem~3.12]{Wenner2013}, \cite[Lemma~B.3]{GuruswamiL2018}) to
our setting. 
\begin{theorem}[Invariance Principle for correlated spaces]
	\label{thm:inv-prin} Let $(\Omega_1^k \times \Omega_2^k, \mu)$ be a
	correlated probability space such that the marginal of $\mu$ on any
	pair of coordinates one each from $\Omega_1$ and $\Omega_2$ is a
	product distribution. Let $\mu_1 ,\mu_2$ be the marginals of $\mu$
	on $\Omega_1^k$ and $\Omega_2^k$, respectively. Let $X, Y$ be two
	random $k\times L$ dimensional matrices chosen as
	% follows: independently %% ToC edit3 (colon-->period) -->
	follows.  Independently  %% <--
	for every $i \in [L]$, the pair of columns $(x^i,y^i)
	\in \Omega_1^k \times \Omega_2^k$ is chosen from $\mu$. Let
	$x_i,y_i$ denote the $i$-th rows of $X$ and $Y$, respectively.  If
	$F: \Omega_1^L \rightarrow [-1,+1]$ and $G: \Omega_2^L \rightarrow
	[-1,+1]$ are functions such that
	$$\tau:= \sqrt{\sum_{i \in [L]}\Inf_i[F]\cdot \Inf_i[G]}  ~\text{ and } ~
	\Gamma := \max \left\{ \sqrt{\sum_{i \in [L]}\Inf_i[F]} \ , \sqrt{\sum_{i \in 
			[L]}\Inf_i[G]} \right\} \ ,$$ then
	\begin{equation}
		\label{eqn:inv-eqn}
		\abs{ \E_{(X,Y) \in \mu^{\otimes L}} \left[\prod_{i\in
				[k]}F(x_i) \cdot G(y_i)
			\right] - \E_{X \in \mu_1^{\otimes L}} \left[\prod_{i\in [k]}F(x_i)\right]
			\cdot \E_{Y \in \mu_2^{\otimes L}} \left[\prod_{i\in [k]}G(y_i)\right] } \leq 
		2^{O(k)} \Gamma \tau\ .
	\end{equation}
\end{theorem}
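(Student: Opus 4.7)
The strategy is to adapt the invariance arguments of Wenner and Guruswami--Lee to this two-sided correlated setting via a hybrid over the $L$ columns of $(X,Y)$. Set $\Phi(X) := \prod_i F(x_i)$ and $\Psi(Y) := \prod_i G(y_i)$. I would introduce a sequence of hybrid measures on $(X,Y)$ in which the first $\ell$ columns are drawn from $\mu$ and the remaining $L-\ell$ columns from $\mu_1 \otimes \mu_2$, so that the overall discrepancy in \eqref{eqn:inv-eqn} telescopes into $\sum_{\ell=1}^L D_\ell$, where $D_\ell$ is the error incurred at the $\ell$-th swap. At step $\ell$, fixing the other columns, $\Phi(X)\Psi(Y)$ factors as $\phi_\ell(x^\ell)\,\psi_\ell(y^\ell)$ for suitable $[-1,1]$-bounded functions $\phi_\ell\colon \Omega_1^k \to \R$ and $\psi_\ell\colon \Omega_2^k \to \R$, so $D_\ell = \E_{\text{other cols}}[\mathrm{Cov}_\mu(\phi_\ell,\psi_\ell)]$.

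The key step is to decompose $\phi_\ell$ and $\psi_\ell$ using the Efron--Stein (or appropriate orthogonal) decompositions over the rows $[k]$. Splitting $F = F^{(-\ell)} + F^{(\ell)}$ into the parts whose support over $[L]$ misses or contains $\ell$, the row--$A$ component of $\phi_\ell$ evaluates to $\prod_{i \in A} F^{(\ell)}(x_i,a_i)\cdot\prod_{i \notin A}F^{(-\ell)}(x_i)$, whose expected squared $\ell^2$-norm over the other columns is at most $(\Inf_\ell[F])^{|A|}$; an analogous bound holds for $\psi_\ell$. The pairwise independence hypothesis on single cross-coordinates $(x^\ell_a,y^\ell_b)$ immediately forces $\mathrm{Cov}_\mu$ of any row-$A$, row-$B$ pair with $|A|=|B|=1$ to vanish, while components with $A=\emptyset$ or $B=\emptyset$ contribute zero as they are constants. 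Every surviving pair therefore has $|A|,|B|\geq 1$ and $|A|+|B|\geq 3$, and is bounded after Cauchy--Schwarz by $(\Inf_\ell[F])^{|A|/2}(\Inf_\ell[G])^{|B|/2}$.

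Finally, the summation over $\ell$ is the crucial step. For each surviving pair $(A,B)$ I would extract one ``crossing'' factor $\sqrt{\Inf_\ell[F]\cdot\Inf_\ell[G]}$ per column and then apply Cauchy--Schwarz:
$$\sum_{\ell=1}^L (\Inf_\ell[F])^{|A|/2}(\Inf_\ell[G])^{|B|/2} \;\leq\; \sqrt{\sum_\ell \Inf_\ell[F]\,\Inf_\ell[G]} \cdot \sqrt{\sum_\ell (\Inf_\ell[F])^{|A|-1}(\Inf_\ell[G])^{|B|-1}} \;\leq\; \tau\cdot\Gamma,$$
where in the last inequality the condition $|A|+|B|\geq 3$ together with $\Inf_\ell[\cdot]\leq 1$ collapses the residual factor into either $\sqrt{\sum_\ell \Inf_\ell[F]}$ or $\sqrt{\sum_\ell \Inf_\ell[G]}$, both bounded by $\Gamma$. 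Summing over the $2^{2k}$ choices of $(A,B) \subseteq [k]\times[k]$ yields the final bound $2^{O(k)}\Gamma\tau$. The main obstacle is precisely this combinatorial bookkeeping: it is the pairwise-independence hypothesis that enforces $|A|+|B|\geq 3$ in every nonzero term, which is exactly what allows one to extract a full crossing factor and obtain the crossed sum $\tau$ rather than the weaker $\Gamma^2$.
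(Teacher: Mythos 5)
Your overall skeleton is exactly the paper's: a hybrid over the $L$ columns, splitting $F=F_0+F_1$ and $G=G_0+G_1$ into the parts independent of / depending on column $\ell$, killing the terms with $A=\emptyset$, $B=\emptyset$ or $|A|=|B|=1$ using the hypothesis that cross-side coordinate pairs are product-distributed, and then a final Cauchy--Schwarz over $\ell$ identical to the paper's last display. However, one step as stated would fail: the claim that the row-$A$ component of $\phi_\ell$ has expected squared norm at most $(\Inf_\ell[F])^{|A|}$, and hence that each surviving $(A,B)$ term is bounded by $(\Inf_\ell[F])^{|A|/2}(\Inf_\ell[G])^{|B|/2}$. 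This implicitly factors $\E\bigl[\prod_{i\in A}F_1(x_i)^2\bigr]$ into $\prod_{i\in A}\E[F_1(x_i)^2]$, i.e.\ it assumes the rows \emph{within} one side of a column are independent. The hypothesis only constrains cross-side pairs; within-side rows can be arbitrarily correlated (and in the intended application to the test $\calT_2$ they are, via the parity-constrained blocks $\calP_0,\calP_1$). With, say, all rows of $X$ identical and $F_1$ equal to $2$ on a set of measure $\Inf_\ell[F]/4$ and $0$ elsewhere, $\E[F_1(x_1)F_1(x_2)F_1(x_3)]\approx 2\,\Inf_\ell[F]$, which exceeds $(\Inf_\ell[F])^{3/2}$ for small influence, so the per-term bound you assert is false in general.

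The repair is the paper's weaker per-term estimate, which still feeds into your final summation unchanged. For a surviving term one only extracts influence from at most three rows: pick one row index in $A$ and one in $B$; the hypothesis makes that cross-side pair of rows independent, giving a factor $\sqrt{\E[F_1^2]\,\E[G_1^2]}=\sqrt{\Inf_\ell[F]\Inf_\ell[G]}$ after Cauchy--Schwarz; since $|A|+|B|\ge 3$, a second row on the heavier side contributes $\|F_1\|_2=\sqrt{\Inf_\ell[F]}$ (or $\sqrt{\Inf_\ell[G]}$) -- each single row does have the correct marginal, so this is legitimate -- and every remaining factor is bounded only in sup norm, using $|F_0|\le 1$, $|F_1|\le 2$, which is where the $2^{O(k)}$ comes from. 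This yields $\err_\ell\le 2^{O(k)}\sqrt{\Inf_\ell[F]\Inf_\ell[G]}\bigl(\sqrt{\Inf_\ell[F]}+\sqrt{\Inf_\ell[G]}\bigr)$, and your concluding Cauchy--Schwarz over $\ell$ then gives $2^{O(k)}\Gamma\tau$ exactly as you wrote. Note also that your appeal to ``Efron--Stein over the rows'' is not quite what is used: the expansion of $\prod_i(F_0+F_1)$ is not an orthogonal decomposition of $\phi_\ell$ in the correlated measure, and the vanishing of the low-order terms should be argued (as you in effect do) from $F_0$ being independent of column $\ell$ and from the cross-pair product hypothesis, not from orthogonality.
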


%% UG Hardness 

\section{Covering-UG Hardness of Covering CSPs}\label{sec:ug-hard}

In this section, we prove the following theorem, which in turn implies
\expref{Theorem}{thm:ug-hard} (see below for proof).

\begin{theorem} 
	
	\label{thm:basic-ug-hard}
	Let $[q]$ be any constant-size alphabet and $k \geq
	2$. Recall that $\nae := \setq^k \setminus \{\bar b\mid b \in
	\setq \}$. Let $P \subseteq \setq^k$ be a predicate such that
	there exists $a \in\nae  $ and $\nae \supset P \supseteq \{a
	+\bar b \mid b \in \setq \}$. 
	Assuming \cov-\ugc$(c)$, for every sufficiently small constant $\delta>0$ it is $\mathrm{NP}$-hard to 
	distinguish between $P$-{\csp} instances $\calG=(\calV,\calE)$ of the 
	following two cases:
	
	\begin{itemize}
		
		\item YES Case : $\calG$ is $2c$-coverable.
		
		\item NO Case : $\calG$ does not have an independent set of fractional 
		size $\delta$.
	\end{itemize}
\end{theorem}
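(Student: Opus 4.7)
The plan is to reduce from a \cov-\ugc$(c)$ instance $\Psi = (U, V, E, [L], \{\pi_e\})$ via a long-code style dictatorship test tailored to the shift-orbit $\{a + \bar{b} : b \in [q]\} \subseteq P$. For each $v \in V$, introduce a folded long code $f_v \colon [q]^L \to [q]$ where the folding identifies $x$ with $x + \bar{b}$ (so dictator functions are well-defined modulo a shift). The test picks $u \in U$, samples $k$ independent neighbors $v_1,\ldots,v_k$ of $u$, then samples $x \in [q]^L$ uniformly and $b \in [q]$ uniformly, and for each $j \in [k]$ defines $y_j \in [q]^R$ by $y_j(i) = x(\pi_{uv_j}(i)) + a_j + b$. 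The test queries $f_{v_j}(y_j)$ for each $j$ and imposes the $P$-constraint on the tuple $(f_{v_1}(y_1),\ldots,f_{v_k}(y_k))$. The key design feature is that the marginal of the joint test distribution on any single long-code coordinate is uniform on $\{a + \bar{b} : b \in [q]\}$; since $a \in \nae$ and each coordinate sees all $q$ values, this distribution is $1$-wise independent with support in $P$.

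For completeness, given $c$ UG assignments $(\ell^{(1)},\ldots,\ell^{(c)})$ from \cov-\ugc$(c)$, produce $2c$ CSP assignments as follows: for each labeling $\ell^{(i)}$, take the two ``dictator plus shift'' assignments obtained by setting $f_v \equiv$ dictator at coordinate $\ell^{(i)}(v)$, which, after accounting for the folding group and the shift $a_j + b$ in the test, yields an output tuple of the form $a + \bar{c}$ for some $c \in [q]$ and hence lies in $P$. One verifies that the two shift representatives per labeling are needed to cover all the residue classes that arise across tests centered at different $u$, giving a $2c$-covering whenever every $u$ has at least one $\ell^{(i)}$ that satisfies all of its incident edges. (This is exactly the \cov-\ugc\ completeness guarantee.)

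For soundness, I will use the independent-set formulation rather than juggling many assignments explicitly, following the simplification flagged in the introduction. Suppose the constraint (hyper)graph contains an independent set $I$ of fractional size $\geq \delta$. Let $f_v \colon [q]^L \to \{0,1\}$ be the indicator of $I \cap \{(v,\cdot)\}$, so $\mathbb{E}_v \mathbb{E}_x[f_v(x)] \geq \delta$. Independence means no test in the support lies entirely inside $I$, so $\mathbb{E}_{\text{test}}\!\left[\prod_{j=1}^k f_{v_j}(y_j)\right] = 0$. Expand each $f_v$ in its Efron–Stein/Fourier basis with respect to the $[q]^L$ coordinates. Using $1$-wise independence of the test distribution on each long-code coordinate and applying \lref[Theorem]{thm:invariance} (the Mossel invariance theorem for connected, bounded-below correlated spaces, which applies because $\supp$ of the test marginal on a coordinate is connected under single-symbol changes and each atom has mass $1/q$), the vanishing expectation would force the low-degree, high-influence part of a non-negligible fraction of the $f_v$'s to concentrate on coordinates that are ``matched'' across neighbors by the projections $\pi_{uv}$. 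Decoding each $v$ by a uniformly random low-degree influential coordinate (the standard influence-decoding trick, combined with the smoothness of the \UG\ projections) produces a \UG\ labeling that satisfies at least $\mathrm{poly}(\delta)$ fraction of the edges, contradicting the soundness guarantee of \cov-\ugc$(c)$ once $\delta'$ is chosen small enough.

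The main obstacle I expect is the soundness step: carefully verifying that the very mild $1$-wise independence (as opposed to pairwise independence used in Dinur–Kol via Austrin–Mossel) is sufficient to push \lref[Theorem]{thm:invariance} through on the correlated space supported on $\{a + \bar{b} : b\in[q]\}$, and that the resulting decoding argument correctly localizes influential coordinates on both the $U$ and $V$ sides of $\Psi$. This is exactly the point where the Bansal–Khot analysis replaces Austrin–Mossel, and it is the crucial reason we only need $\nae \supset P \supseteq \{a + \bar{b} : b \in [q]\}$ rather than a pairwise independent subdistribution in $P$.
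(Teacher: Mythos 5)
There is a genuine gap, and it is in the design of the test itself. In your test every long-code column is forced into the orbit: at each coordinate $i$ the tuple of queried values is $a+\overline{x(i)+b}$, so all $k$ query points are (permuted, constant-shifted) copies of the \emph{single} random string $z:=x+\bar b$ and the test has no noise. Two consequences follow. First, the per-coordinate support of your test distribution is $\{a+\bar b \mid b \in \setq\}$, and any two distinct elements of this set differ in \emph{all} $k\ge 2$ coordinates, so this support is not connected; \lref[Theorem]{thm:invariance} (Mossel) therefore does not apply, and the entire soundness step you sketch ("apply the invariance theorem, extract influential coordinates, decode") has no starting point. The issue is not a technicality: because the queries are perfectly correlated, there are large independent sets regardless of the \UG\ value. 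For instance (ignoring folding, or when $q\mid L$ so the folding does not interfere), take $I:=\{(v,y): \sum_i y_i \equiv 0 \pmod q\}$, of fractional size $1/q$: at any constraint, $f_{v_j}(y_j)=\mathbb{1}[\sigma+L a_j\equiv 0]$ with $\sigma=\sum_i z_i$ common to all $j$, and since $a\in\nae$ the values $La_j$ are not all equal, so no constraint lies inside $I$. So soundness of your reduction is false, not merely unproven; "$1$-wise independence suffices" conflates the marginal-uniformity condition with what actually drives the argument.

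What the paper does differently, and what your proposal is missing, is the doubling trick imported from Bansal--Khot: the long codes live on $\setq^{2L}$, and for each $i\in[L]$ the \emph{pair} of columns $(X^i,X^{i+L})$ is drawn uniformly from $S=\{(y,y'): y\in\{a+\bar b\}\ \vee\ y'\in\{a+\bar b\}\}$. This buys both halves of the theorem at once: in completeness, a single dictator no longer covers everything, but the two dictators $x_{\ell(v)}$ and $x_{\ell(v)+L}$ per \UG\ labeling do, which is exactly how $c$-coverable \UG\ becomes a $2c$-coverable $P$-\csp; and in soundness, $S$ (viewed inside $(\setq^2)^k$) is connected (\lref[Claim]{claim:rel_conn}), which is precisely what lets the paper apply \lref[Theorem]{thm:invariance} to the averaged indicator $f_u(x)=\E_{w\in N(u)}[f_w(x\circ\pi_{uw})]$ and decode a \UG\ labeling. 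Your completeness discussion also hints at the confusion: with your test a single (folded) dictator per labeling already satisfies every constraint at a good $u$, so the "two shift representatives" are not actually needed --- a sign that the test is too rigid to be sound. To repair the proposal you would need to reintroduce genuine randomness in the columns, e.g.\ the paper's paired-column distribution $S$ (or some other connected-support distribution whose columns are only required to be in $P$ "one of the two halves at a time"), and rerun the analysis on the doubled domain.
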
 

\begin{proof}[Proof of {\expref{Theorem}{thm:ug-hard}}]
	Let $Q$ be an arbitrary
	% non odd predicate,  %% ToC edit1 -->
	non-odd predicate  %% <--
	i.e, $Q\subseteq \setq^k
	\setminus \{h+\bar b\mid b\in \setq\}$ for some $h \in \setq^k$. Consider
	the predicate $Q' \subseteq \setq^k$ defined as $Q' := Q -h := \{ g -
	h \mid g\in Q\}$, where the operation `$-$' refers to coordinate-wise
	substraction performed$\pmod q$. Observe
	that $Q' \subseteq \nae$. Given any $Q'$-CSP instance $\Phi$ with
	literals function $L(e) = \overline{0}$, consider the $Q$-CSP instance
	$\Phi_{Q'\to Q}$ with literals function $M$ given by $M(e) :=
	\overline{h}, \forall e$. It has the same constraint graph as
	$\Phi$. Clearly, $\Phi$ is $c$-coverable iff $\Phi_{Q'\to Q}$ is
	$c$-coverable. Thus, it suffices to prove the result for any predicate
	$Q' \subseteq \nae$ with literals function $L(e) = \overline
	0$\footnote{This observation~\cite{DinurK2013} that the cover-$Q$
		problem for any non-odd predicate $Q$ is equivalent to the
		cover-$Q'$ problem where $Q' \subseteq \nae$ shows the centrality of
		the $\nae$ predicate in understanding the covering complexity of any
		non-odd predicate.}. We will consider two cases, both of which will
	follow from \expref{Theorem}{thm:basic-ug-hard}.
	
	% The
	% proof will be a reduction from an instance of a $P$-\csp\ where $P$
	% satisfies $\nae \supset P \supseteq \{a +\bar b : b \in \setq \}$ for
	% some $a\in \setq^k$ with the literal function $L(e)=\bar 0$ to a
	% $Q$-\csp\ instance with literal function $M(e)$. We first convert the
	% $P$-\csp\ instance to a $Q'$-\csp\ instance where $Q':= Q -c\subseteq
	% \nae$ with the same constraint graph. The $Q'$-\csp\ instance is then
	% converted to the $Q$-\csp\ instance with same constraint graph and  literal function
	% $M(e) = c$. It is easy to see that the covering numbers of both the
	% $Q$-\csp\ and $Q'$-\csp\ instances are the same. 
	
	Suppose the predicate
	$Q'$ satisfies $Q' \supseteq \{a +\bar b \mid b \in \setq \}$ for some
	$a\in [q]^k$. Then this predicate $Q'$ satisfies the hypothesis of
	\expref{Theorem}{thm:basic-ug-hard} and the theorem follows if we show
	that the soundness guarantee of \expref{Theorem}{thm:basic-ug-hard}
	implies that in \expref{Theorem}{thm:ug-hard}. Any instance in the NO case of
	\expref{Theorem}{thm:basic-ug-hard}, is not $t:=\log_q
	(1/\delta)$-coverable even on the $\nae$-\csp\ instance with the same
	constraint graph. This is because any $t$-covering for the
	$\nae$-\csp\ instance gives a coloring of the constraint graph using
	$q^t$ colors, by choosing the color of every variable to be a string
	of length $t$ and having the corresponding assignments in each
	position in $[t]$. Hence the $Q'$-\csp\ instance is also not
	$t$-coverable% and we have proved the covering hardness of $Q'$
	.

	Suppose $Q' \not\supseteq \{a +\bar b \mid b \in \setq \}$ for all $a \in
	\setq^k$. Then consider the predicate $P = \{ a +\bar b \mid a \in Q', b
	\in \setq\} \subseteq \nae$. Notice that $P$ satisfies the conditions
	of \expref{Theorem}{thm:basic-ug-hard} and if the $P$-\csp\ instance is
	$t$-coverable then the $Q'$-\csp\ instance is $qt$-coverable. Hence
	a YES instance of \expref{Theorem}{thm:basic-ug-hard} maps to a $2cq$-coverable
	$Q$-\csp\ instance and NO instance maps to an instance with covering
	number at least $\log_q(1/\delta)$, where the latter follows from the fact that the covering number of the instance as a $Q'$-CSP is at least the covering number of it as a $P$-CSP.
\end{proof}

\paragraph{}

We now prove \expref{Theorem}{thm:basic-ug-hard} by giving a reduction
from an instance $G=(U,V,E,[L], [L],$ $\{\pi_e\}_{e\in E})$ of {\UG} as
in \expref{Definition}{def:label-cover}, to an instance $\mathcal G =
(\mathcal V, \mathcal E)$ of a $P$-{\csp} for any predicate $P$ that
satisfies the conditions mentioned. As stated in the introduction, we
adapt the long-code test of Bansal and Khot~\cite{BansalK2010} for
proving the hardness of finding independent sets in almost $k$-partite
$k$-uniform hypergraphs to our setting. The set of variables $\mathcal
V$ is $V\times \setq^{2L}$. Any assignment to $\mathcal V$ is given by
a set of functions $f_v:\setq^{2L} \rightarrow \setq$, for each $v\in
V$. The set of constraints $\mathcal E$ is given by the following test
which checks whether $f_v$'s are long codes of a good labeling to
$V$. There is a constraint corresponding to all the variables that are
queried together by the test.

\paragraph{Long Code Test $\calT_1$}

\begin{enumerate}
	
	\item Choose $u\in U$ uniformly and $k$ neighbors $w_1, \ldots , w_k \in V$ 
	of $u$ uniformly and independently at random. 
	
	\item Choose  a random matrix $X$ of dimension $k \times 2L$
	as follows. Let $X^i$ denote the $i$-th column of $X$. Independently for each $i\in [L]$, choose 
	$(X^i,X^{i+L}) $ uniformly at random from the set
	\begin{align}\label{eq:defS}
		S &:= \left\{(y, y') \in \setq^k \times\setq^k \ |\  y\in \{a + 
		\bar b\mid b\in \setq \} \vee y' \in \{a + \bar b\mid b\in \setq
		\}\right\}.
	\end{align}
	\item Let $x_1,\cdots,x_k$ be the rows of matrix $X$. Accept iff
	$$ \left(f_{w_1} (x_1\circ \pi_{uw_1}), f_{w_2} (x_2\circ \pi_{uw_2}), \cdots, f_{w_k}
	(x_k\circ \pi_{uw_k}) \right) \in P,$$
	where $x\circ \pi$ is the string defined as $(x \circ \pi)(i) := x_{\pi(i)}$
	for $i \in [L]$ and $(x\circ \pi)(i) := x_{\pi(i-L)+L}$ otherwise.
\end{enumerate}

Before plunging into the formal analysis of the reduction, let us see the intuition behind the test.
The test is designed so that if the functions $f_{w_1}, f_{w_2},
\ldots, f_{w_k}$ are dictator functions satisfying the UG-constraints
associated with the common neighbor $u$ or their $L$ \emph{shifts},
then the test passes. This is obvious as the bit pattern from the
locations queried is either $y$ or $y'$, one of which belongs to the
predicate $P$. This gives a 2-covering of the instance: one corresponds to the actual dictator functions satisfying the UG-constraints and another consists of $L$ shifts of those dictator functions. Another property of
the set $S$  %% ToC edit1 <-- the set $S$, or the string $S$, %% AU edit2 (AB)
%% <-- added the word 'set'
%%  or is this plural?  the sets $S$ ?  AU CHECK1 
% which  %% ToC edit1 -->
that   %% <--
is used in the test is that it defines a probability space
% which %% ToC edit1 -->
that   %% <--
is \emph{connected}. This will be used in the soundness analysis of the test. We now prove the completeness and the soundness of the reduction.

\begin{lemma}[Completeness]
	
	If  the \UG\ instance $G$ is $c$-coverable then the $P$-{\csp} instance 
	$\mathcal G$ is $2c$-coverable.
\end{lemma}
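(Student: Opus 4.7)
The plan is to translate the cover of the \UG\ instance $G$ into a $2c$-cover of the $P$-\csp\ instance $\calG$ by using, for each UG-labeling, two ``shifted dictator'' assignments --- one keyed to the first half $[L]$ and one to the second half $\{L+1,\dots,2L\}$ of the $[q]^{2L}$ domain. The whole proof is driven by the defining property of the set $S$ in \eqref{eq:defS}: each pair of columns $(X^i,X^{i+L})$ is chosen so that at least one of the two lands in $\{a+\bar b\mid b\in\setq\}\subseteq P$. This is exactly what the ``two copies'' of each labeling are designed to exploit.

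More concretely, let $\ell_1,\dots,\ell_c : U\cup V\to[L]$ be the $c$ labelings that witness $c$-coverability of $G$, i.e.\ for every $u\in U$ there is some $j\in[c]$ with $\pi_{uw}(\ell_j(w))=\ell_j(u)$ for all neighbors $w$ of $u$. For each $j\in[c]$ and $s\in\{0,1\}$, define an assignment $f^{j,s}$ to $\calV=V\times\setq^{2L}$ by the dictator rule
\[
f^{j,s}_v(x) \;:=\; x_{\,\ell_j(v)+sL}\qquad (x\in\setq^{2L},\ v\in V).
\]
The claim is that the $2c$ assignments $\{f^{j,s} : j\in[c],\ s\in\{0,1\}\}$ cover $\calG$.

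To verify this, I would fix any constraint of $\calG$, which is generated by some $u\in U$, neighbors $w_1,\dots,w_k$ and a matrix $X\in\setq^{k\times 2L}$ with $(X^i,X^{i+L})\in S$ for each $i\in[L]$. Choose $j$ so that $\ell_j$ satisfies every edge $(u,w_i)$. Then by the definition of $x\circ\pi$, for $s\in\{0,1\}$ we get
\[
f^{j,s}_{w_i}(x_i\circ \pi_{uw_i}) \;=\; (x_i\circ\pi_{uw_i})(\ell_j(w_i)+sL) \;=\; x_{i,\,\pi_{uw_i}(\ell_j(w_i))+sL}\;=\; x_{i,\,\ell_j(u)+sL},
\]
so the $k$-tuple output by the test is exactly the column $X^{\ell_j(u)+sL}$ of $X$. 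Since $(X^{\ell_j(u)},X^{\ell_j(u)+L})\in S$, at least one of these two columns lies in $\{a+\bar b\mid b\in\setq\}\subseteq P$, meaning $f^{j,0}$ or $f^{j,1}$ covers this constraint. Thus the $2c$ assignments together cover every constraint.

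There is no real obstacle here beyond bookkeeping: the construction of $S$ was tailor-made to guarantee that the column-pair containing the ``interesting'' coordinate $\ell_j(u)$ always has one side in $\{a+\bar b\}$, and the pair of dictators $(f^{j,0},f^{j,1})$ simply reads off the two halves of that pair. The only subtlety worth checking carefully is that the extension of the \UG\ projection $\pi_{uw}$ from $[L]$ to $[2L]$ via $(x\circ\pi)(i+L)=x_{\pi(i)+L}$ correctly carries a first-/second-half dictator at $v$ to a first-/second-half coordinate at $u$, which is what lets us freely pick $s$ without disturbing the dictator-to-dictator correspondence.
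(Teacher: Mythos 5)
Your proof is correct and is essentially the paper's own argument: the $2c$ assignments $f^{j,0},f^{j,1}$ are exactly the paper's $f^i_v(x)=x_{\ell_i(v)}$ and $g^i_v(x)=x_{\ell_i(v)+L}$, and the verification that the test's output tuple equals the column $X^{\ell_j(u)+sL}$, combined with the defining property of $S$, is the same bookkeeping the paper does.
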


\begin{proof}
	Let $\ell_1, \ldots, \ell_c:U\cup V \rightarrow [L]$ be a $c$-covering for $G$
	as described in \expref{Definition}{def:label-cover}.
	We will show that the
	$2c$ assignments given by $f_v^i(x) := x_{\ell_i(v)},
	g_v^i(x):= x_{\ell_i(v)+L}, i = 1, \dots, c$  form a 
	$2c$-covering of $\mathcal G$. Consider any $u \in U$ and let $\ell_i$ be the
	labeling that covers all the edges incident on $u$. For any $(u,w_j)_{j\in 
		\{1,\cdots,k\}} \in E$ and $X$ chosen by the long code test
	$\mathcal T_1$, the vector $(f_{w_1}^i
	(x_1\circ \pi_{uw_1}),\cdots, f_{w_k}^i(x_k \circ
	\pi_{uw_k}))$ gives the 
	$\ell_i(u)$-th column of $X$. Similarly the above expression corresponding to 
	$g^i$ gives the $(\ell_i(u)+ L)$-th column of the matrix $X$. Since, for all 
	$i\in [L]$, either $i$-th column  or $(i+L)$-th column of $X$ contains element
	from $\{a +\bar{b} \mid b \in \setq\} \subseteq P$,  either $(f_{w_1}^i(x_1\circ \pi_{uw_1})
	,\cdots, f_{w_k}^i(x_k \circ \pi_{uw_k})) \in P$ or $(g_{w_1}^i(x_1\circ \pi_{u
		w_1}),\cdots, g_{w_k}^i(x_k \circ \pi_{uw_k})) \in P$. Hence the set of
	$2c$ assignments $\{f_v^i, g_v^i\}_{i\in \{1,\cdots,c\}}$ covers all constraints in $\calG$.
\end{proof}

To prove soundness, we show that the set $S$, as
defined in \expeqref{Equation}{eq:defS}, is connected, so
that \expref{Theorem}{thm:invariance} is applicable. 
For this, we view $S \subseteq [q]^k \times [q]^k$ as a subset of
$(\setq^2)^k$ as follows: the element $(y,y') \in S$ is mapped to the element
$((y_1,y'_1),\cdots,(y_k,y'_k)) \in ([q]^2)^k$. 
\begin{claim}
	\label{claim:rel_conn}
	Let $\Omega = \setq^2$. The set $S \subset \Omega^k$
	is connected.
\end{claim}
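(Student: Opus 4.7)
The plan is to exhibit a ``hub'' element $h \in S$ and show that every element of $S$ is connected to $h$ via a path that stays within $S$. Write $T_1 := \{(y,y') \in \setq^k \times \setq^k : y \in \{a+\bar b : b \in \setq\}\}$ and $T_2 := \{(y,y') : y' \in \{a+\bar b : b \in \setq\}\}$, so that $S = T_1 \cup T_2$. Since a coordinate of $\Omega^k = (\setq^2)^k$ is a pair $(y_i,y'_i)$, ``changing one coordinate'' means changing $y_i$ and $y'_i$ simultaneously (to arbitrary new values) while leaving $y_j,y'_j$ fixed for all $j \neq i$.

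The hub I would use is $h := (a,a)$, viewed as the $\Omega^k$-element $((a_1,a_1),\ldots,(a_k,a_k))$. Since $a = a+\bar 0$, the element $h$ lies in $T_1 \cap T_2$. Now fix an arbitrary $(y,y') \in S$; by symmetry, assume $(y,y') \in T_1$, so $y = a + \bar b$ for some $b \in \setq$. The path from $(y,y')$ to $h$ would be built in two stages:

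First, change $y'$ to $a$ one coordinate at a time: for each $i \in [k]$ in turn, update $(y_i,y'_i)$ to $(y_i, a_i)$. Throughout this phase the first component remains $y = a+\bar b$, so every intermediate element still lies in $T_1 \subseteq S$. At the end of this stage we are at $(a+\bar b,\, a)$, which lies in $T_1 \cap T_2$. Second, change $y$ from $a+\bar b$ to $a$ one coordinate at a time: for each $i \in [k]$ in turn, update $(y_i,y'_i)$ from $(a_i+b, a_i)$ to $(a_i,a_i)$. Throughout this phase the second component remains $a = a+\bar 0$, so every intermediate element lies in $T_2 \subseteq S$. The endpoint is $h$.

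There is essentially no obstacle here; the only thing to check is that the natural ``move $y'$ first, then move $y$'' path goes through the intersection $T_1 \cap T_2$ rather than trying to interpolate directly between two elements of $T_1$ (which would fail, since a single coordinate flip within $T_1$ changing $b_1 \neq b_2$ would destroy the ``constant difference from $a$'' property of $y$). Routing through the hub $h$ avoids this issue, and the same two-stage argument applied to elements of $T_2$ shows every point of $S$ connects to $h$, so $S$ is connected.
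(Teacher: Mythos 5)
Your proof is correct and is essentially the paper's argument: in both, the component lying in $\{a+\bar b \mid b\in\setq\}$ anchors membership in $S$ while the other component is changed one coordinate at a time, the only difference being that you route every point through the hub $(a,a)$ whereas the paper connects two given points directly through the mixed intermediate $z=(x^1,y^2)$. (In either version one should drop any no-op steps so that consecutive elements differ in exactly one coordinate, but this is cosmetic.)
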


\begin{proof}
	Consider any $x:=(x^1,x^2),y := (y^1,y^2) \in S
	\subset [q]^{k} \times [q]^{k}$. Suppose both $x^1,y^1
	\in \{a + \bar b\mid b\in \setq \}$, then it is easy to
	come up with a sequence of strings belonging to $S$,
	starting with $x$ and ending with $y$ such that 
	consecutive strings differ in at most $1$
	coordinate,. Now
	suppose $x^1,y^2 \in \{a + \bar b\mid b\in \setq \}$.
	First we come up with a sequence from $x$ to $z:=(z^1, z^2)$ such
	that $z^1:=x^1$ and $z^2=y^2$, and then another
	sequence for $z$ to $y$.
\end{proof}

\begin{lemma}[Soundness]
	
	\label{lemma:ug_soundness}
	For every constant $\delta>0$, there exists a constant $s$ such that, if $G$ is 
	at most $s$-satisfiable then $\calG$ does not have an independent set of size 
	$\delta$.
\end{lemma}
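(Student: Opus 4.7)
The plan is to argue by contrapositive: given an independent set $I \subseteq \calV$ of fractional size $\delta$ in $\calG$, I will construct a labeling of the \UG\ instance $G$ that satisfies at least an $s(\delta) > 0$ fraction of edges, where $s(\delta)$ depends only on $\delta, q, k$. Choosing $s < s(\delta)$ will then yield the desired contradiction. Define the indicator $f_v : [q]^{2L} \to \{0,1\}$ by $f_v(x) := \mathbf{1}[(v,x) \in I]$, so $\E_v \E_x[f_v(x)] = \delta$. Every hyperedge of $\calG$ is a query of $\calT_1$, so $I$ being independent forces $\prod_{j=1}^k f_{w_j}(x_j \circ \pi_{uw_j}) = 0$ identically over all queries $(u, w_1, \dots, w_k, X)$. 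Since $w_1, \dots, w_k$ are i.i.d.\ given $u$ and independent of $X$, setting
\[
  g_u(y) \;:=\; \E_{w \sim u}\bigl[f_w(y \circ \pi_{uw})\bigr]
\]
collapses this into the single-function identity $0 = \E_u \E_X \bigl[\prod_{j=1}^k g_u(x_j)\bigr]$. Bi-regularity of $G$ gives $\E_u \E_y[g_u(y)] = \delta$, so an averaging argument furnishes $U' \subseteq U$ with $|U'|/|U| \ge \delta/2$ and $\E[g_u] \ge \delta/2$ for every $u \in U'$.

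I then reparametrize each column pair $(X^i, X^{i+L}) \in [q]^k \times [q]^k$ as a single element of $\Omega^k$ with $\Omega := [q]^2$, so that $X$ has distribution $\mu^{\otimes L}$ for $\mu$ uniform on $S \subset \Omega^k$. By \lref[Claim]{claim:rel_conn}, $S$ is connected, and the minimum atom probability $1/|S|$ is a positive constant $\alpha(q,k)$; moreover, each outcome $(p,p') \in \Omega$ has exactly $2q^{k-1} - 1$ preimages in $S$, making the single-coordinate marginal of $\mu$ uniform on $\Omega$, so that the quantity $\E_\mu[g_u]$ required by \lref[Theorem]{thm:invariance} coincides with $\E_y[g_u(y)]$ computed under the uniform distribution on $[q]^{2L}$. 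Applying \lref[Theorem]{thm:invariance} with tolerance $\eta := \underline{\Gamma}(\delta/2)/2$ furnishes constants $\tau = \tau(\delta, q, k)$ and $d = d(\delta, q, k)$ such that if all $\Inf_i^{\le d}[g_u] \le \tau$ then $\E_X[\prod_j g_u(x_j)] \ge \underline{\Gamma}(\delta/2) - \eta > 0$, contradicting the vanishing for $u \in U'$. Hence $L(u) := \{i : \Inf_i^{\le d}[g_u] > \tau\}$ is non-empty (with $|L(u)| \le d/\tau$) for every $u \in U'$.

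Finally, set $L(v) := \{i : \Inf_i^{\le d}[f_v] > \tau/2\}$, $|L(v)| \le 2d/\tau$, for each $v \in V$. Since $\pi_{uw}$ is a bijection in the \UG\ setting, its action on $[q]^{2L}$ permutes coordinates, and an Efron--Stein / Jensen computation yields $\Inf_i^{\le d}[g_u] \le \E_{w \sim u}[\Inf_{\pi_{uw}^{-1}(i)}^{\le d}[f_w]]$; a Markov-type bound then shows that, for every $u \in U'$ and $i \in L(u)$, a random neighbor $w$ of $u$ satisfies $\pi_{uw}^{-1}(i) \in L(w)$ with probability at least $\tau/2$. Labeling each vertex uniformly at random from its influential set (arbitrarily when empty) therefore satisfies an $\Omega(\delta \tau^2 / d)$ fraction of the \UG\ edges, which is a positive constant $s(\delta)$. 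The principal obstacle is the invariance step: reducing to a single function $g_u$ per left-vertex (rather than handling $k$ distinct functions $f_{w_j}$) is what lets \lref[Theorem]{thm:invariance} apply in its standard one-function form, and is enabled by the conditional i.i.d.\ structure of the $w_j$'s together with the uniformity of $\mu$'s one-coordinate marginals; this is precisely the simplification (cf.\ Khot~\cite{Khot2002b} and Bansal--Khot~\cite{BansalK2010}) that replaces the multi-assignment rejection-probability analysis of Dinur--Kol with a clean independent-set argument.
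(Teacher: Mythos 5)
Your proposal is correct and follows essentially the same route as the paper's proof: indicator functions $f_v$, averaging over neighbors to get a single function per left vertex, grouping coordinate pairs so that \lref[Claim]{claim:rel_conn} and \lref[Theorem]{thm:invariance} apply, and then decoding influential coordinates into a randomized labeling satisfying an $\Omega(\delta\tau^2/d)$ fraction of edges. The only difference is cosmetic: you explicitly verify the uniformity of the single-coordinate marginal of the test distribution (the $2q^{k-1}-1$ count), a point the paper leaves implicit.
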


\begin{proof}
	Let $I \subseteq \calV$ be an independent set of fractional
	size $\delta$ in the constraint graph. For every variable $v
	\in V$, let $f_v : \setq^{2L} \rightarrow \{0,1\}$ be the
	indicator function of the independent set restricted to the
	vertices that correspond to $v$. For a vertex $u\in U$, let
	$N(u) \subseteq V$ be the set of neighbors of $u$ and define $f_u(x)
	:= \E_{w\in N(u)} [f_w(x\circ \pi_{uw})] $. Since $I$ is an
	independent set, we have
	\begin{equation}
		\label{eqn:ug_frachyp}
		0 = \E_{u, w_i,\ldots,w_k} \E_{X \sim \calT_1}\left[ \prod_{i=1}^k f_{w_i}(x_i 
		\circ \pi_{uw_i}) \right]= \E_u\E_{X\sim \calT_1}\left[\prod_{i=1}^k f_u(x_i)
		\right] .
	\end{equation}
	Since the bipartite graph $(U,V,E)$ is left regular and $|I|\geq \delta|V|$, 
	we have $\E_{u,x}[f_u(x)] \geq \delta$. By an averaging argument, for at least 
	$\nicefrac{\delta}{2}$ fraction of the vertices $u\in U$, $\E_x[f_u(x)] \geq \nicefrac{\delta}{2}$.
	Call a vertex $u\in U$ \emph{good} if it satisfies this property. A string 
	$x\in \setq^{2L}$ can be thought as an element from $(\setq^2)^L$ by grouping 
	the pair of coordinates $x_i, x_{i+L}$. Let $\overline{x} \in (\setq^2)^L $ denotes 
	this grouping of $x$, \ie, $j$-th coordinate of $\overline{x}$ is $(x_j, x_{j+L})$ is distributed u.a.r. in $\setq^2$. With this grouping, the function $f_u$ can be viewed  
	as $f_u : (\setq^2)^L\rightarrow \{0,1\}$. From
	\expeqref{Equation}{eqn:ug_frachyp}, we have that for any $u\in U$,
	$$\E_{X\sim \calT_1}\left[\prod_{i=1}^k f_u(\overline{x}_i)\right] = 0. $$ 
	By \expref{Claim}{claim:rel_conn}, for all $j\in [L]$ the tuple $((\overline{x}_
	1)_j, \ldots, (\overline{x}_k)_j)$ (corresponding to columns $(X^j, $$X^{j+L})$ 
	of $X$) is sampled from a distribution whose support is a connected set. 
	Hence for a good vertex $u \in U$, we can apply \expref{Theorem}{thm:invariance}
	with $\epsilon = \underline{\Gamma}(\delta/2)/2$ to get that there exists $j 
	\in [L], d \in \N, \tau > 0$ such that $\Inf_j^{\leq d} (f_u) > \tau $. We will use this fact to give
	a randomized labeling for $G$. Labels for vertices $w \in V, u \in U$ will be chosen
	uniformly and independently from the sets
	$$\Lab(w) := \left\{ i\in [L] \mid \Inf_i^{\leq d} (f_w) \geq \frac{\tau}{2}
	\right\},\Lab(u) := \left\{ i\in [L] \mid \Inf_i^{\leq d} (f_u) \geq \tau
	\right\}.$$
	By the above argument
	(using \expref{Theorem}{thm:invariance}), we
	have that for a good vertex $u$,
	$\Lab(u)\neq \emptyset$. Furthermore, since the sum of degree $d$ influences is at most $d$, the above
	sets have size at most $ 2d/\tau$. Now, for any $j\in \Lab(u)$, we have
	\begin{align*}
		\tau &< \Inf_j^{\leq d} [f_u] = \sum_{S: j \in S, |S| \leq d} \|f_{u,S}\|^2 = \sum_{S: j \in S, |S| \leq d} \left\|\E_{w\in N(u)} \left[f_{w,\pi_{uw}^	{-1}(S)}\right]\right\|^2\quad(\text{By Definition.})\\
		& \leq \sum_{S: j \in S , |S| \leq d} \E_{w\in N(u)} \left\|f_{w,\pi_{uw}^
			{-1}(S)}\right\|^2\quad  = \E_{w\in N(u)} \Inf_{\pi_{uw}^{-1}(j)}^{\leq d}[ f_w ].\quad (\text{By Convexity of square.})
	\end{align*}
	Hence, by another averaging argument, there exists at least
	$\nicefrac{\tau}{2}$ fraction of neighbors $w$ of $u$ such that $
	\Inf_{\pi_{uw}^{-1}(j)}^{\leq d} ( f_w ) \geq \nicefrac{\tau}{2}$
	and hence $\pi^{-1}_{uw}(j) \in \Lab(w)$. Therefore, for a
	good vertex $u\in U$, at least
	$\nicefrac{\tau}{2}\cdot\nicefrac{\tau}{2d}$ fraction of
	edges incident on $u$ are satisfied in expectation. Also, at least
	$\nicefrac{\delta}{2}$ fraction of vertices in $U$ are good, it
	follows that the expected fraction of edges that are satisfied
	by this random labeling is at least
	$\nicefrac{\delta}{2}\cdot\nicefrac{\tau}{2}\cdot\nicefrac{\tau}{2d}$. Choosing
	$s < \nicefrac{\delta\tau^2}{8d}$ 
	completes the proof.
\end{proof}

%% $\mathrm{NP}$ Hardness

\section{NP Hardness of Covering CSPs}\label{sec:np-hard}

In this section, we prove \expref{Theorem}{thm:np-hard}. We give a
reduction from an instance of a {\LC}, $G=(U,V,E,[L],
[R],\{\pi_e\}_{e\in E})$ as in \expref{Definition}{def:label-cover}, to
a $P$-CSP instance $\mathcal G = (\mathcal V, \mathcal E)$ for any
predicate $P$ that satisfies the conditions mentioned in
\expref{Theorem}{thm:np-hard}. The reduction and proof is similar to that of Dinur
and Kol~\cite{DinurK2013}. The main difference is that they used a
test and invariance principle very specific to the $\fourlin$
predicate, while we show that a similar analysis can be performed under
milder conditions on the test distribution.

We assume that $R=dL$ and $\forall i \in [L], e \in E,
|\pi_e^{-1}(i)|=d$. This is done just for simplifying the notation and
the proof does not depend upon it. The set of variables $\mathcal V$
is $V\times \{0,1\}^{2R}$. Any assignment to $\mathcal V$ is given by
a set of functions $f_v:\{0, 1\}^{2R} \rightarrow \{0,1\}$, for each
$v\in V$. The set of constraints $\mathcal E$ is given by the
following test,   %% ToC edit1 <-- comma added 
which checks whether $f_v$'s are long codes of a good
labeling to $V$.

\paragraph{Long Code test  $\mathcal T_2$}

\begin{enumerate}
	
	\item Choose $u\in U$ uniformly and $v,w \in V$ neighbors of
	$u$ uniformly and independently at random. For $i \in [L]$, define the sets $B_{uv}(i):=
	\pi_{uv}^{-1}(i), B'_{uv}(i):= R+\pi_{uv}^{-1}(i) $ and similarly for $w$.
	
	\item \label{step:2} Choose matrices $X,Y$ of dimension $k \times 2dL$ as
	follows. For $S\subseteq [2dL]$, we denote by $X|_{S}$ the
	submatrix of $X$ restricted to the columns $S$. Independently for each $i\in [L]$, choose $c_1 \in \{0,1\}$ uniformly and
	
	\begin{enumerate}
		
		\item if $c_1=0$, choose $\left(X|_{B_{uv}(i)\cup B'_{uv}(i)}, Y|_{B_{uw}(i)\cup B'_{uw}(i)}\right)$  from $\calP_0^{\otimes 2d}\otimes \calP_1^{\otimes 2d}$,
		\item if $c_1=1$, choose $\left(X|_{B_{uv}(i)\cup B'_{uv}(i)}, Y|_{B_{uw}(i)\cup B'_{uw}(i)}\right)$ from $\calP_1^{\otimes 2d}\otimes\calP_0^{\otimes 2d}$.
	\end{enumerate} 
	
	\item \label{step:3} Perturb $X, Y$ as follows. Independently for each 
	$i \in [L]$, choose $c_2 \in \{*,0,1\}$ as follows:
	$$\Pr[c_2 = *] = 1-2\epsilon, \mbox{ and } \Pr[c_2 = 1] = \Pr[c_2 = 0 ] = \epsilon.$$ 
	Perturb the $i$-th matrix block $\left(X|_{B_{uv}(i)\cup
		B'_{uv}(i)}, Y|_{B_{uw}(i)\cup B'_{uw}(i)}\right)$ as follows:
	\begin{enumerate}
		\item\label{step:30} if $c_2 = *$, leave the matrix
		block $\left(X|_{B_{uv}(i)\cup B'_{uv}(i)}, Y|_{B_{uw}(i)\cup B'_{uw}(i)}\right)$
		unperturbed,
		\item\label{step:3a} if $c_2=0$, choose $\left(X|_{ B'_{uv}(i)} ,Y|_{ B'_{uw}(i)}\right)$ uniformly from $  \{0,1\}^{k \times d} \times \{0,1\}^{k \times d}$,
		
		\item\label{step:3b} if $c_2=1$, choose $\left(X|_{B_{uv}(i)} ,Y|_{B_{uw}(i)}\right)$  uniformly from $\{0,1\}^{k \times d} \times \{0,1\}^{k \times d}$.
	\end{enumerate} 
	
	\item\label{step:4} Let $x_1,\cdots,x_k$ and $y_1,\cdots, y_k$ be t
	he rows
	of the matrices $X$ and $Y$, respectively. Accept if $$
	\left(f_v(x_1 ),\cdots, f_v(x_k ), f_w(y_1 ),\cdots,f_w(y_k)
	\right) \in P .$$
	
\end{enumerate}

\begin{lemma}[Completeness]
	
	If $G$ is an YES instance of {\LC}, then there exists $f,g$
	such that each of them covers $1-\epsilon$ fraction of
	$\mathcal E$ and they together cover all of $\mathcal E$.
	
\end{lemma}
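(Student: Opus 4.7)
The plan is to exploit the $1$-coverability of $G$ via a label-cover labeling and to construct two ``dictator-like'' assignments, one reading from each half of $\{0,1\}^{2R}$. Concretely, let $\ell: U \cup V \to [R]$ be a labeling that $1$-covers $G$, so that $\pi_{uv}(\ell(v)) = \ell(u)$ for every edge $(u,v) \in E$. Define the two assignments by $f_v(x) := x_{\ell(v)}$ and $g_v(x) := x_{\ell(v)+R}$ for each $v \in V$. Intuitively, $f$ reads bits from the first copy of $R$ coordinates while $g$ reads from the second copy, and each dictator lines up with the shared label $\ell(u)$ at $u$.

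Fix a test execution with parameters $u, v, w, c_1, \{c_2(i)\}_{i\in [L]}, X, Y$. Because $\ell(v) \in B_{uv}(\ell(u))$ and $\ell(w) \in B_{uw}(\ell(u))$, every bit evaluated by $f$ on this constraint comes from a column in the $\ell(u)$-th block of $X$ or $Y$; the assignment $g$ is analogously localized to the primed blocks $B'_{uv}(\ell(u))$ and $B'_{uw}(\ell(u))$. Thus the perturbation in Step~\ref{step:3} in any block $i \neq \ell(u)$ cannot affect either assignment. In the unperturbed Step~\ref{step:2}, when $c_1 = 0$, the $\ell(v)$-th column of $X$ is a single draw from $\calP_0$ and the $\ell(w)$-th column of $Y$ is a single draw from $\calP_1$, so the tuple $(f_v(x_1), \ldots, f_v(x_k), f_w(y_1), \ldots, f_w(y_k))$ has the form $a \cdot b$ with $a \in \supp(\calP_0)$, $b \in \supp(\calP_1)$, which lies in $P$ by hypothesis. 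Symmetrically, $c_1 = 1$ yields $b \cdot a \in P$, and the analogous argument applied to the primed columns shows that the tuple produced by $g$ also lies in $P$.

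It remains to account for the perturbation in the single block $\ell(u)$. Step~\ref{step:3b} (case $c_2(\ell(u)) = 1$, probability $\epsilon$) replaces $X|_{B_{uv}(\ell(u))}$ and $Y|_{B_{uw}(\ell(u))}$ by uniformly random columns, potentially breaking coverage by $f$, but leaves the primed columns read by $g$ untouched; Step~\ref{step:3a} is the mirror statement swapping the roles of $f$ and $g$. Hence $f$ covers the constraint whenever $c_2(\ell(u)) \in \{*, 0\}$ (probability $1-\epsilon$), $g$ covers it whenever $c_2(\ell(u)) \in \{*, 1\}$ (also $1-\epsilon$), and the three values of $c_2(\ell(u))$ together exhaust all cases, so $\{f, g\}$ cover every constraint. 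I do not expect any real obstacle: the hypotheses on $\calP_0, \calP_1$ and the structure of the test are arranged precisely so that this block-localized case analysis closes out cleanly.
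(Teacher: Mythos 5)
Your proof is correct and follows essentially the same route as the paper: the same pair of dictator assignments $f_v(x)=x_{\ell(v)}$, $g_v(x)=x_{R+\ell(v)}$, the observation that the relevant columns lie in block $\ell(u)$ and are jointly distributed as $\calP_0\otimes\calP_1$ or $\calP_1\otimes\calP_0$ so the resulting tuple is $a\cdot b$ or $b\cdot a\in P$, and the fact that the noise re-randomizes the unprimed and primed columns in mutually exclusive events ($c_2=1$ vs.\ $c_2=0$), so each assignment fails with probability at most $\epsilon$ and the two together cover every constraint. Your write-up is just slightly more explicit in the case analysis over $c_2\in\{*,0,1\}$ than the paper's.
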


\begin{proof}
	Let $\ell:U\cup V \rightarrow [L] \cup [R]$ be a labeling to
	$G$ that satisfies all the constraints. Consider the
	assignments $f_v(x) := x_{\ell(v)} $ and $g_v(x):=
	x_{R+\ell(v)}$ for each $v \in V$. First consider the
	assignment $f$. For any $(u,v),(u,w) \in E$
	and $x_1,\cdots,x_k,y_1,\cdots,y_k$ chosen by the long code
	test $\mathcal T_2$, $(f_v(x_1 ),\cdots, f_v(x_k )),$ $
	(f_w(y_1),\cdots,f_w(y_k))$ gives the $\ell(v)$-th and
	$\ell(w)$-th column of the matrices $X$ and $Y$, respectively.
	Since $\pi_{uv}(\ell(v)) = \pi_{uw}(\ell (w))$, they are jointly
	distributed either according to $\calP_0 \otimes \calP_1$ or
	$\calP_1 \otimes \calP_0$ after \expref{Step}{step:2}. The probability that
	these rows are perturbed in \expref{Step}{step:3b} is at most $\epsilon$.
	Hence with probability $1-\epsilon$ over the test
	distribution, $f$ is accepted. A similar argument shows
	that the test accepts $g$ with probability $1-\epsilon$.
	Note that in \expref{Step}{step:3}, the columns given by $f,g$, are never
	re-sampled uniformly together. Hence they together cover
	$\mathcal G$.
\end{proof}

Now we will show that if $G$ is a NO instance of \LC\ then no $t$
assignments can cover the $\twoklin$-\csp\ with constraint hypergraph
$\mathcal G$. For the rest of the analysis, we will use $+1,-1$
instead of the symbols $0,1$. Suppose for contradiction, there exist $t$ assignments $f_1,\cdots,f_t : \{ \pm 1 \}^{2R}
\rightarrow \{ \pm 1\}$ that form a $t$-cover to $\mathcal G$. The
probability that all the $t$ assignments are rejected in \expref{Step}{step:4} is
{\small
	\begin{equation}
		\label{eqn:arith-1}
		\E_{u,v,w} \E_{\mathcal T_2} \left[ \prod_{i=1}^t \frac{1}{2} \left(  
		\prod_{j=1}^kf_{i,v}(x_j)  f_{i,w}(y_j )  + 1\right)\right]= 
		\frac{1}{2^t} + \frac{1}{2^t}\sum_{\emptyset\subset S\subseteq \{1,\cdots, t\} }
		\E_{u,v,w} \E_{\mathcal T_2}\left[ \prod_{j=1}^kf_{S,v}(x_j)  f_{S,w}(y_j
		)\right].
	\end{equation}
}
where $f_{S,v}(x) := \prod_{i \in S}f_{i,v}(x)$. Since the $t$
assignments form a $t$-cover, the LHS in
\expeqref{Equation}{eqn:arith-1} is $0$ and hence, there exists an $S \neq
\emptyset$ such that
\begin{equation}
	\label{eq:nphard_ari}
	\E_{u,v,w} \E_{\mathcal T_2}\left[\prod_{j =1}^kf_{S,v}(x_j)  f_{S,w}(y_j)\right]\leq -1/(2^t-1).
\end{equation}
The following lemma shows that this is not possible if $t$ is not too
large, thus proving that there does not a exist $t$-cover.
\begin{lemma}[Soundness]
	
	\label{lemma:soundness_nphard} Let $c_0\in(0,1)$ be the
	constant from \expref{Theorem}{thm:lc-hard} and $S\subseteq \{1,\cdots,t\},
	|S|>0$. If $G$ is at most $s$-satisfiable then
	$$ \E_{u,v,w}~\E_{ X,Y \in \mathcal T_2} \left[\prod_{i =1}^k f_{S,v}(x_i )f_{S,w}(y_i )\right]  \geq -O(ks^{c_0/8}) -2^{O(k)}\frac{s^{(1-3c_0)/8}}{\epsilon^{3/2c_0}}.$$
\end{lemma}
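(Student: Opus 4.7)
The plan is to contrast the correlated expectation with its ``decoupled'' counterpart in which $X$ and $Y$ are sampled independently with the same marginals. For each fixed $u$, by symmetry of the independent neighbors $v,w$, the decoupled expectation factors as $\bigl(\E_{v \sim N(u)}\E_X[\prod_j f_{S,v}(x_j)]\bigr)^2 \geq 0$. Consequently, a sufficiently negative value of the actual expectation would force a large decoupling error, which we bound via the invariance principle for correlated spaces (Theorem~\ref{thm:inv-prin}) and then convert into a good labeling for $G$, contradicting the $s$-satisfiability assumption.

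To apply Theorem~\ref{thm:inv-prin}, fix $u \in U$ and group the $2R = 2dL$ coordinates of $\{0,1\}^{2R}$ into $L$ super-blocks of size $2d$ using the partition $\{B_{uv}(i) \cup B'_{uv}(i) : i \in [L]\}$ induced by $\pi_{uv}$ (and analogously for $w$). This lets us view $f_{S,v}$ and $f_{S,w}$ as functions $H^v, H^w$ on the super-alphabet $(\{0,1\}^{2d})^L$. On each super-column $(X^{(i)},Y^{(i)})$ is sampled from some distribution $\mu$ on $(\{0,1\}^{2d})^k \times (\{0,1\}^{2d})^k$. Since $\calP_0$ and $\calP_1$ have uniform one-coordinate marginals and, conditional on the bit $c_1$ in Step~\ref{step:2}, the sides $X^{(i)}$ and $Y^{(i)}$ are independent, the cross-marginal of $\mu$ on any single pair of coordinates (one from the $X$ side and one from the $Y$ side) is a product of uniforms, which verifies the hypothesis of the invariance principle. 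Applying it with $F = H^v, G = H^w$ yields, for each fixed $u$, a decoupling error bounded by $2^{O(k)}\Gamma\tau$ in the notation of Theorem~\ref{thm:inv-prin}.

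The remaining task is to bound the averaged error $2^{O(k)}\E_{u,v,w}[\Gamma\tau]$ and to turn its ``large'' part into a decoding; the two error terms in the statement arise from this split. We threshold super-columns $i \in [L]$ by their influences on $H^v$ and $H^w$. When at least one side has small influence on super-column $i$, the contribution to $\tau$ is damped by the noise injected in Step~\ref{step:3} (whose eigenvalues on non-trivial Fourier characters within a super-column are at most $1-\epsilon$), and a Cauchy--Schwarz calculation yields the bound $2^{O(k)} s^{(1-3c_0)/8}/\epsilon^{3/(2c_0)}$. When both sides have large influence on $i$, this index serves as a candidate label for $u$; using the smoothness guarantee (item~2 of Theorem~\ref{thm:lc-hard}) with $|\alpha| \approx s^{-c_0}$, we lift $i$ to labels for $v$ and $w$ by picking dominant coordinates of $\pi_{uv}^{-1}(i)$ and $\pi_{uw}^{-1}(i)$, producing a labeling of $G$ that would satisfy more than an $s$-fraction of the edges if the total error exceeded $O(ks^{c_0/8}) + 2^{O(k)} s^{(1-3c_0)/8}/\epsilon^{3/(2c_0)}$, contradicting $s$-satisfiability. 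The main obstacle will be the careful bookkeeping: relating the influences of $H^v$ computed over the non-uniform super-alphabet measure induced by $\mu$ back to Fourier mass of $f_{S,v}$ on uniform $\{0,1\}^{2R}$, and choosing the influence and Fourier-level thresholds so that both error contributions simultaneously match the stated exponents in $s$ and $\epsilon$.
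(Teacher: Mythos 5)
Your high-level skeleton --- compare the correlated expectation with the decoupled one (which is a square, hence nonnegative), bound the decoupling error via Theorem~\ref{thm:inv-prin}, and convert any large error into a labeling of $G$ --- is indeed the paper's strategy, and your verification of the cross-marginal hypothesis of the invariance principle is correct. However, there is a genuine gap: you apply the invariance principle directly to the unattenuated functions $f_{S,v},f_{S,w}$ and then try to control the error term $2^{O(k)}\Gamma\tau$ by thresholding influences. For a general $\pm 1$-valued function the total (block) influence $\sum_{i\in[L]}\Inf_i[f_{S,v}]$ can be as large as $L$ (e.g.\ a full parity), so $\Gamma$ can be of order $\sqrt{L}$, and likewise there is no bound on the number of super-columns whose influence exceeds any fixed threshold; your ``large-influence $\Rightarrow$ candidate label'' decoding therefore has no small candidate set to draw from. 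Moreover, the damping you invoke is misplaced: the influences appearing in Theorem~\ref{thm:inv-prin} are properties of $F$ and $G$ with respect to their (uniform) marginals and are not reduced by the perturbation in Step~\ref{step:3} of $\calT_2$; the test noise cannot shrink $\tau$ or $\Gamma$ as your proposal assumes.

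What is missing is precisely the intermediate smoothing step that the paper performs before invoking the invariance principle: replace $f_{S,v},f_{S,w}$ by $T_{1-\gamma}f_{S,v},T_{1-\gamma}f_{S,w}$ with $\gamma=s^{(c_0+1)/4}\epsilon^{1/c_0}$, which forces $\sum_i\Inf_i[T_{1-\gamma}f]\le 1/\gamma$ and hence $\Gamma\le\gamma^{-1/2}$, and then bound the attenuation error by $O(ks^{c_0/8})$. That error bound is where the test noise and the label-cover smoothness actually enter: one shows (via the Efron--Stein decomposition and the correlation bound $\rho\le\sqrt{1-\epsilon}$ coming from Step~\ref{step:3}) that $\|G_\alpha\|^2\le(1-\epsilon)^{|\alpha|}\sum_{\beta:\widetilde\pi_{uw}(\beta)=\alpha}(1-(1-\gamma)^{2|\beta|})\widehat f_w(\beta)^2$, and then splits into three regimes ($|\alpha|$ large; $|\beta|$ small; $|\beta|$ large but $|\widetilde\pi_{uw}(\beta)|$ small, handled by Property~2 of Theorem~\ref{thm:lc-hard}); this is the source of the $O(ks^{c_0/8})$ term in the lemma, which your outline attributes instead to the decoding side. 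Finally, after attenuation the paper does not threshold at all: it bounds $\E_{u,v,w}\tau_{u,v,w}\le s/\gamma^2$ directly by a randomized decoding in which $\alpha$ is sampled with probability $(1-\gamma)^{|\alpha|}\widehat f_v(\alpha)^2$, and the stated exponents $s^{(1-3c_0)/8}/\epsilon^{3/(2c_0)}$ come from $2^{O(k)}\sqrt{(s/\gamma^2)/\gamma}$ with the above choice of $\gamma$. Without the attenuation step (or an equivalent degree-truncation argument, which would itself require redoing the noise/smoothness analysis, since Theorem~\ref{thm:inv-prin} is stated for full, not degree-truncated, influences), the proposed proof does not go through.
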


\begin{proof}
	Notice that for a fixed $u$, the distribution of $X$ and $Y$
	have identical marginals. Hence the value of the above
	expectation, if calculated according to a distribution
	% which   %% ToC edit1 -->
	that   %% <--
	is the direct
	product of the marginals, is positive. We will first show that
	the expectation can change by at most $O(ks^{c_0/8})$ in moving to
	an \emph{attenuated} version of the functions 
	(see \expref{Claim}{claim:atten}). Then we will show that	 the
	error incurred by changing the distribution to the product
	distribution of the marginals has absolute value at most
	$\displaystyle{2^{O(k)}\frac{s^{(1-3c_0)/8}}{\epsilon^{3/2c_0}}}$ (see \expref{Claim}{claim:inv-app}). This is done
	by showing that there is a labeling to $G$ that satisfies an
	$s$ fraction of the constraints if the error is more than
	$\displaystyle{2^{O(k)}\frac{s^{(1-3c_0)/8}}{\epsilon^{3/2c_0}}}$. 
	
	For the rest of the analysis, we write $f_v$ and $f_w$ instead
	of $f_{S,v}$ and $f_{S,w}$, respectively. Let $f_v
	=\sum_{\alpha \subseteq [2R]} \widehat f_v(\alpha)\chi_\alpha$
	be the Fourier decomposition of the function and for $\gamma
	\in (0,1)$, let $T_{1-\gamma} f_v := \sum_{\alpha \subseteq
		[2R]} (1-\gamma)^{|\alpha|}\widehat f_v(\alpha)\chi_\alpha$.
	The following claim is similar to a lemma of Dinur and Kol~\cite[Lemma 4.11]{DinurK2013}. 
	The only difference in the proof is that, we use the \emph{smoothness} from Property 2 of 
	% Theorem \lref{thm:lc-hard}  %% ToC edit3 -->
	\expref{Theorem}{thm:lc-hard} 
	(which was shown by H\aa stad  \cite[Lemma 6.9]{Hastad2001}).
	\begin{claim}
		
		\label{claim:atten}Let $\gamma := s^{(c_0+1)/4}\epsilon^{1/c_0}$ where $c_0$
		is the constant from \expref{Theorem}{thm:lc-hard}.
		$$\left| \E_{u,v,w}\E_{\mathcal T_2} \left[\underbrace{\prod_{i=1}^k f_v(x_i)f_w(y_i
			)}_{\Delta_0}\right]  - \E_{u,v,w}\E_{\mathcal T_2} \left[\underbrace{\prod_{i=1}^k T_{1-\gamma}
			f_v(x_i) T_{1-\gamma}f_w(y_i)}_{\Delta_1}\right] \right|\leq O(ks^{c_0/8}).$$
		
	\end{claim}

	\begin{proof} The claim bounds the change in the expectation when we change the expression $\Delta_0$ to $\Delta_1$. The expression $\Delta_0$ is a product of $2k$ functions and $\Delta_1$ is the product of the same functions after applying the $T_{1-\gamma}$ operator to each of these functions. We prove the claim by bounding the error with $O(s^{c_0/8})$ when we add an extra $T_{1-\gamma}$ operator each time. Thus, the total error will be $O(ks^{c_0/8})$ by doing the telescoping sum and using the triangle inequality.
		
		For notational convenience, we bound the error when we add the first $T_{1-\gamma}$. The effect of adding all the remaining subsequent $T_{1-\gamma}$ operators can be analyzed in a similar way.
		\begin{equation}
			\label{eqn:add-noise}
			\left|\E_{u,v,w}\E_{\mathcal T_2} \left[\prod_{i=1}^k f_v(x_i)f_w(y_i)\right]  - \E_{u,v,w}\E_{\mathcal T_2} \left[\left(\prod_{i=1}^{k-1} f_v(x_i) f_w(y_i)\right) f_v(x_k) T_{1-\gamma}f_w(y_k) \right] \right|\leq O(s^{c_0/8}).
		\end{equation}
		Recall that $X,Y$ denote the matrices chosen by test $\calT_2$. Let
		$Y_{-k}$ be the matrix obtained from $Y$ by removing the $k$-th row
		and $F^{u,v,w}(X,Y_{-k}):= \left(\prod_{i = 1}^{k-1} f_v(x_i)
		f_w(y_i)\right) f_v(x_k)$. Then, 
		% \eqref{eqn:add-noise}   ToC edit3 -->
		\expeqref{Eq.}{eqn:add-noise} 
		can be rewritten as
		\begin{equation}
			\label{eqn:add-noise2}\left|\E_{u,v,w}\E_{\mathcal T_2}
			\left[F^{u,v,w}(X,Y_{-k})\left(I - T_{1-\gamma}
			\right)f_w(y_k)\right]  \right|\leq O(s^{c_0/8}).
		\end{equation}
		Let $U$ be the operator that maps functions on the variable $y_k$, to one on the variables
		$(X,Y_{-k})$ defined by
		$$(Uf)(X,Y_{-k}) := \E_{y_k | X,Y_{-k}} f(y_k).$$
		Let
		$G^{u,v,w}(X,Y_{-k}) :=
		\left(U(I-T_{1-\gamma})f_w\right)(X,Y_{-k})$.
		Note that $\E_{(X,Y)\sim \calT_2}\ G^{u,v,w}(X,Y_{-k}) = 0$. This is because $\E_{(X,Y)\sim \calT_2}\ G^{u,v,w}(X,Y_{-k}) = \E_{y_k \sim \{0,1\}^{2L}} ((I-T_{1-\gamma})f_w)(y_k) = \widehat{((I-T_{1-\gamma})f_w)}(\emptyset)$, where the marginal distribution on $y_k$ is uniform in $\{0,1\}^{2L}$.
		Finally, by construction,  %% ToC edit1 <-- comma added after "Finally"
		% since $f_w$ is an odd function,
		%% ToC edit1 moved the phrase above to the end of sentence to avoid
		%% very bad line break caused by the formula -->
		$\E_{(X,Y)\sim \calT_2}\ G^{u,v,w}(X,Y_{-k}) = 0$ follows,
		since $f_w$ is an odd function.  %% <-- moved here  AU CHECK1
		The domain
		of $G^{u,v,w}$ can be thought of as $(\{0,1\}^{2k-1})^{2dL}$ and the test
		distribution on any row is independent across the blocks
		$\{B_{uv}(i)\cup B'_{uv}(i)\}_{i \in [L]}$. We now think of $G^{u,v,w}$ as
		having domain $\prod_{i \in [L]} \Omega_i$ where
		$\Omega_i= (\{0,1\}^{2k-1})^{2d}$ corresponds to the set of rows in
		$B_{uv}(i)\cup B'_{uv}(i)$. Let the following be the Efron--Stein
		decomposition of $G^{u,v,w}$ with respect to $\calT_2$,
		$$G^{u,v,w}(X,Y_{-k}) = \sum_{\alpha \subseteq [L]} G^{u,v,w}_\alpha(X,Y_{-k}).$$
		The following technical claim follows from a result similar to \cite[Lemma 4.7]{DinurK2013} and then using \cite[Proposition 2.12]{Mossel2010}. We defer its proof to \expref{Section}{apx:tech-claim}. Here we use the role of the random variable $c_2$ in $\mathcal T_2$, which helps to break the perfect correlation between one row and rest of the rows restricted to the columns $B_{uv}(i)\cup B'_{uv}(i)$ for all $i\in [L]$.
		
		\begin{claim}
			\label{claim:np_normbound}
			For $\alpha \subseteq[L]$
			\begin{equation}\label{eqn:tech-claim}
				\left\| G^{u,v,w}_\alpha \right\|^2 \leq (1-\epsilon)^{|\alpha|}\sum_{\beta \subseteq [2R]: \widetilde{\pi}_{uw}(\beta) = \alpha }\left(1-(1-\gamma)^{2|\beta|}\right)\widehat f_w(\beta)^2
			\end{equation}
			where   $\widetilde{\pi}_{uw}(\beta) := \{ i \in
			[L]: \exists j \in [R], (j \in \beta \vee j+R \in \beta)
			\wedge \pi_{uv}(j)=i\}$. 
			
		\end{claim}
		Substituting the Efron--Stein decomposition of $G^{u,v,w},F^{u,v,w}$ into the LHS of 
		% \eqref{eqn:add-noise2} %% ToC edit3 -->
		\expeqref{Eq.}{eqn:add-noise}  %% <--
		gives 
		\begin{align*}
			\label{eqn:diff}
			\left|\E_{u,v,w}\E_{\mathcal T_2}
			\left[F^{u,v,w}(X,Y_{-k})\left(I - T_{1-\gamma}
			\right)f_w(y_k)\right]  \right| &= \left|\E_{u,v,w}   \E_{\calT_2}   F^{u,v,w}(X,Y_{-k})  G^{u,v,w}(X,Y_{-k}) \right|\\
			%% ToC edit1 next line: (By --> (by
			\substack{\text{(by orthonormality of}\\\text{ Efron--Stein decomposition)}}~~~&=  \left| \E_{u,v,w}\sum_{\alpha \subseteq [L]} \E_{\calT_2}    F^{u,v,w}_\alpha (X,Y_{-k}) G^{u,v,w}_\alpha(X,Y_{-k}) \right|\\
			%% ToC edit1 next line: (By --> (by
			\text{(by Cauchy--Schwarz inequality)}~~~&\leq  \E_{u,v,w}\sqrt{\sum_{\alpha \subseteq [L]}   \|F^{u,v,w}_\alpha\|^2} \cdot \sqrt{\sum_{\alpha \subseteq [L]}  \|G^{u,v,w}_\alpha\|^2}  \\
			\text{(Using $\sum_{\alpha \subseteq [L]}\| F^{u,v,w}_\alpha \|^2 = \|F^{u,v,w}\|_2^2 =1$)}~~~&\leq  \E_{u,w}\sqrt{\sum_{\alpha \subseteq [L]}  \|G^{u,v,w}_\alpha\|^2} .
		\end{align*}
		Using concavity of square root and substituting for $\|G^{u,v,w}_\alpha\|^2$ from \expeqref{Equation}{eqn:tech-claim}, we get that the above is
		% upper bounded by  %% ToC edit5 -->
		not greater than  %% <--
		\begin{align*}
			\sqrt{  \E_{ u,w} \sum_{\alpha \subseteq [L]}\sum_{\substack{\beta \subseteq [2R]:\\ \widetilde{\pi}_{uw}(\beta) = \alpha }} \underbrace{(1-\epsilon)^{|\alpha|}\left(1-(1-\gamma)^{2|\beta|}\right)\widehat f_w(\beta)^2}_{=:\term_{u,w}(\alpha,\beta)  } }.
		\end{align*}
		
		We will now break the above summation into three different parts and bound each part separately.
		\begin{align*}
			\Theta_0 &:= \E_{u,w} \sum_{\substack{\alpha,\beta : |\alpha|\geq \frac{1}{\epsilon s^{c_0/4}} }} \term_{u,w}(\alpha,\beta), & 
			\Theta_1 := \E_{u,w} \sum_{\substack{\alpha,\beta : |\alpha|< \frac{1}{\epsilon s^{c_0/4}}\\ |\beta| \leq \frac{2}{s^{1/4}{\epsilon}^{1/c_0} }}} \term_{u,w}(\alpha,\beta),\\
			\Theta_2 &:= \E_{u,w} \sum_{\substack{\alpha,\beta : |\alpha|< \frac{1}{\epsilon s^{c_0/4}}\\ |\beta| > \frac{2}{s^{1/4}{\epsilon}^{1/c_0} }}} \term_{u,w}(\alpha,\beta).&
		\end{align*}
		
		\paragraph{Upper bound for $\Theta_0$.} 
		When $\displaystyle{|\alpha| > \frac{1}{\epsilon s^{c_0/4}}}$, $(1-\epsilon)^{|\alpha|} < s^{c_0/4}$.
		Also since $f_w$ is $\{+1,-1\}$ valued, sum of squares of Fourier
		coefficient is $1$. Hence $|\Theta_0|< s^{c_0/4}$.
		
		\paragraph{Upper bound for $\Theta_1$.} 
		When $\displaystyle{|\beta| \leq \frac{2}{s^{1/4}{\epsilon}^{1/c_0}}}$, 
		$$1-(1-\gamma)^{2|\beta|}\leq 1-\left(1-\frac{4}{s^{1/4}{\epsilon}^{1/c_0}}\gamma\right) = \frac{4}{s^{1/4}{\epsilon}^{1/c_0}}\gamma = 4s^{c_0/4}.$$
		Again since the sum of squares of Fourier coefficients is $1$, $|\Theta_1| \leq 4s^{c_0/4}$.
		
		\paragraph{Upper bound for $\Theta_2$.}

		From Property 2 of \expref{Theorem}{thm:lc-hard}, we have that for any 
		$v\in V$ and $\beta$ with $\displaystyle{|\beta|> \frac{2}{s^{1/4}{\epsilon}^{1/c_0}}}$, the probability 
		that $|\widetilde{\pi}_{uv}(\beta)| < 1/\epsilon s^{c_0/4}$, for a random neighbor $u$, is at most 			$\epsilon s^{c_0/4}$. Hence $|\Theta_2|\leq s^{c_0/4}$.
		
	\end{proof}

	Fix $u,v,w$ chosen by the
	test. Recall that we thought of
	$f_v$ as having domain $\prod_{i \in [L]} \Omega_i$ where
	$\Omega_i= \{0,1\}^{2d}$ corresponds to the set of coordinates
	in $B_{uv}(i)\cup B'_{uv}(i)$. Since the grouping of
	coordinates depends on $u$, we define $\overline{\Inf}^u_i[f_v]:= \Inf_{i}[f_v] $ where $i\in [L]$ for explicitness. From 
	% Equation \eqref{eqn:fourier-efron}, %% ToC edit3 -->
	\expeqref{Equation}{eqn:fourier-efron}, 
	$$\overline{\Inf}^u_i[f_v] =
	\sum_{\alpha \subseteq [2dL]: i \in \widetilde{\pi}_{uv}(\alpha) }
	\widehat f_v (\alpha)^2,$$ where $\widetilde{\pi}_{uv}(\alpha) := \{ i \in
	[L]: \exists j \in [R], (j \in \alpha \vee j+R \in \alpha)
	\wedge \pi_{uv}(j)=i\}$.
	
	\begin{claim}
		
		\label{claim:inv-app}
		Let $\tau_{u,v,w}:= \sum_{i \in [L]} \overline{\Inf}^u_{i}[T_{1-\gamma}
		f_v]\cdot \overline{\Inf}^u_{i}[T_{1-\gamma}f_w]$.
		\begin{align*}
			\E_{u,v,w}\left|\E_{\mathcal T_2} \left[\prod_{i=1 }^k T_{1-\gamma}f_v
			(x_i) T_{1-\gamma}f_w(y_i)\right] - \E_{\mathcal T_2} \left[ \prod_{i=1}^k T_{1-\gamma}f_v(x_i) \right] \E_{\mathcal T_2} \left[ 				\prod_{i=1}^k T_{1-\gamma}f_w(y_i) \right] \right| \\
			\leq 2^{O(k)}\sqrt{\frac{\E_{u,v,w}\tau_{u,v,w}}{\gamma}}.
		\end{align*}
	\end{claim}
	
	\begin{proof}
		It is easy to check that $\sum_{i \in[L]}
		\overline{\Inf}^u_{i}[T_{1-\gamma}f_v] \leq 1/\gamma$
		(c.f., \cite[Lemma~1.13]{Wenner2013}).
		For any $u,v,w$, since the test distribution satisfies the conditions of \expref{Theorem}{thm:inv-prin}, we get 
		$$	\left|\E_{\mathcal T_2} \left[\prod_{i=1}^k T_{1-\gamma}f_v
		(x_i) T_{1-\gamma}f_w(y_i)\right] - \E_{\mathcal T_2} \left[ \prod_{i=1}^k T_{1-\gamma}f_v(x_i) \right] \E_{\mathcal T_2} \left[ \prod_{i=1}^k
		T_{1-\gamma}f_w(y_i) \right] \right| \leq 2^{O(k)}\sqrt{\frac{\tau_{u,v,w}}{\gamma}}
		.$$
		The claim follows by taking expectation over $u,v,w$ and using the concavity of square root.
	\end{proof}
	
	% From \expref{Claim}{claim:inv-app} and \expref{Claim}{claim:atten} 
	%% ToC edit3 -->
	{}From \expref{Claims}{claim:inv-app} \expref{and}{claim:atten} 
	and using the fact the the marginals of the test distribution 
	$\calT_2$ on $(x_1,\ldots, 
	x_k)$ is the same as marginals on $(y_1,\ldots, y_k)$, for $\gamma := s^{(c_0+1)/4}\epsilon^{1/c_0}$, we get
	{\small
		\begin{equation}
			\label{eq:inv_finalbound}
			\E_{u,v,w}~\E_{ X,Y \in \mathcal T_2} ~\left[\prod_{i=1}^k f_v(x_i )f_w(y_i) \right] 
			\geq - O(ks^{c_0/8})  -2^{O(k)}\sqrt{\frac{\E_{u,v,w}\tau_{u,v,w}}{\gamma}} + \E_{u} 
			\left( \E_v\E_{\mathcal T_2} \left[ \prod_{i=1}^k T_{1-\gamma}f_v(x_i) 
			\right]\right)^2.
		\end{equation}
	}
	
	If $\tau_{u,v,w}$ in expectation is large, there is a standard way of decoding the assignments to a
	labeling to the label cover instance, as shown in \expref{Claim}{claim:decode}.
	
	\begin{claim}
		\label{claim:decode}
		If $G$ is an at most $s$-satisfiable instance of {\LC} then
		$$\E_{u,v,w} \tau_{u,v,w} \leq \frac{s}{\gamma^2}.$$ 
	\end{claim}
	
	\begin{proof}
		Note that $\sum_{\alpha \subseteq [2R]}(1-
		\gamma)^{|\alpha|}\widehat{f}_v(\alpha)^2 \leq 1$. We
		will give a randomized labeling to the \LC\ instance.
		%Also we will be considering $i \in [L],i+L$ and
		%$j\in[R], j+R$ as corresponding to the same
		%label\phnote{What does this mean? Do we use this in
		% the rest of the proof?}.  
		For each $v\in V$, choose a random $\alpha \subseteq [2R]$
		with probability $(1-
		\gamma)^{|\alpha|}\widehat{f}_v(\alpha)^2$ and assign
		a uniformly random label $j$ in $\alpha$ to $v$; if
		the label $j\geq R$, change the label to $j-R$ and
		with the remaining probability assign an arbitrary
		label. For $u \in U$, choose a random neighbor $w \in
		V$ and a random $\beta \subseteq [2R]$ with
		probability $(1-\gamma)^{|\beta|}
		\widehat{f}_w(\beta)^2$, choose a random label $\ell$
		in $\beta$ and assign the label $\widetilde{\pi}_{uw}(\ell)$ to
		$u$. With the remaining probability, assign an
		arbitrary label. The fraction of edges satisfied by
		this labeling is at least 
		$$ \E_{u,v,w} \sum_{i \in
			[L]}~~ \sum_{\substack{(\alpha,\beta): i \in
				\widetilde{\pi}_{uv}(\alpha), i \in \widetilde{\pi}_{uw}(\beta) }}
		\frac{(1-\gamma)^{|\alpha|+|\beta|}}{|\alpha|\cdot
			|\beta|} \widehat f_v(\alpha)^2 \widehat
		f_w(\beta)^2.$$ 
		Using the fact that $1/r \geq
		\gamma(1-\gamma)^{r}$ for every $r> 0$ and $\gamma \in
		[0,1]$, we lower bound $\nicefrac{1}{|\alpha|}$ and
		$\nicefrac{1}{|\beta|}$ by $\gamma(1-\gamma)^{|\alpha|}$
		and $\gamma(1-\gamma)^{|\beta|}$, respectively. The
		above is then
		% lower bounded by %% ToC edit5 -->
		not less than   %% <--
		$$\gamma^2 \E_{u,v,w}
		\sum_{i \in [L]} \left(\sum_{\substack{\alpha: i \in
				\widetilde{\pi}_{uv}(\alpha) }} (1-\gamma)^{2|\alpha|} \widehat
		f_v(\alpha)^2\right)\left( \sum_{\substack{\beta: i
				\in \widetilde{\pi}_{uw}(\beta) }} (1-\gamma)^{2|\beta|} \widehat
		f_w(\beta)^2 \right)= \gamma^2 \E_{u,v,w}
		\tau_{u,v,w}.$$ Since $G$ is at most $s$-satisfiable, the
		labeling can satisfy at most
		an  %% ToC edit5 <-- inserted
		$s$ fraction of constraints and the
		% above equation is upper bounded by $s$.  %% ToC edit5 you can bound a
		%%   quantity, not an equation -->
		right-hand side of the above equation is at most $s$. %% <-- AU CHECK5
	\end{proof}
	\expref{Lemma}{lemma:soundness_nphard} follows from the above claim 
	and \expeqref{Equation}{eq:inv_finalbound}.
\end{proof}

\begin{proof}[Proof of {\expref{Theorem}{thm:np-hard}}.]
	Using \expref{Theorem}{thm:lc-hard}, the size of the CSP instance $\calG$ produced 
	by the reduction is $N = n^{r}2^{2^{O(r)}}$ and the parameter $s\leq 2^{-d_0 r}$
	. Setting $r = \Theta(\log \log n)$, gives that $N=2^{\poly \log n}$ for a
	constant $k$. \expref{Lemma}{lemma:soundness_nphard} and 
	\expeqref{Equation}{eq:nphard_ari} imply that
	$$ O(ks^{c_0/8}) +2^{O(k)}\frac{s^{(1-3c_0)/8}}{\epsilon^{3/2c_0}}\geq 
	\frac{1}{2^t -1}.$$
	Since $k$ is a constant, this gives that $t = \Omega(\log \log n)$.
	
	For every constant
	% $C\geq 1$,  %% ToC edit1 result undefined for C \le 2 -->
	$C > 2$,   %% <--  AU CHECK1
	by choosing $r$ a large enough constant, we get the hardness result assuming $\mathrm{P}$ $\neq$ $\mathrm{NP}$.
\end{proof}

\subsection{Proof of \texorpdfstring{\expref{Claim}{claim:np_normbound}}{Claim 4.4}}\label{apx:tech-claim}

% !TeX root = covering-full 2.tex

We will be reusing the notation introduced in the long code test $\calT_2$. We denote the $k \times 2d$ dimensional matrix $X|_{B(i)\cup B'(i)}$ by $X^i$ and $Y|_{B(i)\cup B'(i)}$ by $Y^i$. Also by $X^i_j$, we mean the $j$-th row of the matrix $X^i$ and $Y^i_{-k}$ is the first $k-1$ rows of $Y^i$. The spaces of the random variables $X^i,X^i_j,Y^i_{-k}$ will be denoted by $\calX^i,\calX^i_j, \calY^i_{-k}$.

Before we proceed to the proof of claim, we need a few definitions and
lemmas related to correlated spaces defined by Mossel~\cite{Mossel2010}.
\begin{definition}
	\label{def:correlation}
	Let $(\Omega_1 \times \Omega_2, \mu)$ be a finite correlated space, the correlation between $\Omega_1$ and $\Omega_2$ with respect to $\mu$ us defined as 
	$$\rho(\Omega_1, \Omega_2; \mu) := \mathop{\max}_{\substack{f : \Omega_1 \rightarrow \R, \E[f]  = 0 , \E[f^2]\leq 1 \\ g: \Omega_2 \rightarrow \R, \E[g] = 0 , \E[g^2]\leq 1} }  \E_{(x,y) \sim \mu }[ |f(x)g(y)|] .$$
\end{definition}

\begin{definition}[Markov Operator]
	\label{def:markovop}
	Let $(\Omega_1 \times \Omega_2, \mu)$ be a finite correlated space, the Markov operator, associated with this space, denoted by $U$, maps a function $g : \Omega_2 \rightarrow \R$ to functions $Ug : \Omega_1 \rightarrow \R$ by the following map:
	$$ (Ug) (x) := \E_{(X,Y)\sim \mu}[g(Y) \mid X=x ].$$
\end{definition}

\noindent The following results (due to Mossel~\cite{Mossel2010}) provide a way
% to upper bound correlation of a correlated spaces. %% ToC edit5 -->
to give an upper bound on the correlation of correlated spaces.  %% <--
\begin{lemma}[{\cite[Lemma~2.8]{Mossel2010}}]
	\label{lemma:mossel_corr}
	Let $(\Omega_1\times \Omega_2, \mu)$ be a finite correlated space. Let $g : \Omega_2 \rightarrow \R$ be such that $\E_{(x,y)\sim \mu}[g(y)] = 0$ and $\E_{(x,y)\sim \mu}[g(y)^2]\leq 1$. Then, among all functions $f : \Omega_1\rightarrow \R$ that satisfy $\E_{(x,y)\sim \mu}[f(x)^2]\leq 1$, the maximum value of $|\E[f(x)g(y)]|$ is given as:
	$$\left|\E[f(x)g(y)]\right| = \sqrt{\E_{(x,y)\sim \mu}[(Ug(x))^2]} .$$
\end{lemma}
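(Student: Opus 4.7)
The statement to prove is an identity about the operator norm of the Markov operator $U$ associated with the correlated space. Concretely, for fixed $g:\Omega_2\to\mathbb R$ with $\E[g]=0$ and $\E[g^2]\le 1$, one must show that maximizing $|\E_{(x,y)\sim\mu}[f(x)g(y)]|$ over $f:\Omega_1\to\mathbb R$ with $\E[f^2]\le 1$ yields exactly $\sqrt{\E[(Ug)(x)^2]}$. My plan is to recognize this as a one-line consequence of Cauchy--Schwarz in $L^2(\Omega_1,\mu_1)$, where $\mu_1$ is the marginal of $\mu$ on $\Omega_1$, once the quantity $\E_{(x,y)\sim\mu}[f(x)g(y)]$ is rewritten using the definition of $U$.

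The first step is to rewrite the bilinear form. Since $f$ depends only on $x$, conditioning on $x$ and applying the tower property gives
\[
\E_{(x,y)\sim\mu}[f(x)g(y)] \;=\; \E_{x\sim\mu_1}\bigl[f(x)\,\E_{(X,Y)\sim\mu}[g(Y)\mid X=x]\bigr] \;=\; \E_{x\sim\mu_1}\bigl[f(x)\,(Ug)(x)\bigr],
\]
using Definition~\ref{def:markovop}. Note also that $\E_{(x,y)\sim\mu}[f(x)^2] = \E_{x\sim\mu_1}[f(x)^2]$, so the constraint $\E[f^2]\le 1$ is simply $\|f\|_{2}\le 1$ in $L^2(\Omega_1,\mu_1)$.

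The second step is the upper bound. By the Cauchy--Schwarz inequality on $L^2(\Omega_1,\mu_1)$,
\[
\bigl|\E_{x\sim\mu_1}[f(x)(Ug)(x)]\bigr| \;\le\; \|f\|_2\cdot\|Ug\|_2 \;\le\; \sqrt{\E_{x\sim\mu_1}[(Ug)(x)^2]},
\]
which already establishes one direction.

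The third and final step is to exhibit an $f$ attaining equality. If $\|Ug\|_2=0$ the identity is trivial (both sides vanish). Otherwise set $f^\star(x) := (Ug)(x)/\|Ug\|_2$; then $\|f^\star\|_2=1$, so $f^\star$ is admissible, and
\[
\E_{(x,y)\sim\mu}[f^\star(x)g(y)] \;=\; \E_{x\sim\mu_1}\!\left[\tfrac{(Ug)(x)^2}{\|Ug\|_2}\right] \;=\; \|Ug\|_2 \;=\; \sqrt{\E_{x\sim\mu_1}[(Ug)(x)^2]}.
\]
This matches the Cauchy--Schwarz upper bound, proving the claimed identity. There is no genuine obstacle here: the content of the lemma is simply the observation that the dual pairing of $f$ against $g$ in the correlated space factors through the conditional expectation $Ug$, after which the result is a textbook Cauchy--Schwarz extremization. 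The hypotheses $\E[g]=0$ and $\E[g^2]\le 1$ are not needed for the identity itself; they are included so the maximum coincides with the quantity appearing in the definition of correlation (Definition~\ref{def:correlation}).
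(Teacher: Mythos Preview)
Your proof is correct. The paper does not actually supply its own proof of this lemma; it is quoted verbatim as \cite[Lemma~2.8]{Mossel2010} and used as a black box in the proof of \lref[Claim]{claim:corr_np_def}. Your argument --- rewrite $\E[f(x)g(y)]$ as $\langle f, Ug\rangle_{L^2(\Omega_1,\mu_1)}$ via the tower property, apply Cauchy--Schwarz, and exhibit the extremizer $f^\star = Ug/\|Ug\|_2$ --- is exactly the standard proof (and indeed the one Mossel gives), so there is nothing to compare.
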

\begin{proposition}[{\cite[Proposition~2.11]{Mossel2010}}]
	\label{prop:mossel_prop211}
	Let  $(\prod_{i=1}^n\Omega_i^{(1)} \times \prod_{i=1}^n\Omega_i^{(2)}, \prod_{i=1}^n\mu_i)$ be a product correlated space. Let $g :\prod_{i=1}^n\Omega_i^{(2)} \rightarrow \mathbb{R}$ be a function and $U$ be the Markov operator
	mapping functions % form %% ToC edit1 -->
	from the %% <--
	space $\prod_{i=1}^n\Omega_i^{(2)}$ to % the %% ToC edit1 <-- deleted "the"
	functions on space $\prod_{i=1}^n\Omega_i^{(1)}$. If $g = \sum_{S\subseteq [n]}g_S$ and $Ug = \sum_{S\subseteq [n]} (Ug)_S$ be the Efron--Stein
	decompositions of $g$ and $Ug$, respectively, then,
	$$ (Ug)_S = U(g_S)$$
	\ie, the Efron--Stein decomposition commutes with Markov operators.
\end{proposition}

\begin{proposition}[{\cite[Proposition~2.12]{Mossel2010}}]
	\label{prop:mossel_prop212}
	Assume the setting of \expref{Proposition}{prop:mossel_prop211} and
	furthermore assume that $\rho(\Omega_i^{(1)}, \Omega_i^{(2)}; \mu_i)
	\leq \rho$ for all $i\in [n]$. Then for all $g$ it holds that
	$$ \|U(g_S) \|_2 \leq \rho^{|S|}\|g_S\|_2.$$
\end{proposition}

\noindent We will prove the following claim.

\begin{claim}
	\label{claim:corr_np_def}
	For each $i \in [L]$, 
	$$ \rho\left(\calX^i\times\calY^i_{-k} , \calY_k^i ; \calT_2^i\right) \leq \sqrt{1-\epsilon}. $$
\end{claim}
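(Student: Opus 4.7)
The plan is to apply \lref[Lemma]{lemma:mossel_corr}, which reduces the task to bounding
$$\sup_{g}\sqrt{\E_{\calT_2^i}\bigl[(Ug)^2\bigr]}$$
over functions $g:\calY^i_k\to\R$ with $\E[g]=0$ and $\E[g^2]\le 1$, where $U$ is the Markov operator $(Ug)(x,y_{-k}):=\E_{\calT_2^i}\bigl[g(Y^i_k)\mid (X^i,Y^i_{-k})=(x,y_{-k})\bigr]$. The ultimate goal is to prove
$$\E_{\calT_2^i}\bigl[(Ug)^2\bigr] \le (1-\epsilon) \E_{\calT_2^i}[g^2],$$
from which $\rho\le \sqrt{1-\epsilon}$ follows.

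The central observation is that the test distribution $\calT_2^i$ carries explicit noise via the coin $c_2\in\{*,0,1\}$ drawn in Step~\ref{step:3}: with probability $\epsilon$ (when $c_2=1$), the submatrix $Y^i|_{B_{uw}(i)}$ is resampled uniformly at random from $\{0,1\}^{k\times d}$, independently of all other randomness, and symmetrically when $c_2=0$. In particular, conditioned on $c_2=1$ the block $Y^i_k|_{B_{uw}(i)}$ is uniform and independent of $(X^i,Y^i_{-k})$, and analogously conditioned on $c_2=0$ for the $B'$-half. This gives the convex decomposition
$$\calT_2^i = (1-2\epsilon)\,\calT_2^{i,*} + \epsilon\,\calT_2^{i,0} + \epsilon\,\calT_2^{i,1},$$
where $\calT_2^{i,0},\calT_2^{i,1}$ exhibit the advertised conditional independence.

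The next step is to exploit this mixture to analyze $(Ug)^2$. Using the hypothesis that $\calP_0,\calP_1$ have uniform marginals on every coordinate, the marginal of $Y^i_k$ under $\calT_2^i$ is the uniform distribution on $\{0,1\}^{2d}$. Expand $g$ in the Fourier basis of $\{0,1\}^{2d}=\{0,1\}^{B}\times\{0,1\}^{B'}$ as $g=\sum_{S_1\subseteq B,S_2\subseteq B'}\widehat{g}(S_1,S_2)\chi_{S_1}\chi_{S_2}$. Under $\calT_2^{i,1}$, the uniform refresh of $Y^i_k|_B$ annihilates every Fourier component with $S_1\neq\emptyset$ in the conditional expectation, and symmetrically $\calT_2^{i,0}$ kills the terms with $S_2\neq\emptyset$. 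Combined with the observation that the constant component $\widehat{g}(\emptyset,\emptyset)=0$ for mean-zero $g$, this makes $\calT_2^{i,0}$ and $\calT_2^{i,1}$ strictly contractive on every nonzero Fourier mode of $g$.

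The main obstacle is that a noise event does not fully refresh $Y^i_k$: under $\calT_2^{i,1}$ only the $B$-half is refreshed, while the $B'$-half remains correlated with $(X^i,Y^i_{-k})$, and vice-versa for $\calT_2^{i,0}$. So a direct one-line application of a "complete refresh" lemma is unavailable. To circumvent this, I would split $g=g_B+g_{B'}+g_{\text{mix}}$ into the three Fourier pieces supported on $\{S_2=\emptyset,S_1\neq\emptyset\}$, $\{S_1=\emptyset,S_2\neq\emptyset\}$, and $\{S_1,S_2\neq\emptyset\}$ respectively. Pythagoras (with respect to the uniform marginal) gives $\|g\|_2^2 = \|g_B\|_2^2+\|g_{B'}\|_2^2+\|g_{\text{mix}}\|_2^2$. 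One then verifies:
\begin{itemize}
\item $\E_{\calT_2^{i,1}}[g_B\mid A]=0$ (since $Y^i_k|_B$ is refreshed and $g_B$ has mean zero in that variable), and likewise $\E_{\calT_2^{i,0}}[g_{B'}\mid A]=0$;
\item $\E_{\calT_2^{i,0}}[g_{\text{mix}}\mid A]=\E_{\calT_2^{i,1}}[g_{\text{mix}}\mid A]=0$ (Fourier characters with nontrivial $B$- and $B'$-support vanish in either refreshed component).
\end{itemize}
Consequently $Ug=(1-2\epsilon)U_* g+\epsilon U_0 g_B+\epsilon U_1 g_{B'}$, and Cauchy--Schwarz together with $\|U_* h\|_2\le\|h\|_2,\|U_0 h\|_2\le\|h\|_2,\|U_1 h\|_2\le\|h\|_2$ (all Markov operators are contractive) yields $\|Ug\|_2^2\le(1-\epsilon)\|g\|_2^2$. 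The harder bookkeeping step is verifying the precise coefficient $(1-\epsilon)$ in this contraction; this is what the hypothesis that $\calP_0,\calP_1$ have uniform single-coordinate marginals is for, since it guarantees that the marginals on $Y^i_k$ match across the three mixture components and the Fourier decomposition is globally orthogonal.
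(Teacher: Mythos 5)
Your plan follows essentially the same route as the paper's proof: reduce via \lref[Lemma]{lemma:mossel_corr} to bounding $\E[(Ug)^2]$, split according to the noise coin $c_2$ from \lref[Step]{step:3}, and use Fourier analysis over the uniform marginal of $Y^i_k$ (which is where the uniform-marginal hypothesis on $\calP_0,\calP_1$ enters) so that the two refresh branches kill the characters meeting $B$ and $B'$ respectively, with the mean-zero condition handling the constant; your split $g=g_B+g_{B'}+g_{\mathrm{mix}}$ is just a repackaging of the paper's computation that the $c_2=0$ and $c_2=1$ branches contribute $\sum_{\alpha\subseteq B}\widehat g(\alpha)^2$ and $\sum_{\alpha\subseteq B'}\widehat g(\alpha)^2$, whose sum is at most $1$ by Parseval since $\widehat g(\emptyset)=0$. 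One step is stated inaccurately: the identity $Ug=(1-2\epsilon)U_*g+\epsilon U_0 g_B+\epsilon U_1 g_{B'}$ with the \emph{prior} weights is false, because $(X^i,Y^i_{-k})$ is not independent of $c_2$ (e.g.\ the column parities of $X^i|_{B(i)}$ reveal whether that half was resampled), so the mixture weights in the conditional expectation are data-dependent posteriors. The repair is routine: write $Ug(a)=\sum_c\Pr[c_2=c\mid a]\,U_cg(a)$ pointwise, apply Jensen and the tower property to get $\E[(Ug)^2]\le(1-2\epsilon)\E[(U_*g)^2\mid c_2=*]+\epsilon\E[(U_0g_B)^2\mid c_2=0]+\epsilon\E[(U_1g_{B'})^2\mid c_2=1]$, and then your contractivity and orthogonality estimates give $(1-2\epsilon)+\epsilon\|g_B\|_2^2+\epsilon\|g_{B'}\|_2^2\le 1-\epsilon$ exactly as intended (the paper's own write-up silently makes the same move by conditioning the inner expectation on $c_2$). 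With that correction your argument is complete and coincides in substance with the paper's.
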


\noindent Before proving this claim, first let's see how it leads to the proof of \expref{Claim}{claim:np_normbound}.

\begin{proof}[Proof of {\expref{Claim}{claim:np_normbound}}]
	\expref{Proposition}{prop:mossel_prop211} shows that the  Markov operator $U$ commutes with taking the Efron--Stein decomposition. Hence, $G^{u,v,w}_\alpha := (U((I - T_{1-\gamma})f_w))_\alpha =U((I - T_{1-\gamma})(f_w)_\alpha),$ where $(f_w)_\alpha$ is the Efron--Stein decomposition of $f_w$ w.r.t. the marginal distribution of $\calT_2$ on $\prod_{i=1}^L\calY_{k}^{i}$, %% ToC edit1 <-- comma added
	which is a uniform distribution. Therefore, $(f_w)_\alpha = \sum_{\substack{\beta \subseteq [2R],\\ \widetilde{\pi}_{uw}(\beta) = \alpha }}\hat{f_w}(\beta)\chi_\beta$.
	Using \expref{Proposition}{prop:mossel_prop212} and \expref{Claim}{claim:corr_np_def}, we have
	\begin{align*}
		\|G^{u,v,w}_\alpha\|_2^2 = \|U ((I - T_{1-\gamma})(f_w)_\alpha)\|_2^2 &\leq (\sqrt{1-\epsilon})^{2|\alpha|}\|(I - T_{1-\gamma})(f_w)_\alpha\|_2^2\\
		& = (1-\epsilon)^{|\alpha|} \sum_{\beta \subseteq [2R] : \widetilde{\pi}_{uw}(\beta) = \alpha} \left(1-(1-\gamma)^{2|\beta|}\right)\hat{f}_w(\beta)^2,
	\end{align*}
	where the norms are with respect to the marginals of $\calT_2$ in the corresponding spaces.
\end{proof}

\begin{proof}[Proof of {\expref{Claim}{claim:corr_np_def}}]
	Recall the random variable $c_2 \in \{*,0,1\}$ defined in
	\expref{Step}{step:3} of test $\calT_2$ . Let $g$ and $f$ be the
	functions that satisfies $\E[g] = \E[f]=0$ and $\E[g^2], \E[f^2] \leq
	1$ such that $\rho\left(\calX^i\times\calY^i_{-k} , \calY_k^i ;
	\calT_2^i\right) = \E[|fg|]$. Define the \emph{Markov Operator}
	$$ Ug(X^i, Y^i_{-k}) = \E_{(\tilde{X}, \tilde{Y}) \sim \calT_2^i}[ g(\tilde{Y}_k) \mid { (\tilde{X}, \tilde{Y}_{-k})} = (X^i, Y^i_{-k}) ]. $$
	By \expref{Lemma}{lemma:mossel_corr}, we have
	\begin{align*}
		\rho\left(\calX^i\times\calY^i_{-k} , \calY_k^i ; \calT_2^i\right)^2 &\leq  \E_{\calT_2^i} [Ug(X^i, Y^i_{-k})^2]\\
		& \hspace{-25pt}= (1-2\epsilon)\E_{\calT_2^i} [Ug(X^i, Y^i_{-k})^2  \mid  c_2 = *] + \epsilon\E_{\calT_2^i} [Ug(X^i, Y^i_{-k})^2  \mid  c_2 = 0] + \\
		&\hspace{5pt}\epsilon\E_{\calT_2^i} [Ug(X^i, Y^i_{-k})^2  \mid  c_2 = 1]\\
		& \hspace{-25pt}\leq (1-2\epsilon) + \epsilon\E_{\calT_2^i} [Ug(X^i, Y^i_{-k})^2  \mid  c_2 = 0] + \epsilon\E_{\calT_2^i} [Ug(X^i, Y^i_{-k})^2  \mid  c_2 = 1],
	\end{align*}
	where the last inequality uses the fact that $\E_{\calT_2^i} [Ug(X^i, Y^i_{-k})^2  \mid  c_2 = *] = \E[g^2]$,  %% ToC edit1 <-- comma added
	which is at most $1$. Consider the case when $c_2=0$. By definition, we have
	\begin{align*}
		\hspace{-8pt}\E_{\calT_2^i} [Ug(X^i, Y^i_{-k})^2  \mid  c_2 = 0] &= \hspace{-5pt}\E_{\left(\substack{X^i, \\Y^i_{-k}}\right) \sim \calT_2^i} \left(\E_{ (\tilde{X}, \tilde{Y}) \sim \calT_2^i}[ g(\tilde{Y}_k) \mid {(\tilde{X}, \tilde{Y}_{-k})} = (X^i, Y^i_{-k}) \wedge c_2 = 0]\hspace{-3pt}\right)^2\hspace{-4pt}.
	\end{align*}
	Under the conditioning, for any fixed value of $X^i, Y^i_{-k}$, the value of $\tilde{Y}_k|_{B'(i)}$ is a uniformly random string whereas $\tilde{Y}_k|_{B(i)}$ is a fixed string (since the $parity$ of all columns in $B(i)$ is $1$).  
	Let $\calU$ be the uniform distribution on $\{-1,+1\}^d$ and $\mathcal{P}(X^i, Y^i_{-k}) \in \{ +1,-1\}^{d}$ denotes the column wise parities of 
	$\left[\substack{X^i|_{B(i)}\\ Y^i_{-k}|_{B(i)}}\right]$.

	\begin{align*}
		\hspace{-10pt}\E_{\calT_2^i} [Ug(X^i, Y^i_{-k})^2  \mid  c_2 = 0] &=\E_{X^i, Y^i_{-k} \sim \calT_2^i}\left(\E_{ (\tilde{X}, \tilde{Y}) \sim \calT_2^i}\left[ g(\tilde{Y}_k) \mid \substack{{(\tilde{X}, \tilde{Y}_{-k})} = (X^i, Y^i_{-k}) \wedge\\ c_2 = 0}\right] \right)^2\\
		&= \E_{\substack{X^i, Y^i_{-k} \sim \calT_2^i, \\z=\mathcal{P}(X^i, Y^i_{-k})}}\left(\E_{ r \sim \calU}[ g(-z,r) ]\right)^2\\
		&= \E_{z \sim \calU}\left(\E_{ r \sim \calU}[ g(z,r) ]\right)^2 ~~~ \text{(since marginal on $z$ is uniform)}\\  %% ToC edit1 <-- (Since --> (since
		&=  \E_{z \sim \calU} \left(\E_{r\in \calU} \sum_{\alpha\subseteq B(i) \cup B'(i)} \hat{g}(\alpha) \chi_\alpha(z, r)\right)^2\\
		&=  \E_{z \sim \calU}\left(\sum_{\alpha\subseteq B(i) \cup B'(i)} \hat{g}(\alpha) \E_{r\in \calU}[\chi_\alpha(z, r)]\right)^2\\
		%% \end{align*}  %% ToC edit1 why are we ending align here? %% AU edit2 (AB)
		%% <-- (explanation) we wanted to avoid a page break in the earlier version.
		%% \begin{align*}  %% <--
		&=  \E_{z \sim \calU}\left(\sum_{\alpha\subseteq B(i)}\hat{g}(\alpha) \chi_\alpha(z)\right)^2 %% \\ %% ToC edit1 no need for separate line here -->
		% &=  %% ToC edit1 -->
		\qquad\qquad = \quad  %% <--
		\sum_{\alpha\subseteq B(i)}\hat{g}(\alpha)^2. 
	\end{align*}
	Similarly we have,
	\begin{align*}
		\E_{\calT_2^i} [Ug(X^i, Y^i_{-k})^2  \mid  c_2 = 1]  &= \sum_{\alpha\subseteq B'(i)}\hat{g}(\alpha)^2.
	\end{align*}
	Now we can bound the correlation as follows. %% : ToC edit3 <--(colon-->period)
	\begin{align*}
		\rho\left(\calX^i\times\calY^i_{-k} , \calY_k^i ; \calT_2^i\right)^2 \leq &(1-2\epsilon) + \epsilon\sum_{\alpha\subseteq B(i)}\hat{g}(\alpha)^2 + \epsilon\sum_{ \alpha\subseteq B'(i)}\hat{g}(\alpha)^2 \\
		\leq& (1-2\epsilon) + \epsilon\sum_{\alpha\subseteq B(i)\cup B'(i)}\hat{g}(\alpha)^2 ~~~\text{(Using $\hat{g}(\phi) = \E[g]=0$)}\\
		\leq& (1-\epsilon). ~~~~~~~~~~~\text{(Using $\E[g^2]\leq 1$ and Parseval's Identity)}
	\end{align*}
\end{proof}

%%4LIN Covering Hardness

\section{Improvement to covering hardness of \texorpdfstring{$\fourlin$}.}\label{sec:4lin}

In this section, we prove \expref{Theorem}{thm:4lin}. We give a
reduction from an instance of {\LC}, $G=(U,V,E,[L],
[R],\{\pi_e\}_{e\in E})$ as in \expref{Definition}{def:label-cover}, to
a $\fourlin$-{\csp} instance $\mathcal G = (\mathcal V, \mathcal
E)$. The set of variables $\mathcal V$ is $V\times \{0,1\}^{2R}$. Any
assignment to $\mathcal V$ is given by a set of functions $f_v:\{0,
1\}^{2R} \rightarrow \{0,1\}$, for each $v\in V$.  The set of
constraints $\mathcal E$ is given by the following
test,  %% ToC edit1 <-- comma added
which checks
whether $f_v$'s are long codes of a good labeling to $V$.

\noindent
\paragraph{Long Code test $\calT_3$}

\begin{enumerate}
	
	\item Choose $u\in U$ uniformly and neighbors $v,w \in V$ of
	$u$ uniformly and independently at random.
	
	\item Choose $x,x',z,z'$ uniformly and independently from
	$\{0,1\}^{2R}$ and $y$ from $\{0,1\}^{2L}$. Choose
	$(\eta,\eta') \in \{0,1\}^{2L} \times \{0,1 \}^{2L}$ as
	follows.  %% ToC edit3 <--(colon-->period)
	Independently for each $i \in [L]$,
	%  $(\eta_i,\eta_{L+i}, \eta'_i,\eta'_{L+i})$ is set to %% ToC edit3 -->
	set $(\eta_i,\eta_{L+i}, \eta'_i,\eta'_{L+i})$ to    %% <--
	\begin{enumerate}
		\item $(0,0,0,0)$ with probability $1-2\epsilon$,
		\item $(1,0,1,0)$ with probability $\epsilon$ and
		\item $(0,1,0,1)$ with probability $\epsilon$.
	\end{enumerate}
	
	\item For $y \in \{0,1\}^{2L}$, let $y \circ \pi_{uv} \in
	\{0,1\}^{2R}$ be the string such that $(y \circ \pi_{uv})_i
	:= y_{\pi_{uv}(i)}$ for $i \in [R]$ and $(y \circ
	\pi_{uv})_i := y_{\pi_{uv}(i-R)+L}$ otherwise. Given
	$\eta\in \{0,1\}^{2L}, z \in \{0,1\}^{2R}$, the string
	$\eta\circ \pi_{uv} \cdot z \in \{0,1\}^{2R}$ is 
	obtained by taking coordinate-wise product of $\eta \circ
	\pi_{uv}$ and $z$. Accept
	iff
	\begin{equation}
		\label{eqn:test3}
		f_v(x)+f_v(x+y\circ \pi_{uv}+\eta\circ \pi_{uv} \cdot z)+f_w(x')+f_w(x'+y\circ\pi_{uw}
		+\eta'\circ \pi_{uw}\cdot z' + \overline{1})  = 1\pmod 2.
	\end{equation}
	(Here by addition of strings, we mean the
	coordinate-wise sum modulo 2.)
	
\end{enumerate}

\begin{lemma}[Completeness]
	
	If $G$ is an YES instance of {\LC}, then there exists $f,g$
	such that each of them covers $1-\epsilon$ fraction of
	$\mathcal E$ and they together cover all of $\mathcal E$.
	
\end{lemma}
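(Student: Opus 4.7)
The plan is to exhibit the two assignments explicitly from a satisfying labeling of $G$ via the natural long-code encoding. Given a labeling $\ell$ of $G$ that satisfies every edge, so that $\pi_{uv}(\ell(v)) = \ell(u)$ for every $(u,v)\in E$, I would define $f_v(x) := x_{\ell(v)}$ and $g_v(x) := x_{\ell(v)+R}$ for every $v\in V$ and every $x\in\{0,1\}^{2R}$. The task is then to show that each of $f,g$ covers at least a $1-\epsilon$ fraction of constraints individually, and to check that their failure events are mutually exclusive.

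For the individual analysis, I would substitute $f$ into the acceptance equation \eqref{eqn:test3}. The two $x$-dependent occurrences of $f_v$ (and likewise of $f_w$) cancel modulo $2$, and using $\pi_{uv}(\ell(v)) = \pi_{uw}(\ell(w)) = \ell(u)$ the left hand side collapses to
\[
y_{\ell(u)} + \eta_{\ell(u)}\cdot z_{\ell(v)} + y_{\ell(u)} + \eta'_{\ell(u)}\cdot z'_{\ell(w)} + 1 \equiv \eta_{\ell(u)}\cdot z_{\ell(v)} + \eta'_{\ell(u)}\cdot z'_{\ell(w)} + 1 \pmod 2.
\]
Acceptance therefore requires $\eta_{\ell(u)}z_{\ell(v)} + \eta'_{\ell(u)}z'_{\ell(w)} \equiv 0$. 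A case analysis on $(\eta_i,\eta_{L+i},\eta'_i,\eta'_{L+i})$ at $i=\ell(u)$ finishes this: in configurations (a) $(0,0,0,0)$ and (c) $(0,1,0,1)$ both $\eta_{\ell(u)}$ and $\eta'_{\ell(u)}$ vanish, so the test accepts; in configuration (b) $(1,0,1,0)$ both equal $1$, and acceptance requires $z_{\ell(v)}=z'_{\ell(w)}$, which holds with probability $\tfrac12$ over the uniform $z,z'$. Adding the three contributions gives an acceptance probability of at least $1-\epsilon/2 \ge 1-\epsilon$.

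The analysis for $g$ is structurally identical, but since $(\eta\circ\pi_{uv})_{\ell(v)+R} = \eta_{L+\ell(u)}$, the relevant coordinates become $\eta_{L+\ell(u)}$ and $\eta'_{L+\ell(u)}$ in place of $\eta_{\ell(u)}$ and $\eta'_{\ell(u)}$, and the relevant $z$-coordinates become $z_{\ell(v)+R}$ and $z'_{\ell(w)+R}$. These coordinates vanish in configurations (a) and (b) and equal $1$ only in (c), so the same $1-\epsilon/2$ lower bound on acceptance probability follows.

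Finally, for the joint covering claim, I would note that $f$ can fail only in configuration (b) (and only when the $z$-bits disagree), whereas $g$ can fail only in configuration (c). Since exactly one of the three configurations is sampled per coordinate, the failure events of $f$ and $g$ at $i=\ell(u)$ are disjoint, so on every constraint at least one of $f,g$ accepts; this is precisely what is needed for $\{f,g\}$ to be a 2-cover of $\calG$. The entire argument is essentially bookkeeping once the long-code structure is laid out, and I do not foresee a substantive obstacle.
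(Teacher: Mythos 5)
Your proposal is correct and follows essentially the same route as the paper: plug in the dictator assignments $f_v(x)=x_{\ell(v)}$, $g_v(x)=x_{R+\ell(v)}$, observe that the test expression collapses to $\eta_{\ell(u)}z_{\ell(v)}+\eta'_{\ell(u)}z'_{\ell(w)}+1$ (resp.\ its shifted analogue for $g$), and note that the perturbation patterns $(1,0,1,0)$ and $(0,1,0,1)$ never occur simultaneously, so the failure events of $f$ and $g$ are disjoint. Your extra half-step (acceptance probability $1-\epsilon/2$ via the $z_{\ell(v)}=z'_{\ell(w)}$ coin) is a slight sharpening of the paper's $1-\epsilon$ bound but changes nothing substantive.
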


\begin{proof}
	Let $\ell:U\cup V \rightarrow [L] \cup [R]$ be a labeling to
	$G$ that satisfies all the constraints. Consider the
	assignments given by $f_v(x) := x_{\ell(v)} $ and $g_v(x):=
	x_{R+\ell(v)}$ for each $v \in V$. On input $f_v$, for any
	pair of edges $(u,v),(u,w) \in E$, and $x,x',z,z',\eta ,\eta',y$ chosen by the
	long code test $\mathcal T_3$, the LHS in \expeqref{Eq.}{eqn:test3} evaluates to
	$$x_{\ell(v)} + x_{\ell(v)}+y_{\ell(u)} +
	\eta_{\ell(u)}z_{\ell(v)} + x'_{\ell(w)}
	+x'_{\ell(w)}+y_{\ell(u)} + \eta'_{\ell(u)}z'_{\ell(w)}+1 = \eta_{\ell(u)}z_{\ell(v)}+ \eta'_{\ell(u)}z'_{\ell(w)}+1.$$
	Similarly for $g_v$, the expression evaluates to $\eta_{L+\ell(u)}z_{R+\ell(v)} +
	\eta'_{L+\ell(u)} z'_{R+\ell(w)}+1. $ Since
	$(\eta_{i},\eta'_{i})=(0,0)$ with probability $1-\epsilon$,
	each of $f,g$ covers $1-\epsilon$ fraction of $\calE$. Also
	for $i \in [L]$ whenever $(\eta_{i},\eta'_{i})=(1,1)$,
	$(\eta_{L+i},\eta'_{L+i})=(0,0)$ and vice versa. So one of the
	two evaluations above is $1 \pmod 2$. Hence the pair of
	assignments $f,g$ cover
	$\calE$.
\end{proof}

\begin{lemma}[Soundness]
	
	\label{lemma:4lin_soundness}
	Let $c_0$ be the constant from \expref{Theorem}{thm:lc-hard}. If
	$G$  is at most $s$-satisfiable
	with  $s < \nicefrac{\delta^{10/c_0 + 5}}{4}$, then any
	independent set in $\cal G$ has fractional size at most $\delta$.
\end{lemma}

\begin{proof}
	Let $I \subseteq \calV$ be an independent set of fractional
	size $\delta$ in the constraint graph $\calG$. For every
	variable $v \in V$, let $f_v : \{0, 1 \}^{2R} \rightarrow
	\{0,1\}$ be the indicator function of the independent set
	restricted to the vertices that correspond to $v$.  Since
	$I$ is an independent set, we have
	\begin{equation}
		\label{eqn:4lin-1}
		\E_{u, v,w}~ \E_{\substack{x,x',\\z,z',\\\eta,\eta',y}}\left[ f_v(x)f_v(x+y\circ \pi_{uv}+\eta
		\circ \pi_{uv} \cdot z)f_w(x')f_w(x'+y\circ\pi_{uw}+\eta'\circ \pi_{uw}\cdot z' + 1) 
		\right] = 0.
	\end{equation}
	For $\alpha \subseteq [2R]$, let $\pi^{\oplus}_{uv}(\alpha) \subseteq [2L]$ be the set
	containing elements $i\in [2L]$ such that if $i <L$ there are
	an odd number of $j\in [R]\cap \alpha$ with $\pi_{uv}(j) = i$
	and if $i \geq L$ there are an odd number of $j \in
	([2R]\setminus [R])\cap \alpha$ with $\pi_{uv}(j-R) = i-L$ .
	It is easy to see that $\chi_\alpha(y\circ \pi_{uw}) =
	\chi_{\pi^\oplus_{uv}(\alpha)} (y)$. Expanding $f_v$ in the Fourier
	basis and taking expectation over $x,x'$ and $y$, we get that
	\begin{equation}
		\E_{u, v,w}\sum_{\alpha,\beta \subseteq [2R]: \pi^\oplus_{uv}(\alpha)=\pi^\oplus_{uw}(\beta)}
		\widehat{f}_v(\alpha)^2\widehat{f}_w(\beta)^2(-1)^{|\beta|} \E_{z,z',\eta,
			\eta'}\left[ \chi_\alpha(\eta\circ \pi_{uv} \cdot z)\chi_\beta(\eta'\circ \pi_{uw}
		\cdot z')\right]=0.
	\end{equation}
	Now the expectation over $z,z'$ simplifies as
	\begin{equation}
		\label{eqn:4lin-2}
		\E_{u, v,w}\sum_{\alpha,\beta \subseteq [2R]: \pi^\oplus_{uv}(\alpha)=\pi^\oplus_{uw}(\beta)}
		\underbrace{\widehat{f}_v(\alpha)^2\widehat{f}_w(\beta)^2(-1)^{|\beta|} \Pr_{\eta,\eta'}[ \alpha 
			\cdot(\eta\circ \pi_{uv}) = \beta\cdot  (\eta'\circ\pi_{uw}) = \bar 0]}_{=:\term_{u,v,w}(\alpha,\beta)}=0,
	\end{equation}
	where we think of $\alpha,\beta$ as the characteristic vectors
	in $\{0,1\}^ {2R}$ of the corresponding sets.  We will now
	break up the above summation into different parts and bound
	each part separately. For a projection $\pi:[R] \rightarrow
	[L]$, define $\widetilde{\pi}(\alpha) := \{ i \in [L]: \exists j 			\in [R],( j \in \alpha \vee j + R \in \alpha) \wedge (\pi(j) = i) \}$.
	%%  %% AU edit4 (AB)  -- start
	We divide the space of $(\alpha,\beta)$ into 4 sets as
	follows.
	\begin{align*}
		E_0 &:= \left\{(\alpha,\beta) \middle|
		\pi^\oplus_{uv}(\alpha)=
		\pi^\oplus_{uw}(\beta)=\emptyset \right\}\, ,\\
		E_1 &:= \left\{(\alpha,\beta) \middle|
		\pi^\oplus_{uv}(\alpha)=
		\pi^\oplus_{uw}(\beta)\neq \emptyset,
		\max\{|\alpha|,|\beta|\} \leq 2/\delta^{5/c_0}
		\right\}\, ,\\
		E_2 &:= \left\{(\alpha,\beta) \middle|
		\pi^\oplus_{uv}(\alpha)= \pi^\oplus_{uw}(\beta)
		\neq\emptyset,
		\max\{|\widetilde{\pi}_{uv}(\alpha)| ,
		|\widetilde{\pi}_{uw}(\beta)|\} \geq  1/\delta^5
		\right\}\, ,\\ 
		E_3 &:= \left\{(\alpha,\beta) \middle|
		\pi^\oplus_{uv}(\alpha)= \pi^\oplus_{uw}(\beta)
		\neq \emptyset, \max\{|\alpha| ,|\beta|\} >
		2/\delta^{5/c_0}, \max
		\{|\widetilde{\pi}_{uv}(\alpha)| ,
		|\widetilde{\pi}_{uw}(\beta)| \} < 1/\delta^5
		\right\}\, .
	\end{align*}
	And define the following quantities for $i \in \{0,1,2,3\}$.
	\[ \Theta_i := \sum_{(\alpha,\beta) \in E_i}
	\E_{u,v,w}\term_{u,v,w}(\alpha, \beta) \, .\]
	
	%% ToC edit3  could we align the summation signs and the Term expressions?
	%
	%%  %% AU edit4 (AB)  -- end
	%
	\paragraph{Lower bound for $\Theta_0$.} If
	$\pi^\oplus_{uw}(\beta) = \emptyset$, then $|\beta|$ is
	even. Hence, all the terms in $\Theta_0$ are positive and 
	$$\Theta_0 \geq  \E_{u,v,w}\term_{u,v,w}(0,0) = \E_u \left(\E_v \widehat 
	f_v(0)^2\right)^2 \geq \left(\E_{u,v} \widehat f_v(0)\right)^4
	= \delta^4.$$
	\paragraph{Upper bound for $\Theta_1$.}
	
	Consider the following strategy for labeling vertices $u\in U$
	and $v\in V$. For $u\in U$, pick a random neighbor $v$, choose
	$\alpha$ with probability $\widehat f_v(\alpha)^2$ and set its
	label to a random element in $\widetilde{\pi}_{uv}( \alpha)$. For $w\in V$,
	choose $\beta$ with probability $\widehat f_w(\beta) ^2$ and
	set its label to a random element of $\beta$. If the label
	$j\geq R$, change the label to $j-R$. The probability that a
	random edge $(u,w)$ of the label cover is satisfied by this
	labeling is
	\begin{align*}
		\label{eqn:fourier-decoc}
		\E_{u,v,w}  \sum_{\substack{\alpha,\beta :\\ \widetilde{\pi}_{uv}(\alpha)\cap \widetilde{\pi}_{uw}(\beta) 
				\neq \emptyset }} \widehat f_v(\alpha)^2 \widehat f_w(\beta)^2 \frac{1}{|\widetilde{\pi}_{uv}
			(\alpha)|\cdot |\beta|} &\geq \E_{u,v,w} \sum_{\substack{\alpha,\beta:\\ \pi
				^\oplus_{uv}(\alpha)= \pi^\oplus_{uw}(\beta) \neq \emptyset\\ \max\{|\alpha|,|\beta|\} 
				\leq 2/\delta^{5/c_0} }} \hspace{-25pt} \widehat f_v(\alpha)^2 \widehat f_w(\beta)^2 
		\frac{\delta^{10/c_0}}{4} \\ &\geq |\Theta_1|\cdot \frac{\delta^{10/c_0}}{4}.
	\end{align*}
	Since the instance is at most $s$-satisfiable, the above is
	% upper bounded by $s$. %% ToC edit5 -->
	not greater than $s$.   %% <--
	Choosing $s < \nicefrac{\delta^{10/c_0 + 5}}{4}$, will imply $|\Theta_1| \leq \delta^5$.
	
	\paragraph{Upper bound for $\Theta_2$.}
	
	Suppose $|\widetilde{\pi}_{uv}(\alpha)| \geq 1/\delta^5$, then note that 
	$$\Pr_{\eta,\eta'}[ \alpha \cdot(\eta\circ \pi_{uv}) = \beta\cdot  (\eta'\circ\pi_{uw}) = 0] \leq 
	\Pr_\eta[ \alpha \cdot(\eta\circ \pi_{uv}) = 0] \leq (1-\epsilon)^{|\widetilde{\pi}_{uv}(\alpha)|} \leq (1-\epsilon)^{1/\delta^5}.$$
	Since the sum of squares of Fourier coefficients  of $f$ is less than $1$ and
	$\epsilon$ is a constant, we get that $|\Theta_2| \leq 1/2^{\Omega(1/
		\delta^5)} < O(\delta^5)$.
	
	\paragraph{Upper bound for $\Theta_3$.}
	
	From the third property of \expref{Theorem}{thm:lc-hard}, we have that for any 
	$v\in V$ and $\alpha\subseteq [2R]$ with $|\alpha|> 2/\delta^{5/c_0}$, the probability 
	that $|\widetilde{\pi}_{uv}(\alpha)| < 1/\delta^{5}$, for a random neighbor $u$ of $v$, is at most $\delta^{5}$. Hence $|\Theta
	_3|\leq \delta^5$.
	
	\paragraph{}
	
	On substituting the above bounds in \expeqref{Equation}{eqn:4lin-2},
	we get that $\delta^4 -O(\delta^5) \leq 0$, %% ToC edit1 <-- comma added
	which gives a contradiction for small enough $\delta$. Hence
	there is no independent set in $\calG$ of size $\delta$.
\end{proof}

\begin{proof}[Proof of {\expref{Theorem}{thm:4lin}}]
	From \expref{Theorem}{thm:lc-hard}, the size of the CSP instance $\calG$ produced 
	by the reduction is $N = n^{r}2^{2^{O(r)}}$ and the parameter $s\leq 2^{-d_0 r}
	$. Setting $r = \Theta(\log \log n)$, gives that $N=2^{\poly \log n}$ and the 
	size of the largest independent set $\delta = 1/\poly\log n= 1/\poly\log
	N$. 
\end{proof}

%%Invariance Theorem

\section{Invariance Principle for correlated spaces}
\label{sec:inv-prin}

\noindent {\bf \expref{Theorem}{thm:inv-prin} (Invariance Principle for
	correlated spaces) [Restated]} \emph{Let $(\Omega_1^k \times \Omega_2^k, \mu)$ be a
	correlated probability space such that the marginal of $\mu$ on any
	pair of coordinates one each from $\Omega_1$ and $\Omega_2$ is a
	product distribution. Let $\mu_1 ,\mu_2$ be the marginals of $\mu$
	on $\Omega_1^k$ and $\Omega_2^k$, respectively. Let $X, Y$ be two
	random $k\times L$ dimensional matrices chosen as 
	% follows: independently %% ToC edit3 (colon-->period)-->
	follows.  Independently   %% <--
	for every $i \in [L]$, the pair of columns $(x^i,y^i)
	\in \Omega_1^k \times \Omega_2^k$ is chosen from $\mu$. Let
	$x_i,y_i$ denote the $i$-th rows of $X$ and $Y$, respectively.  If
	$F: \Omega_1^L \rightarrow [-1,+1]$ and $G: \Omega_2^L \rightarrow
	[-1,+1]$ are functions such that
	$$\tau:= \sqrt{\sum_{i \in [L]}\Inf_i[F]\cdot \Inf_i[G]}  ~\text{ and } ~
	\Gamma := \max \left\{ \sqrt{\sum_{i \in [L]}\Inf_i[F]} , \sqrt{\sum_{i \in 
			[L]}\Inf_i[G]} \right\} \ ,$$ then
	\begin{equation}
		\abs{ \E_{(X,Y) \in \mu^{\otimes L}} \left[\prod_{i\in [k]}F(x_i) G(y_i)
			\right] - \E_{X \in \mu_1^{\otimes L}} \left[\prod_{i\in [k]}F(x_i)\right]
			\E_{Y \in \mu_2^{\otimes L}} \left[\prod_{i\in [k]}G(y_i)\right] } \leq 
		2^{O(k)} \Gamma \tau.
	\end{equation}
}

\begin{proof}
	
	We will prove the theorem by using the hybrid argument. For $i \in [L+1]$, let 
	$X^{(i)},Y^{(i)}$ be distributed according to $(\mu_1 \otimes \mu_2)^{\otimes
		i} \otimes \mu^{\otimes L- i}$. Thus, $(X^{(0)},Y^{(0)}) =
	(X,Y)$ is distributed according to $\mu^{\otimes L}$ while
	$(X^{(L)},Y^{(L)})$ is distributed according to
	$(\mu_1\otimes\mu_2)^{\otimes L}$. For $i \in [L]$, define
	
	\begin{equation}
		\label{eqn:err}
		\err_i := \abs{ \E_{X^{(i)},Y^{(i)}} \left[\prod_{j=1}^kF(x^{(i)}_j) 
			G(y^{(i)}_j)\right] - \E_{X^{(i+1)},Y^{(i+1)}} \left[\prod_{j=1}^kF(x^
			{(i+1)}_j) G(y^{(i+1)}_j)\right] }.
	\end{equation}
	
	The left-hand side of \expeqref{Equation}{eqn:inv-eqn} is
	%  upper bounded by %% ToC edit5 --> 
	not greater than  %% <--
	$\sum_{i\in [L]} 
	\err_i$.  Now for a fixed $i$,  we will bound $\err_i$. We use the 
	Efron--Stein decomposition of $F,G$ to split them into two
	parts: the part
	% which  %% ToC edit1 -->
	that   %% <--
	depends on the 
	$i$-th input and the part independent of the $i$-th input. 
	$$F= F_0 + F_1 \text{ where } F_0 := \sum_{\alpha : i\notin \alpha} F_\alpha \mbox{ and } 
	F_1 := \sum_{\alpha : i\in \alpha} F_\alpha.$$
	$$G = G_0 + G_1 \text{ where } G_0 := \sum_{\beta : i\notin \beta} G_\beta  \mbox{ and } 
	G_1 := \sum_{\beta : i\in \beta} G_\beta.$$
	Note that $\Inf_i[F] = \|F_1\|^2_2$ and $\Inf_i[G] =
	\|G_1\|_2^2$. Furthermore, the functions $F_0$ and $F_1$ are
	bounded since $F_0(x) = \E_{x^{'}} [F(x^{'}) |
	x^{'}_{[L]\setminus i} = x_{[L]\setminus i} ] \in [-1,+1]$ and
	$F_1(x) = F(x) - F_0(x) \in [-2,+2]$.  For $a \in \{0,1\}^k$,
	let $F_a(X) := \prod_{j =1}^kF_{a_j}(x_j)$.  Similarly
	$G_0,G_1$ are bounded and $G_a$ defined
	analogously. Substituting these definitions in 
	\expeqref{Equation}{eqn:err} and expanding the products gives
	$$\err_i = \abs{ \sum_{a,b \in \{0,1\}^k}\left(  \E_{X^{(i)},Y^{(i)}} 
		\left[F_{a}(X^{(i)}) G_{b}(Y^{(i)})\right]  - \E_{X^{(i+1)},Y^{(i+1)}} \left[
		F_{a}(X^{(i+1)}) G_{b}(Y^{(i+1)})\right]  \right) }.$$
	Since both the distributions are identical on
	$(\Omega_1^k)^{\otimes L}$ and $(\Omega_2^k)^{\otimes L}$, all
	terms with $a = \bar 0$ or $b=\bar 0$ are zero. For instance when $a = \bar 0$, $F_a$ does not depend on the $i$-th coordinate. Therefore, in both the distributions $(X^{(i)},Y^{(i)})$ and $(X^{(i+1)},Y^{(i+1)})$, the $i$-th column of $X$ can be dropped. Now, the
	% distribution  %% ToC edit1 -->
	distributions  %% <--        
	of $Y^{(i)}$ and $Y^{(i+1)}$ are identical conditioned on the $X$ with the $i$-th column dropped. Thus, the expectation is $0$.
	
	Since $\mu$
	is uniform on any pair of coordinates on each from the
	$\Omega_1$ and $\Omega_2$ sides, terms with $|a|=|b|=1$ also
	evaluates to zero using a similar argument as above. Now consider the remaining terms with $ |a|,|b| \geq1,
	|a|+|b| > 2$. Consider one such term where $a_{1},a_2 = 1$ and $b_{1}
	=1$. In this case, by 
	the %% ToC edit1 <-- inserted
	Cauchy--Schwarz inequality we have that
	\begin{align*}
		\hspace{-13pt}\abs{ \E_{X^{(i-1)},Y^{(i-1)}} \left[F_a(X^{(i-1)}) G_{b}(Y^{(i-1)})\right]}
		&\leq \sqrt{\E F_1(x_1)^2 G_1(y_1)^2} \cdot \|F_1\|_2  \cdot \left\| 
		\prod_{j>2} F_{a_j}\right\|_\infty \hspace{-5pt}\cdot \left\| \prod_{j> 1} G_{b_j}\right
		\|_\infty\hspace{-10pt}.
	\end{align*}
	From the facts that the marginal of $\mu$ to any pair of
	coordinates one each from $\Omega_1$ and $\Omega_2$ sides are
	uniform, $\Inf_i[F] = \|F_1\|_2^2$ and
	$|F_0(x)|,|F_1(x)|,|G_0(x)|,|G_1(x)|$ are all bounded by $2$,
	the right side of above becomes
	\begin{align*}
		\sqrt{\E F_1(x_1)^2} \sqrt{\E G_1(y_1)^2} \cdot \|F_1\|_2  \cdot \left\|
		\prod_{j>2} F_{a_j}\right\|_\infty \cdot \left\| \prod_{j> 1} G_{b_j}
		\right\|_\infty \leq  \sqrt{\Inf_i[F]^2 \Inf_i[G]} \cdot 2^{2k} .
	\end{align*}
	All the other terms corresponding to other
	pairs  %% ToC edit1<-- inserted        
	$(a,b)$, which are  %% ToC edit1 <-- comma added
	at most $2^{2k}$ in number, are bounded analogously. Hence,
	\begin{align*}
		\sum_{i \in [L]} \err_i &\leq 2^{4k} \sum_{i \in [L]} \left( \sqrt{\Inf_i[F]
			^2\Inf_i[G]} +\sqrt{\Inf_i[F]\Inf_i[G]^2} \right)\\
		&= 2^{4k} \sum_{i \in [L]} \sqrt{\Inf_i[F]
			\Inf_i[G]}\left( \sqrt{\Inf_i[F]} +\sqrt{\Inf_i[G]} \right).
	\end{align*}
	% By %% ToC edit1 <-- "By" is not needed here
	Applying the Cauchy--Schwarz inequality, followed by a triangle inequality, we obtain	
	\begin{align*}
		\sum_{i \in [L]} \err_i&\leq 2^{4k} \sqrt{\sum_{i \in [L]} \Inf_i[F]\Inf_i[G]}\left(\sqrt{\sum_{i \in [L]}  \Inf_i[F]} + \sqrt{\sum_{i \in 
				[L]}  \Inf_i[G]} \right).
	\end{align*}
	% Thus, proved.  %% ToC edit1 -->
	This completes the proof.   %% <--
\end{proof}

%%  Bibliography
{\small
	\bibliographystyle{prahladhurl}
	\bibliography{covering-bib}
	%\bibliography{/Users/prahladh/Dropbox/Documents/academic/papers/jrnl-names-abb,/Users/prahladh/Dropbox/Documents/academic/papers/prahladhbib,/Users/prahladh/Dropbox/Documents/academic/papers/crossref} 
}

\appendix

\end{document}